\documentclass[a4paper,11pt]{amsart} 
\usepackage{amsmath,amsxtra,amssymb,latexsym, amscd,amsthm}
\usepackage[mathscr]{eucal}
\usepackage{mathrsfs}
\usepackage{bbm}
\usepackage{enumerate}
\usepackage{pict2e}
\usepackage{tikz}
\usepackage{graphicx}
\usepackage[a4paper]{geometry}
\usepackage{color}
\geometry{left=3cm,right=3cm,top=2.5cm}
\usepackage{hyperref}
\numberwithin{equation}{section}

%----------------------------------------------------------------------%
\theoremstyle{plain}
\newtheorem{thm}{Theorem}[section]

\newtheorem{prp}[thm]{Proposition}

\newtheorem{lem}[thm]{Lemma}
\newtheorem*{ego*}{Egorov's Theorem}
%----------------------------------------------------------------------%
\theoremstyle{definition}

\newtheorem{rem}[thm]{Remark}

\newtheorem{exa}[thm]{Example}
\newtheorem*{rem*}{Remark}

%----------------------------------------------------------------------%
%
%----------------------------------------------------------------------%
\newcommand{\dd}{\mathrm{d}}
\newcommand{\ee}{\mathrm{e}}
\newcommand{\ii}{\mathrm{i}}
%----------------------------------------------------------------------%

%----------------------------------------------------------------------%

%----------------------------------------------------------------------%

%----------------------------------------------------------------------%
\newcommand{\N}{\mathbb{N}}
\newcommand{\Z}{\mathbb{Z}}

\newcommand{\R}{\mathbb{R}}

\newcommand{\T}{\mathbb{T}}

\newcommand{\LL}{\mathbb{L}}

\newcommand{\cA}{\mathcal{A}}

\newcommand{\cH}{\mathcal{H}}

\newcommand{\fa}{\mathfrak{a}}
\newcommand{\fb}{\mathfrak{b}}

%----------------------------------------------------------------------%

%----------------------------------------------------------------------%

%----------------------------------------------------------------------%
\DeclareSymbolFont{extraup}{U}{zavm}{m}{n}
\DeclareMathSymbol{\varheart}{\mathalpha}{extraup}{86}
\DeclareMathSymbol{\vardiamond}{\mathalpha}{extraup}{87}
%----------------------------------------------------------------------%

%----------------------------------------------------------------------%
\title[Ergodic theorems for quantum walks]{Ergodic theorems for continuous-time quantum walks on crystal lattices and the torus}
\author{Anne Boutet de Monvel, Mostafa Sabri}
%----------------------------------------------------------------------%
\address{Institut de Math\'ematiques de Jussieu-Paris Rive Gauche, Universit\'e Paris-Cit\'e, 8 place Aur\'elie Nemours, case 7012, 75205 Paris Cedex 13, France.}
\email{anne.boutet-de-monvel@imj-prg.fr}
%\address{Department of Mathematics, Faculty of Science, Cairo University, Giza 12613, Egypt.}
\address{Science Division, New York University Abu Dhabi, Saadiyat Island, Abu Dhabi, UAE.}
\email{mostafa.sabri@nyu.edu}
%----------------------------------------------------------------------%
\subjclass[2020]{Primary 47A35. Secondary 58J51}
\keywords{Ergodic theorems, Schr\"odinger semigroups, crystal lattices, quantum walks.}
%----------------------------------------------------------------------%
%\date{October 8, 2023}
%----------------------------------------------------------------------%
\usepackage{calc}
\usepackage{graphicx}

\makeatletter
\newlength{\temp@wc@width}
\newlength{\temp@wc@height}
\newcommand{\widecheck}[1]{%
  \setlength{\temp@wc@width}{\widthof{$#1$}}%
  \setlength{\temp@wc@height}{\heightof{$#1$}}%
  #1\hspace{-\temp@wc@width}%
  \raisebox{\temp@wc@height+2pt}[\heightof{$\widehat{#1}$}]%
     {\rotatebox[origin=c]{180}{\vbox to 0pt{\hbox{$\widehat{\hphantom{#1}}$}}}}%
}
\makeatother

%----------------------------------------------------------------------%

\begin{document}

\begin{abstract}
We give several quantum dynamical analogs of the classical Kronecker-Weyl theorem, which says that the trajectory of free motion on the torus along almost every direction tends to equidistribute. As a quantum analog, we study the quantum walk $\exp(-\ii t \Delta) \psi$ starting from a localized initial state $\psi$. Then the flow will be ergodic if this evolved state becomes equidistributed as time goes on. We prove that this is indeed the case for evolutions on the flat torus, provided we start from a point mass, and we prove discrete analogs of this result for crystal lattices. On some periodic graphs, the mass spreads out non-uniformly, on others it stays localized. Finally, we give examples of quantum evolutions on the sphere which do not equidistribute.
\end{abstract}

\maketitle

\section{Introduction}

Classical ergodic theorems say that if $T$ is an ergodic transformation on some measure space $(\Omega,\mu)$, then averaging an observable $f$ over the trajectory under $T$ of a.e. point $x$ is the same as averaging the observable over the whole space:
\[
\lim_{n\to\infty}\frac{1}{n} \sum_{k=1}^n f(T^k x) = \frac{1}{\mu(\Omega)}\int_\Omega f(y)\,\dd\mu(y)
\]

Let us consider the case where the classical transformation is the geodesic flow. For the flat torus, the Kronecker-Weyl theorem says that for any $a\in C^0(\T^d)$ and $x_0\in\T^d$, if $y_0\in \R^d$ has rationally independent entries, then
\begin{equation}\label{e:KW}
\lim_{T\to\infty} \frac{1}{T}\int_0^T a(x_0+ty_0)\,\dd t = \int_{\T^d} a(x)\,\dd x\,.
\end{equation}

This means that the trajectory $\{x_0+ty_0\}_{t\ge 0}$ becomes uniformly distributed after large enough time, so that averaging a function over it is the same as the uniform average. 

In contrast, consider the standard Euclidean sphere $\mathbb{S}^2\subset \R^3$. This is a classical example in which the geodesic flow is not ergodic. A free particle moving with its kinetic energy simply travels along a great circle, its trajectory is very far from being dense in $\mathbb{S}^2$.

In this paper we are interested in giving quantum dynamical analogs of such results. Instead of starting from a point $x_0$ on the torus or sphere and integrating a test function over its trajectory $\phi_t(x_0)$, we will start from an initial state $\delta_{x_0}$ which is essentially a Dirac distribution at the point $x_0$, apply the evolution semigroup $\ee^{\ii t\Delta}\delta_{x_0}$ and check whether this state, which was highly localized at time zero, becomes equidistributed as time goes on. Our criterion for such an equidistribution is to compare $\int a(x) |(\ee^{\ii t\Delta}\delta_{x_0})(x)|^2\,\dd x$ with the uniform average $\int a(x)\,\dd x$ and show that they are close, for any test function $a(x)$. We will see that this is indeed the case for the flat torus, for the analogous discrete problem in $\Z^d$ and more generally for a large family of $\Z^d$-periodic lattices (which yield a more interesting mass profile), but untrue for the sphere.

The evolution semigroup $\ee^{-\ii tH}$ for a Hamiltonian $H$ is known as a \emph{continuous-time quantum walk} in the literature. The framework is usually to work on graphs such as $\Z^d$ in this context, but one can expect similarities in the continuum, which motivates the study of both cases in this paper. The terminology ``quantum walk'' is due to a quantum analogy with a random walk, which is more apparent in the case of \emph{discrete-time quantum walks}, the simplest example being a particle on $\Z$ walking (jumping a finite distance) under the action of a unitary operator $U$ at each time step $t=1,2,\dots$, and being in general in a superposition of states. We refer to \cite{ABNVW} for the basics and \cite{Por} for a systematic study. In contrast, in the continuous-time case, for $\ee^{-\ii t\cA_\Z}\delta_0$ we have a nonzero probability of being arbitrarily far from $0$ as soon as $t>0$, i.e. we can have $|(\ee^{-\ii t\cA_{\Z}}\delta_0)(n)|^2\neq 0$ with $n\gg 1$.

This paper is not the first work to give quantum analogs of ergodicity. This topic has first been explored by Shnirel'man, Colin de Verdi\`ere and Zelditch \cite{Shni,CdV,Zel87} and has since inspired research in many directions. The point of view of these \emph{quantum ergodicity} theorems is to show that in cases where the geodesic flow is ergodic, any orthonormal basis of eigenfunctions $(\psi_j)$ of the Laplace operator has a density one subsequence which becomes equidistributed in the high energy limit. More precisely, $|\psi_j(x)|^2\,\dd x$ approaches the uniform measure $\dd x$. Discrete analogs of this appeared for graphs. In this case, one considers instead a sequence of finite graphs $G_N$ converging in an appropriate sense (Benjamini-Schramm) to an infinite graph having a delocalized spectrum and shows an equidistribution property for the eigenfunctions of $G_N$, see e.g. \cite{ALM,A,AS,McKSa} and \cite{MTZ,NSSZ}. In the present work, our quantum interpretation is to follow instead how initially localized states (point masses) spread out under the action of the dynamics. This seems like a more direct translation of the classical picture. Our work in the continuum has relations with  \cite{Ma09,Maci,AR12,AM14,MaRi16} which we explain in \S~\ref{sec:ew}. A common difficulty in all the models we consider here is how to work with the high multiplicity of eigenvalues.

\subsection*{Notation}
As many articles in the quantum walks literature use the Dirac bra-ket notation, let us briefly explain the standard Hilbert space notation that we use in the paper. If $G$ is a graph, then $(\delta_v)_{v\in G}$ stands for the standard basis of $\ell^2(G)$ given by $\delta_v(u)=1$ if $u=v$ and zero otherwise. The scalar product is given by $\langle \phi,\psi\rangle = \sum_{v\in G} \overline{\phi(v)}\psi(v)$. Given a linear operator $A:\ell^2(G)\to \ell^2(G)$ we have in particular $\langle \phi, A\psi\rangle = \sum_{v\in G} \overline{\phi(v)}(A\psi)(v)$. If $a$ is a function $a(v)$, then $\langle \phi, a\phi \rangle = \sum_{v\in G} a(v) |\phi(v)|^2$.

In the bra-ket notation, $\{\delta_v\}_{v\in G}$ is replaced by $\{ |v\rangle: v\in G\}$. An operator $A$ acts on a vector $|\phi\rangle$ by $A|\phi\rangle$. The time evolution $\psi(t)=\ee^{-\ii tH} \psi$ is denoted by $|\psi(t)\rangle = U(t)|\psi\rangle$. Our $\langle \phi,\psi\rangle$ equals $\langle \phi|\psi\rangle$ and our $\langle \phi, A\psi\rangle$ equals $\langle \phi | A | \psi\rangle$. If $H$ is a Hamiltonian, expanding $\ee^{-\ii tH}\psi_0$ over an orthonormal eigenbasis $(\phi_j)$ of $H$ reads $\psi(t)=\ee^{-\ii tH} \psi_0 = \sum_j \langle \phi_j,\psi_0\rangle \ee^{-\ii t\lambda_j} \phi_j$ in our notation,  $|\psi(t)\rangle = \sum_j |\phi_j\rangle \ee^{-\ii t\lambda_j} \langle \phi_j|\psi_0\rangle$ in the bra-ket notation.

\medskip

We now discuss our results, first for the adjacency matrix on $\Z^d$, then more generally for periodic graphs in \S~\ref{sec:ingra}. We next move to the flat torus in \S~\ref{sec:intor}, then we conclude with the case of the sphere in \S~\ref{sec:insphe}.

\subsection{Case of graphs}\label{sec:ingra}
For transparency we begin with $\Z^d$. Consider a sequence of cubes $\Lambda_N = [\![0,N-1]\!]^d$, and let $A_N$ be the adjacency matrix on $\Lambda_N$ with periodic conditions. We denote the torus by $\T_\ast^d=[0,1)^d$. %For $v\in\Z^d$, $\delta_v(u)=1$ if $u=v$ and zero otherwise.

\begin{thm}\label{thm:zd}
For any $v\in \Lambda_N$, we have 
\begin{equation}\label{e:maindiszd}
\lim_{N\to\infty} \Big|\lim_{T\to\infty} \frac{1}{T}\int_0^T \langle \ee^{-\ii tA_N} \delta_v, a_N\ee^{-\ii tA_N} \delta_v\rangle\,\dd t - \langle a_N\rangle \Big| = 0\,,
\end{equation}
where $\langle a\rangle = \frac{1}{N^d}\sum_{u\in \Lambda_N} a(u)$, for the following class of observables $a_N$:
\begin{itemize}
\item $a_N(n) = f(n/N)$ for some $f\in H_s(\T_\ast^d)$ with $s>d/2$,
\item $a_N$ the restriction to $\Lambda_N$ of some $a\in \ell^1(\Z^d)$.
\end{itemize}
\end{thm}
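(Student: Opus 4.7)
The plan is to diagonalize $A_N$ using the discrete Fourier basis on $(\Z/N\Z)^d$. The plane waves $\phi_\theta(n) = N^{-d/2}\ee^{2\pi\ii\theta\cdot n}$, indexed by $\theta \in \Lambda_N^\ast := \{k/N : k\in\Lambda_N\}$, form an orthonormal eigenbasis of $A_N$ with eigenvalues $\lambda(\theta) = 2\sum_{j=1}^d\cos(2\pi\theta_j)$. Expanding $\delta_v$ and applying the standard time-averaging limit $\tfrac{1}{T}\int_0^T\ee^{\ii t\mu}\,\dd t \to \one_{\{\mu=0\}}$, the quantity in \eqref{e:maindiszd} collapses to $\sum_{\lambda(\theta)=\lambda(\theta')}\phi_\theta(v)\overline{\phi_{\theta'}(v)}\langle\phi_\theta, a_N\phi_{\theta'}\rangle$. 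The diagonal part ($\theta=\theta'$) is exactly $\langle a_N\rangle$, because $|\phi_\theta(v)|^2 = N^{-d}$ and $\langle\phi_\theta, a_N\phi_\theta\rangle = \langle a_N\rangle$, and there are $N^d$ modes. Everything thus reduces to proving that the off-diagonal remainder
\[
R_N(v) = \sum_{\substack{\theta\neq\theta'\\ \lambda(\theta)=\lambda(\theta')}}\phi_\theta(v)\overline{\phi_{\theta'}(v)}\langle\phi_\theta, a_N\phi_{\theta'}\rangle
\]
vanishes as $N\to\infty$, uniformly in $v\in\Lambda_N$.

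The matrix elements $\langle\phi_\theta, a_N\phi_{\theta'}\rangle$ are (aliased) discrete Fourier coefficients of $a_N$: setting $m = N(\theta-\theta')\bmod N$ and denoting $\widehat{a_N}(m) := N^{-d}\sum_{n\in\Lambda_N}a_N(n)\ee^{-2\pi\ii m\cdot n/N}$, a regrouping gives
\[
|R_N(v)| \leq N^{-d}\sum_{0\neq m\in\Lambda_N}D_N(m)\,|\widehat{a_N}(m)|, \qquad D_N(m) := \#\{k\in\Lambda_N : \lambda(k/N)=\lambda((k-m)/N)\}.
\]
The core estimate, and the step I expect to be the main obstacle, is the uniform level-set bound $D_N(m) \leq 2N^{d-1}$ for every $m\neq 0$. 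To prove it I would apply $\cos a - \cos b = -2\sin\tfrac{a-b}{2}\sin\tfrac{a+b}{2}$ coordinate-wise to transform the defining equation into $\sum_{j=1}^d\sin(\pi m_j/N)\sin(\pi(2k_j-m_j)/N)=0$. Since $m\neq 0$, some coefficient $\sin(\pi m_{j_0}/N)$ is nonzero, so for each of the $N^{d-1}$ choices of $(k_i)_{i\neq j_0}$ the equation reduces to $\sin(\pi(2k_{j_0}-m_{j_0})/N) = c$ for a constant $c$ depending only on the fixed coordinates, which admits at most two solutions in $k_{j_0}\in\{0,\dots,N-1\}$.

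With this degeneracy bound, the two cases conclude from separate Fourier inputs. For $a_N(n)=f(n/N)$ with $f\in H_s(\T_\ast^d)$, $s>d/2$, Poisson summation gives $\widehat{a_N}(m) = \sum_{l\in\Z^d}\widehat{f}(m+Nl)$, hence $\sum_{m\in\Lambda_N}|\widehat{a_N}(m)| \leq \sum_{M\in\Z^d}|\widehat{f}(M)|$, and the right-hand side is bounded by a constant times $\|f\|_{H_s}$ via Cauchy–Schwarz (the weight $(1+|M|^2)^{-s}$ becoming $\ell^1$ exactly when $s>d/2$); this yields $|R_N(v)|\lesssim N^{-1}\|f\|_{H_s}$. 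For $a_N = a|_{\Lambda_N}$ with $a\in\ell^1(\Z^d)$, the diagonal itself is already negligible since $|\langle a_N\rangle|\leq N^{-d}\|a\|_{\ell^1}$, while the trivial bound $|\widehat{a_N}(m)|\leq N^{-d}\|a\|_{\ell^1}$ combined with $\sum_{m\neq 0}D_N(m)\leq 2N^{2d-1}$ gives $|R_N(v)|\leq 2N^{-1}\|a\|_{\ell^1}$. In both cases the rate is $O(N^{-1})$, uniform in $v$.
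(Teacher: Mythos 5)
Your proof is correct and follows the same approach as the paper: Fourier diagonalization of $A_N$, time-averaging to isolate the degenerate pairs $\lambda(\theta)=\lambda(\theta')$, the level-set bound $D_N(m)\le 2N^{d-1}$ via the sine product identity, and $\ell^1$-summability of the (aliased) Fourier coefficients for both observable classes, giving the same $O(N^{-1})$ rate. Two small refinements you make — handling aliasing explicitly via Poisson summation, $\widehat{a_N}(m)=\sum_{l}\widehat{f}(m+Nl)$, in the Sobolev case, and deducing the degeneracy bound from the observation that the argument range has length $<2\pi$ rather than by an enumeration of cases — are slightly cleaner than the paper's presentation but do not change the substance of the argument.
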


Note that $\langle \ee^{-\ii tA_N} \delta_v, a\ee^{-\ii tA_N} \delta_v\rangle = \sum_{u\in\Lambda_N} a(u)|(\ee^{-\ii tA_N}\delta_v)(u)|^2$. So \eqref{e:maindiszd} shows that the probability density $\mu_{v,T}^N(u):=\frac{1}{T}\int_0^T |(\ee^{-\ii tA_N}\delta_v)(u)|^2\,\dd t$ on $\Lambda_N$ approaches the uniform density $\frac{1}{N^d}$, provided $T$ and $N$ are large enough, that is, the time average $\sum_{u\in\Lambda_N} a(u)\mu_{v,T}^N(u)$ approaches the space average $\frac{1}{N^d}\sum_{u\in\Lambda_N} a(u)$ in analogy to \eqref{e:KW}. See Figure~\ref{fig:sprea}.

A positive aspect of this result is that it holds \emph{for any $v$}, whereas in the eigenfunction interpretation, equidistribution only holds for a density one subsequence in general. The evolution moreover ``forgets the initial state $v$'', a known signature of ergodicity.

\begin{figure}[h!]
\begin{center}
\setlength{\unitlength}{0.6cm}
\thicklines
\begin{picture}(1,5)(-0.6,-2)
        \put(-9,0){\line(1,0){1.9}}
	\put(-9,-1){\line(1,0){1.9}}
	\put(-9,1){\line(1,0){1.9}}
	\put(-9,2){\line(1,0){1.9}}
    \put(-6.9,0){\line(1,0){1.8}}
    \put(-6.9,-1){\line(1,0){1.8}}
    \put(-6.9,1){\line(1,0){1.8}}
     \put(-6.9,2){\line(1,0){1.8}}
         \put(-4.9,0){\line(1,0){1.8}}
	\put(-4.9,-1){\line(1,0){1.8}}
	\put(-4.9,1){\line(1,0){1.8}}
	\put(-4.9,2){\line(1,0){1.8}}
    \put(-2.9,0){\line(1,0){1.8}}
	\put(-2.9,-1){\line(1,0){1.8}}
	\put(-2.9,1){\line(1,0){1.8}}
	\put(-2.9,2){\line(1,0){1.8}}
\put(-7,-2){\line(0,1){0.9}}
	 \put(-5,-2){\line(0,1){0.9}}
	 \put(-3,-2){\line(0,1){0.9}}
\put(-7,-0.9){\line(0,1){0.8}}
	 \put(-5,-0.9){\line(0,1){0.8}}
	 \put(-3,-0.9){\line(0,1){0.8}}
\put(-7,0.1){\line(0,1){0.8}}
	 \put(-5,0.1){\line(0,1){0.8}}
	 \put(-3,0.1){\line(0,1){0.8}}
\put(-7,1.1){\line(0,1){0.8}}
	 \put(-5,1.1){\line(0,1){0.8}}
	 \put(-3,1.1){\line(0,1){0.8}}
\put(-7,2.1){\line(0,1){0.8}}
	 \put(-5,2.1){\line(0,1){0.8}}
	 \put(-3,2.1){\line(0,1){0.8}}
	 \put(-5,-1){\circle{.2}}
	 \put(-5,0){\circle{.2}}
	 \put(-7,-1){\circle{.2}}
	 \put(-7,0){\circle{.2}}
	 \put(-3,-1){\circle{.2}}
	 \put(-3,0){\circle{.2}}
	 \put(-5,1){\textcolor{blue}{\circle*{.25}}}
	 \put(-5,2){\circle{.2}}
	 \put(-7,1){\circle{.2}}
	 \put(-7,2){\circle{.2}}
	 \put(-3,1){\circle{.2}}
	 \put(-3,2){\circle{.2}}
	 \put(1,0){\line(1,0){8}}
	 \put(1,-1){\line(1,0){8}}
	\put(1,1){\line(1,0){8}}
	\put(1,2){\line(1,0){8}}
	 \put(3,-2){\line(0,1){5}}
	 \put(5,-2){\line(0,1){5}}
	 \put(7,-2){\line(0,1){5}}
	 \put(5,-1){\textcolor{cyan}{\circle*{.22}}}
	 \put(5,0){\textcolor{cyan}{\circle*{.22}}}
	 \put(3,-1){\textcolor{cyan}{\circle*{.22}}}
	 \put(3,0){\textcolor{cyan}{\circle*{.22}}}
	 \put(7,-1){\textcolor{cyan}{\circle*{.22}}}
	 \put(7,0){\textcolor{cyan}{\circle*{.22}}}
	 \put(5,1){\textcolor{cyan}{\circle*{.22}}}
	 \put(5,2){\textcolor{cyan}{\circle*{.22}}}
	 \put(3,1){\textcolor{cyan}{\circle*{.22}}}
	 \put(3,2){\textcolor{cyan}{\circle*{.22}}}
	 \put(7,1){\textcolor{cyan}{\circle*{.22}}}
	 \put(7,2){\textcolor{cyan}{\circle*{.22}}}
\end{picture}
\caption{Left: Point mass $\delta_v$ at time zero. Right: The density $\mu_{v,T}^N$ for $T,N\gg 0$. The point mass spreads out uniformly, a very strong form of delocalization.}\label{fig:sprea}
\end{center}
\end{figure}
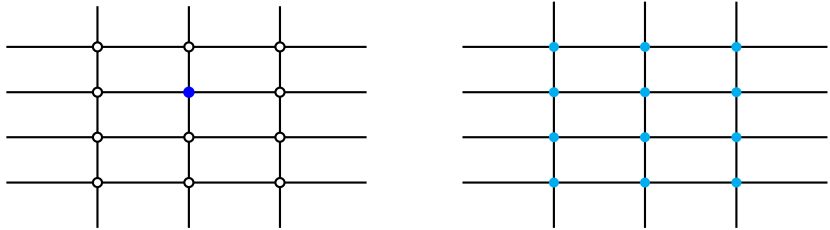 

The first class of observables allows taking bump functions $f$ supported on balls $B_R(x_0)\subset \T_\ast^d$. Then the result implies that $\sum_{u\in\Lambda_N} f(u/N)\mu_{v,T}^N(u) \approx \frac{1}{N^d} \sum_{u\in \Lambda_N} f(n/N) \approx \int_{\T_\ast^d} f(x)\,\dd x$, which is independent of $x_0$. This implies that for any macroscopic ball $B\subset \Lambda_N$ of size $|B|=\alpha N^d$, we have $\mu_{v,T}^{(N)}(B) \approx \alpha$ for $N,T\gg 0$ (e.g. take $\alpha=\frac{1}{2}$ or $\alpha=\frac{1}{4}$ and vary $B$).

We may extend Theorem~\ref{thm:zd} as follows:

\begin{thm}\label{thm:zdexten}
Under the same assumptions on $a_N$, we also have for $v\neq w$,
\begin{equation}\label{e:delvw}
\lim_{N\to\infty}\lim_{T\to\infty} \frac{1}{T}\int_0^T \langle \ee^{-\ii tA_N}\delta_v, a_N\ee^{-\ii tA_N}\delta_w\rangle\,\dd t = 0\,.
\end{equation}
More generally, for any $\phi,\psi$ of compact support, we have
\begin{equation}\label{e:ephipsi}
\lim_{N\to\infty}\Big|\lim_{T\to\infty} \frac{1}{T}\int_0^T \langle \ee^{-\ii tA_N}\phi, a_N\ee^{-\ii tA_N}\psi\rangle\,\dd t - \langle a_N\rangle\langle\phi,\psi\rangle\Big| = 0\,.
\end{equation}
\end{thm}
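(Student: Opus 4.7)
The plan is to establish \eqref{e:delvw} first, then deduce \eqref{e:ephipsi} by finite bilinearity. For \eqref{e:delvw}, I would follow the same spectral strategy that underlies Theorem~\ref{thm:zd}. Diagonalizing $A_N$ in its Fourier basis $\phi_k(n)=N^{-d/2}\ee^{2\pi\ii k\cdot n/N}$ ($k\in\Lambda_N$), with eigenvalues $\lambda_k = 2\sum_{j=1}^d\cos(2\pi k_j/N)$, the time average eliminates all non-resonant terms and yields
\[
\lim_{T\to\infty}\frac{1}{T}\int_0^T \langle \ee^{-\ii tA_N}\delta_v, a_N\ee^{-\ii tA_N}\delta_w\rangle\,\dd t = \frac{1}{N^{2d}}\sum_{\lambda_k=\lambda_l}\ee^{2\pi\ii(k\cdot v-l\cdot w)/N}\,\hat a_N(k-l),
\]
where $\hat a_N(m)=\sum_{n\in\Lambda_N}a_N(n)\ee^{-2\pi\ii m\cdot n/N}$. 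The diagonal contribution ($k=l$) equals $\langle a_N\rangle\,\delta_{v,w}$, which vanishes for $v\neq w$ once $N$ exceeds the coordinates of $v-w$; only the off-diagonal resonant part then survives.

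The main obstacle is to control this resonant sum in the presence of the high multiplicity of $\lambda_k$: the symmetries $k\mapsto -k$, coordinate permutations, and sign flips all generate systematic degeneracies, in addition to accidental ones. However, this is precisely the challenge already surmounted in the proof of Theorem~\ref{thm:zd}. The phase $\ee^{2\pi\ii(k\cdot v-l\cdot w)/N}$ has modulus one, exactly as the phase $\ee^{2\pi\ii(k-l)\cdot v/N}$ in that earlier proof; therefore the same combinatorial estimates on the number of resonant pairs and the same decay estimates on $\hat a_N$ (coming from $f\in H_s(\T_\ast^d)$ with $s>d/2$, or $a\in\ell^1(\Z^d)$) apply verbatim. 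Only the phase bookkeeping changes, and this yields \eqref{e:delvw}.

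To deduce \eqref{e:ephipsi}, I would decompose $\phi=\sum_v\phi(v)\delta_v$ and $\psi=\sum_w\psi(w)\delta_w$, both being finite sums thanks to the compact-support assumption, and use bilinearity:
\[
\langle \ee^{-\ii tA_N}\phi, a_N\ee^{-\ii tA_N}\psi\rangle = \sum_{v,w}\overline{\phi(v)}\psi(w)\,\langle \ee^{-\ii tA_N}\delta_v, a_N\ee^{-\ii tA_N}\delta_w\rangle.
\]
Theorem~\ref{thm:zd} controls the $v=w$ terms and \eqref{e:delvw} controls the $v\neq w$ terms, each time-averaged term being $\langle a_N\rangle\delta_{v,w}+o(1)$ as $N\to\infty$. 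The triangle inequality applied to the finitely many summands, combined with $\sum_v\overline{\phi(v)}\psi(v)=\langle\phi,\psi\rangle$, then gives \eqref{e:ephipsi}.
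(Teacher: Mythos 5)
Your proof is correct and follows essentially the same strategy as the paper: the $k=l$ (equivalently $m=0$) term is killed by Fourier orthogonality when $v\neq w$, the off-diagonal resonant contribution is bounded by reusing the count $\#\{\ell:\lambda_{\ell+m}=\lambda_\ell\}\le 2N^{d-1}$ together with the summability of the Fourier coefficients of $a_N$, and \eqref{e:ephipsi} follows from \eqref{e:maindiszd} and \eqref{e:delvw} by finite bilinearity over the supports of $\phi,\psi$. This is precisely the paper's argument modulo index bookkeeping.
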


There are many natural questions that arise when looking at these results.

First, why not work on $\Z^d$ directly? One issue is that the dynamics are dispersive \cite{SK} on $\Z^d$, more precisely $|\ee^{-\ii t \cA_{\Z^d}}\delta_v(w)|^2  = \prod_{j=1}^d |J_{v_j-w_j}(2t)|^2$, where $J_k$ is the Bessel function of order $k$, and $|J_k(t)|\lesssim t^{-1/3}$ uniformly in $k$, see \cite{Lan}. So as time grows large, the probability measure $|\ee^{-\ii t \cA_{\Z^d}}\delta_v(w)|^2$ on $\Z^d$ simply converges to zero. One could instead consider the limit of the process rescaled per unit time. We did this previously in \cite{BS} and computed the limiting measure explicitly.

Still, this does not answer whether we can possibly invert the order of limits in the theorem. The answer is in fact negative.\footnote{Still, we can take the two limits simultaneously to infinity in a certain regime. See Remark~\ref{rem:t=n}.}

\begin{prp}\label{prp:invertlimzd}
There exists a sequence of observables $a_N$ on $\Lambda_N$ of the form $f(n/N)$ such that
\[
\liminf_{N\to \infty} \left( \langle \ee^{-\ii tA_N} \delta_v, a_N \ee^{-\ii tA_N}\delta_v\rangle - \langle a_N\rangle   \right) \ge \frac{1}{2}
\]
for all time. The same statement holds for the averaged dynamics $\frac{1}{T}\int_0^T$.
\end{prp}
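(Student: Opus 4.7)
The strategy exploits a quasi-finite propagation property of $\ee^{-\ii tA_N}$: for each fixed $t$, the mass of $\ee^{-\ii tA_N}\delta_v$ remains concentrated in a ball of $N$-independent radius around $v$, whereas $\langle a_N\rangle$ is the uniform average over all of $\Lambda_N$. Choosing $a_N$ to be a macroscopic bump localized near $v$ thus forces the first term to see nearly all of the unit $\ell^2$-mass of the evolved state, while $\langle a_N\rangle$ stays close to the small number $\int f$.

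Concretely, fix $f\in C^\infty(\T_\ast^d)$ with $0\le f\le 1$, $f(0)=1$, and $\int_{\T_\ast^d} f\le 1/4$ (a smooth bump supported in a sufficiently small ball around $0\in\T_\ast^d$), and set $a_N(n)=f(n/N)$. Standard Riemann-sum convergence gives $\langle a_N\rangle\to\int f\le 1/4$. To analyze the evolved state, I would compare $\ee^{-\ii tA_N}\delta_v$ with $\ee^{-\ii tA_{\Z^d}}\delta_v$ through the Taylor series of the unitary group: since $(A_N)^k(v,u)$ counts $k$-step walks from $v$ to $u$ in $\Lambda_N$, and no such walk can wrap around the torus when $k+|u-v|<N$, one has $(A_N)^k(v,u)=(A_{\Z^d})^k(v,u)$ for $|u-v|\le N/2$ and $k<N/2$. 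Summing the tail gives
\[
\bigl|(\ee^{-\ii tA_N}\delta_v - \ee^{-\ii tA_{\Z^d}}\delta_v)(u)\bigr|\le \sum_{k\ge N/2}\frac{(2dt)^k}{k!},
\]
which vanishes as $N\to\infty$ for $t$ fixed.

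Combining this with the explicit identity $|(\ee^{-\ii tA_{\Z^d}}\delta_v)(u)|^2=\prod_{j=1}^d|J_{u_j-v_j}(2t)|^2$ quoted in the introduction, and the super-exponential decay of $J_k(2t)$ in $|k|$ for $|k|\gg t$, one can fix $R=R(t,\eps)$ with $\sum_{|u-v|>R}|(\ee^{-\ii tA_N}\delta_v)(u)|^2<\eps$ for all sufficiently large $N$. On the bounded set $\{|u-v|\le R\}$, $u/N\to 0$ as $N\to\infty$ (with $v$ fixed), so $f(u/N)\to 1$ uniformly; together with $0\le f\le 1$ and $\|\ee^{-\ii tA_N}\delta_v\|_{\ell^2}=1$, this yields $\langle \ee^{-\ii tA_N}\delta_v, a_N\ee^{-\ii tA_N}\delta_v\rangle\to 1$. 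Hence
\[
\liminf_{N\to\infty}\bigl(\langle \ee^{-\ii tA_N}\delta_v, a_N\ee^{-\ii tA_N}\delta_v\rangle - \langle a_N\rangle\bigr)\ge 1-\tfrac{1}{4}\ge\tfrac{1}{2}
\]
for each $t$. For the time-averaged version, the integrand is bounded by $\|a_N\|_\infty\le 1$ uniformly in $(N,t)$, so bounded convergence on $[0,T]$ upgrades the pointwise convergence to $\frac{1}{T}\int_0^T\to 1$, giving the same lower bound.

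The only step requiring genuine thought is the semigroup comparison; everything else (Riemann sums, Bessel decay, bounded convergence) is routine. The conceptual message is that the two limits really do not commute: sending $N\to\infty$ first freezes the walk in a bounded neighborhood of $v$ before it can sample the full torus, in contrast to the $T\to\infty$-first scenario of Theorem~\ref{thm:zd} where equidistribution on the fixed finite $\Lambda_N$ occurs before $N$ is sent to infinity.
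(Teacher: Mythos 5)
Your proof is correct, and it follows the same high-level strategy as the paper's: for fixed $t$, compare the finite-volume dynamics $\ee^{-\ii tA_N}\delta_v$ with the infinite-volume dynamics $\ee^{-\ii t\cA_{\Z^d}}\delta_v$, observe that $a_N=f(\cdot/N)$ is asymptotically constant equal to $f(0)=1$ on bounded sets, and conclude that the quantum expectation tends to $1$ while $\langle a_N\rangle\to\int f<1/2$. The mechanism for the comparison differs. The paper reads off the pointwise convergence $\ee^{-\ii tA_N}\delta_v(w)\to\ee^{-\ii t\cA_{\Z^d}}\delta_v(w)$ from the explicit Fourier-kernel formulas (a Riemann sum converging to its integral), and then applies Fatou's lemma to the nonnegative summands, which yields $\liminf_N\langle\cdots\rangle\ge 1$ without having to quantify where the mass of the evolved state lives. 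You instead derive the pointwise comparison from the Taylor series of the unitary group and finite propagation of $A_N^k$, and then localize the mass via the Bessel tail bound, obtaining the exact limit $\langle\cdots\rangle\to 1$. Your Fatou-free route is slightly longer but more explicit. One genuine improvement in your version is the choice of observable: your $f$ is a smooth bump and hence lies in $H^s(\T_\ast^d)$ for every $s$, so $a_N$ is in the class of observables covered by Theorem~\ref{thm:zd}; the paper's choice $f(x)=\prod_i(1-x_i)$ has a jump across the torus identification and is therefore not in $H^s(\T_\ast^d)$ for $s\ge 1/2$, so it falls outside that class even though the proposition's statement does not strictly require Sobolev regularity. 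The time-averaged version is handled analogously in both proofs (you via dominated convergence on $[0,T]$, the paper via Tonelli together with the uniform bound $|\ee^{-\ii tA_N}\delta_v(w)|\le 1$).
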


This indicates that the limit over time should be considered first. But can one get rid of the time average and consider the limit directly in $t$ ? The answer is negative.

\begin{prp}\label{prp:noavzd}
There exists a sequence of observables $a_N$ on $\Lambda_N$ of the form $f(n/N)$ such that $\langle \ee^{-\ii tA_N} \delta_v, a_N \ee^{-\ii tA_N}\delta_v\rangle$ has no limit as $t\to\infty$.

The limit of the corresponding average $\int_{T-1}^T\langle \ee^{-\ii tA_N} \delta_v, a_N \ee^{-\ii tA_N}\delta_v\rangle\,\dd t$ also doesn't exist.
\end{prp}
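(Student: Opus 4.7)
The plan is to exploit the fact that $A_N$ is a finite-dimensional self-adjoint operator, so $\phi_N(t):=\langle \ee^{-\ii tA_N}\delta_v, a_N\ee^{-\ii tA_N}\delta_v\rangle$ is a finite exponential sum $\sum_\alpha C_\alpha \ee^{\ii \alpha t}$, obtained by expanding in the eigenbasis of the circulant $A_N$ and grouping terms of common frequency $\alpha=\lambda_j-\lambda_k$. A Fourier-recovery observation then shows that if $\phi_N(t)\to L$ as $t\to\infty$, every $C_\alpha$ with $\alpha\neq 0$ must vanish: writing $\phi_N=L+g$ with $g(t)\to 0$, one has $T^{-1}\int_0^T g(t)\ee^{-\ii\alpha t}\,\dd t\to 0$ by the Cesàro-type estimate, while a direct integration gives $T^{-1}\int_0^T \phi_N(t)\ee^{-\ii\alpha t}\,\dd t \to C_\alpha$. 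So the task reduces to exhibiting $a_N=f(n/N)$ for which some $C_\alpha$ with $\alpha\neq 0$ is nonzero.

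A clean choice is $f(x)=\cos(2\pi x_1)$, i.e.\ $a_N(n)=\cos(2\pi n_1/N)$. The Fourier modes $\phi_k(n)=N^{-d/2}\ee^{2\pi\ii k\cdot n/N}$, which diagonalize $A_N$ with $\lambda_k=2\sum_{j=1}^d\cos(2\pi k_j/N)$, satisfy $a_N\phi_k=\tfrac{1}{2}(\phi_{k+e_1}+\phi_{k-e_1})$ with indices mod $N$, so in the double sum only the cross terms $j=k\pm e_1$ survive. A direct computation then yields
\[
\phi_N(t) \;=\; \frac{1}{N}\Re\Big[\ee^{2\pi\ii v_1/N}\sum_{k_1=0}^{N-1}\ee^{\ii t\omega_{k_1}}\Big],\qquad \omega_{k_1}=-4\sin(\pi/N)\sin\bigl(\pi(2k_1+1)/N\bigr).
\]
For every $N\geq 2$ the frequency $\omega_0=-4\sin^2(\pi/N)$ is nonzero, so $\phi_N$ contains a genuine oscillating mode, and by the first paragraph it cannot converge. (For $N=2$ the formula collapses to $\phi_2(t)=\cos(4t)$, making the failure transparent.)

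For the averaged statement, integrate each exponential: when $\omega_{k_1}\neq 0$,
\[
\int_{T-1}^T \ee^{\ii t\omega_{k_1}}\,\dd t \;=\; \frac{1-\ee^{-\ii\omega_{k_1}}}{\ii\omega_{k_1}}\,\ee^{\ii T\omega_{k_1}},
\]
and the numerator vanishes only when $\omega_{k_1}$ is a nonzero multiple of $2\pi$. Since $|\omega_{k_1}|\leq 4<2\pi$, this never happens, so each nonzero frequency of $\phi_N$ produces a nonzero coefficient in the almost-periodic function $T\mapsto\int_{T-1}^T\phi_N(t)\,\dd t$. Reapplying the Fourier-recovery argument in the variable $T$ then shows this average has no limit either. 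I do not foresee any real obstacle; the content is the rigidity of almost-periodic functions, and the only care required is checking that the specific frequencies produced by $f$ avoid the vanishing set $2\pi\Z\setminus\{0\}$ of the averaging kernel, which the size estimate $|\omega_{k_1}|<2\pi$ handles uniformly.
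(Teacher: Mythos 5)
Your strategy — write $\phi_N(t)=\sum_\alpha C_\alpha e^{i\alpha t}$ as a finite trigonometric sum over the distinct frequencies $\alpha\in\{\lambda_j-\lambda_k\}$, and observe that any $t\to\infty$ limit forces $C_\alpha=0$ for all $\alpha\neq 0$ via the Ces\`aro-recovery $T^{-1}\int_0^T\phi_N(t)e^{-i\alpha t}\,\dd t\to C_\alpha$ — is correct and is genuinely more self-contained than the paper's, which instead invokes an external lemma of Klein (\cite{KleiBal}) on non-convergence of trigonometric sums along integer times. If you complete it, your route is arguably cleaner, since it avoids the restriction $N>4$ needed in the paper's footnote to check the hypotheses of that lemma.

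However, the step
``\emph{For every $N\geq 2$ the frequency $\omega_0=-4\sin^2(\pi/N)$ is nonzero, so $\phi_N$ contains a genuine oscillating mode}''
is a non sequitur: a nonzero frequency in the raw sum does not imply a nonzero \emph{coefficient} after you group terms of equal frequency, and here the coefficient really can vanish. Writing out the grouping for your choice $a_N(n)=\cos(2\pi n_1/N)$, the map $\ell_1\mapsto N-1-\ell_1$ sends $\omega_{\ell_1}$ to $-\omega_{\ell_1}$, so the number of $\ell_1$ with $\omega_{\ell_1}=\alpha$ equals the number with $\omega_{\ell_1}=-\alpha$; call it $n_\alpha$. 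The grouped coefficient is then
$C_\alpha=\frac{n_\alpha}{2N}\bigl(e^{2\pi i v_1/N}+e^{-2\pi i v_1/N}\bigr)=\frac{n_\alpha\cos(2\pi v_1/N)}{N}$,
which vanishes \emph{for all} $\alpha\neq 0$ whenever $\cos(2\pi v_1/N)=0$. In that degenerate case $\phi_N(t)\equiv 0$: for example, $d=1$, $N=4$, $v=1$, where one checks directly that $|(\ee^{-itA_4}\delta_1)(0)|^2=|(\ee^{-itA_4}\delta_1)(2)|^2$ and $a_4(1)=a_4(3)=0$, so $\phi_4\equiv 0$ has an obvious limit. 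Thus your cosine observable does not give a counterexample for every $N$ (it fails for finitely many $N$ when $v\neq 0$). The fix is exactly the paper's choice $f(x)=e^{2\pi i x_1}$: then $\phi_N(t)=\frac{e^{2\pi i v_1/N}}{N^d}\sum_{\ell}e^{it\omega_{\ell_1}}$, the grouped coefficient of any frequency $\alpha$ is a positive count times the unimodular $e^{2\pi i v_1/N}/N$, hence never vanishes, and your Fourier-recovery argument goes through unconditionally for $N\geq 2$. Your treatment of the window average $\int_{T-1}^T$ via the factor $(1-e^{-i\alpha})/(i\alpha)$ and the size bound $|\alpha|\leq 4<2\pi$ is sound once the non-averaged case is repaired in this way.
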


The last point illustrates that a ``full'' average $\frac{1}{T}\int_0^T$ is needed.

Proposition~\ref{prp:invertlimzd} shows that we should consider the large time limit first. But is it actually necessary to take $N$ to infinity? The answer is yes.

\begin{prp}\label{prp:ves}
For each $N$ there exists $a_N$ such that
\[
\lim_{T\to\infty} \frac{1}{T}\int_0^T \langle \ee^{-\ii tA_N}\delta_v, a_N \ee^{-\ii tA_N}\delta_v\rangle - \langle a_N\rangle = b_N(v)
\]
with $b_N(v)\neq 0$. Here, $b_N(v)\to 0$ as $N\to\infty$ at the rate $N^{-1}$.
\end{prp}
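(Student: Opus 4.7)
The plan is to diagonalize $A_N$ explicitly in the Fourier basis and read off $b_N(v)$ from the resulting sum. Let $\phi_k(u) = N^{-d/2}\ee^{2\pi\ii k\cdot u/N}$ for $k\in\Lambda_N$; these form an orthonormal eigenbasis of $A_N$ with eigenvalues $\lambda_k = 2\sum_{j=1}^d \cos(2\pi k_j/N)$. Expanding $\delta_v = \sum_k N^{-d/2}\ee^{-2\pi\ii k\cdot v/N}\phi_k$ and applying the standard orthogonality $\lim_{T\to\infty}\frac{1}{T}\int_0^T \ee^{\ii t(\lambda_k-\lambda_{k'})}\,\dd t = \mathbf{1}_{\lambda_k=\lambda_{k'}}$, one obtains
\[
\lim_{T\to\infty}\frac{1}{T}\int_0^T \langle \ee^{-\ii tA_N}\delta_v, a_N\ee^{-\ii tA_N}\delta_v\rangle\,\dd t = \frac{1}{N^d}\sum_{\lambda_k=\lambda_{k'}} \ee^{2\pi\ii(k-k')\cdot v/N}\widehat{a_N}(k-k'),
\]
where $\widehat{a_N}(m) := N^{-d}\sum_{u\in\Lambda_N} a_N(u)\ee^{-2\pi\ii m\cdot u/N}$. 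The diagonal $k=k'$ terms collapse to $\widehat{a_N}(0) = \langle a_N\rangle$, so
\[
b_N(v) = \frac{1}{N^d}\sum_{\substack{k\neq k' \\ \lambda_k=\lambda_{k'}}} \ee^{2\pi\ii(k-k')\cdot v/N}\widehat{a_N}(k-k').
\]

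Next I would exploit the symmetry $\lambda_k = \lambda_{-k \bmod N}$, which provides a guaranteed supply of resonant pairs with $k\neq k'$. To activate this resonance in the Fourier sum, take the explicit observable
\[
a_N(u) = 2\cos(4\pi u_1/N) = \ee^{4\pi\ii u_1/N} + \ee^{-4\pi\ii u_1/N},
\]
whose Fourier transform is supported only on $m \equiv \pm 2e_1 \pmod N$, with $e_1$ the first canonical basis vector. Pairs $(k,k')$ with $k-k' \equiv 2e_1 \pmod N$, $k_j = k_j'$ for $j\ge 2$, and $\lambda_k=\lambda_{k'}$ force $k_1 + k_1' \equiv 0 \pmod N$ together with $k_1 - k_1' \equiv 2 \pmod N$, so that $2k_1 \equiv 2 \pmod N$, which has one solution if $N$ is odd and two if $N$ is even; the transverse coordinates $(k_2,\ldots,k_d)$ are arbitrary, giving $N^{d-1}$ (resp.\ $2N^{d-1}$) pairs, each contributing phase $\ee^{4\pi\ii v_1/N}$. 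Adding the conjugate contribution from $m\equiv -2e_1$ produces
\[
b_N(v) = \frac{c_{N,d}}{N}\cos(4\pi v_1/N), \qquad c_{N,d}\in\{2,4\},
\]
which is nonzero (e.g.\ at $v=0$) and decays at the advertised rate $N^{-1}$.

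The main (and fairly mild) obstacle is the choice of $a_N$: one must ensure its Fourier support meets the resonance set $\{k-k' : \lambda_k = \lambda_{k'},\, k\neq k'\}$, since otherwise $b_N \equiv 0$ and no counterexample is produced. The simplest degeneracy $k\leftrightarrow -k \pmod N$ suffices, and a cosine rather than a sine guarantees $b_N(0)\neq 0$. The rate $N^{-1}$ is then forced by arithmetic: there are $O(N^{d-1})$ resonant pairs contributing against the overall $N^{-d}$ normalization, and no further gain is possible because each Fourier mode $m$ of $a_N$ couples to a resonant set of codimension at least one in $\Lambda_N \times \Lambda_N$.
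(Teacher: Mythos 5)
Your proof is correct and takes essentially the same route as the paper's: diagonalize $A_N$ in the Fourier eigenbasis, rewrite the time-averaged expectation as a sum over resonant pairs $\lambda_k=\lambda_{k'}$ with $k\neq k'$, exploit the evenness $\lambda_k=\lambda_{-k\bmod N}$ to supply such pairs, and pick a single-frequency cosine observable to make the count explicit. The derivation of
\[
b_N(v)=\frac{1}{N^d}\sum_{\substack{k\neq k'\\ \lambda_k=\lambda_{k'}}}\ee^{2\pi\ii(k-k')\cdot v/N}\widehat{a_N}(k-k')
\]
is exactly what the paper does (it is \eqref{e:limdis} in their notation), and your count of solutions to $2k_1\equiv 2\pmod N$ correctly yields $N^{d-1}$ or $2N^{d-1}$ resonant $\ell$'s depending on parity, giving $b_N(v)=\Theta(N^{-1})$.

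The one cosmetic difference from the paper: you use $a_N(u)=2\cos(4\pi u_1/N)$ for all $N$, while the paper uses $2\cos(2\pi n_1/N)$ when $N$ is odd and $2\cos(4\pi n_1/N)$ when $N$ is even. The paper must split because $2\ell_1+1\equiv 0\pmod N$ has no solution for even $N$, so the single-frequency cosine is dead there; your double-frequency choice works uniformly (with the caveat that it degenerates for $N\in\{1,2,4\}$, which is irrelevant asymptotically). One small point you leave implicit, as does the paper: $b_N(v)$ as produced can vanish at some $v$ for even $N$ (e.g.\ $N=8$, $v_1=1$); for odd $N$ your $\cos(4\pi v_1/N)$ is never zero, just like the paper's $\cos(2\pi v_1/N)$. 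If the proposition is read as requiring $b_N(v)\neq 0$ for the given $v$, a shifted cosine handles the remaining even-$N$ cases, but neither you nor the paper spells this out, so there is no gap relative to the paper's standard.
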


Proposition~\ref{prp:ves} shows that we cannot expect a faster rate of convergence in Theorem~\ref{thm:zd} than $N^{-1}$. We indeed achieve this upper bound in the proof in Section~\ref{sec:pro}.

\medskip

Theorem~\ref{thm:zd} can be generalized to $\Z^d$-periodic graphs (crystals). This requires some vocabulary which we prefer to postpone to Section~\ref{sec:pergra}, so we will explain the theorem in words here instead and refer to Theorem~\ref{thm:pergra} for a more precise statement.

\begin{thm}
Theorem~\ref{thm:zd} holds true more generally for periodic Schr\"odinger operators on $\Z^d$-periodic graphs, provided they satisfy a certain Floquet condition. This condition is satisfied in particular for the adjacency matrix on infinite strips, on the honeycomb lattice, and for Sch\"odinger operators with periodic potentials on the triangular lattice and on $\Z^d$, for any $d$. The average $\langle a_N\rangle$ may not be the uniform average of $a_N$ in general, but a certain weighted average.
\end{thm}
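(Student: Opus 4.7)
The plan is to adapt the Fourier-analytic proof of Theorem~\ref{thm:zd} by replacing the plane-wave diagonalization of $A_N$ with the discrete Floquet-Bloch decomposition adapted to the crystal structure. Let $\cW$ be a fundamental domain of the $\Z^d$-periodic graph with $m = |\cW|$ vertices, so the finite piece $G_N$ has vertex set naturally identified with $\Lambda_N \times \cW$. Applied to the periodic Schr\"odinger operator $H_N$ on $G_N$ with periodic boundary conditions, the discrete Floquet transform gives
\[
H_N \cong \bigoplus_{k\in\Lambda_N} H(\theta_k), \qquad \theta_k = 2\pi k/N \in \T_\ast^d,
\]
where each Floquet fiber $H(\theta)$ is an $m\times m$ matrix. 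Diagonalizing $H(\theta)$ yields Bloch bands $\omega_j(\theta)$ and eigenvectors $u_j(\theta)\in\C^\cW$ for $j=1,\ldots,m$, which assemble into a full eigenbasis $\phi_{j,k}(n,w) = N^{-d/2}\ee^{\ii \theta_k\cdot n}u_j(\theta_k)(w)$ of $H_N$.

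Expanding $\delta_v$ in this basis and computing the long-time Ces\`aro average selects only pairs of eigenvalues that coincide, giving
\[
\sum_{(j,k),(j',k'):\,\omega_j(\theta_k)=\omega_{j'}(\theta_{k'})} \overline{\langle \phi_{j,k},\delta_v\rangle}\,\langle \phi_{j',k'},\delta_v\rangle\, \langle \phi_{j,k}, a_N\phi_{j',k'}\rangle.
\]
I would split this into a \emph{diagonal} part corresponding to $(j,k)=(j',k')$ and an \emph{off-diagonal} remainder from genuine band coincidences. Writing $w_v$ for the class of $v$ in $\cW$, the diagonal part simplifies to
\[
\frac{1}{N^d}\sum_{k\in\Lambda_N}\sum_{j=1}^m |u_j(\theta_k)(w_v)|^2\, \langle u_j(\theta_k), \widetilde{a}_N(\theta_k)\, u_j(\theta_k)\rangle,
\]
where $\widetilde{a}_N(\theta_k)$ is the natural Floquet symbol of the multiplication operator $a_N$ at momentum $\theta_k$. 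For $a_N(n,w)=f(n/N,w)$ with $f\in H_s$, $s>d/2$, this is a Riemann sum over $\{\theta_k\}\subset\T_\ast^d$ and converges, by the same Sobolev argument used in the $\Z^d$ case, to a weighted average of the form $\sum_{w\in\cW} \rho_{v,w}\int_{\T_\ast^d} f(x,w)\,\dd x$, with weights $\rho_{v,w}$ built from the band projectors $|u_j(\theta)\rangle\langle u_j(\theta)|$. This is the ``weighted average'' $\langle a_N\rangle$ appearing in the statement, and collapses to the uniform average precisely when $m=1$ (the $\Z^d$ case).

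The off-diagonal remainder is controlled by the Floquet condition, which asserts that, modulo unavoidable degeneracies (such as $\omega_j(-\theta)=\omega_j(\theta)$ from time-reversal symmetry or point-group symmetries of the Floquet fibers), the set
\[
\{(\theta,\theta')\in(\T_\ast^d)^2 : \omega_j(\theta)=\omega_{j'}(\theta')\text{ for some }(j,\theta)\neq(j',\theta')\}
\]
is small enough that only $o(N^{2d})$ lattice pairs $(\theta_k,\theta_{k'})$ land on it. Combined with Cauchy--Schwarz and the Fourier- or $\ell^1$-regularity of $a_N$, this forces the off-diagonal contribution to vanish as $N\to\infty$. The main obstacle is to organize the contribution of the symmetry-enforced degeneracies so that, after summation, they reconstitute exactly the weighted average rather than producing an uncontrolled remainder — this is what dictates the non-uniform weights $\rho_{v,w}$ — and to verify the Floquet condition for each concrete example (strips, honeycomb, triangular lattice with periodic potential, $\Z^d$ with periodic potential) through explicit analysis of the associated Floquet matrices.
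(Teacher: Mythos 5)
The overall framework — Floquet--Bloch diagonalization of $H_N$, leading term versus remainder controlled by a Floquet condition — is the same as the paper's, but your decomposition is not quite the right one and you have correctly identified the place where it breaks, without resolving it.

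Your ``diagonal'' part keeps only the pairs $(j,k)=(j',k')$, relegating all pairs with $\omega_j(\theta_k)=\omega_{j'}(\theta_{k'})$ and $(j,k)\neq(j',k')$ to a remainder that the Floquet condition is supposed to kill. But coincidences \emph{within a single fiber} — same $\theta_k$, different band indices $j\neq j'$ with $\omega_j(\theta_k)=\omega_{j'}(\theta_k)$ — are not a small remainder in general. A concrete example is the $4$-cylinder in the paper: $A(\theta)=c_\theta\,\mathrm{Id}_4+\cA_{C_4}$ has the eigenvalue $c_\theta$ with multiplicity $2$ at \emph{every} $\theta$. The cross-terms $\overline{u_j(\theta)(v_p)}u_j(\theta)(v_q)\,\overline{u_{j'}(\theta)(v_q)}u_{j'}(\theta)(v_p)$ with $j\neq j'$ then contribute at full strength $O(1)$, and they belong in the leading term, not in the remainder. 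The paper avoids this by decomposing the \emph{observable} into Fourier modes $e_m$ rather than splitting the state's eigenexpansion. The leading term is then the $m=0$ Fourier mode of $a_N$, and this naturally carries the full spectral projection $P_{E_s}(\theta_r)$ onto each distinct eigenvalue (see \eqref{e:m=0} and the weights $\bigl|[P_{E_s}(\frac{r_\fb}{N})\delta_{v_q}](v_p)\bigr|^2$ in \eqref{e:avp}), thereby absorbing exactly the same-momentum degeneracies that you would otherwise have to reassemble by hand. The Floquet condition \eqref{e:flo} then only has to control coincidences $E_s(\frac{(r+m)_\fb}{N})=E_w(\frac{r_\fb}{N})$ with $m\neq 0$, i.e.\ with a genuine momentum shift, and it is an $o(N^d)$ count in $r$, uniform over $m\neq 0$ — not an $o(N^{2d})$ count over pairs $(\theta,\theta')$, and with no need to carve out ``unavoidable'' symmetry degeneracies (time-reversal coincidences $\omega_j(-\theta)=\omega_j(\theta)$ translate to at most $O(N^{d-1})$ solutions per fixed $m$ and are already harmless). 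Your sentence ``The main obstacle is to organize the contribution of the symmetry-enforced degeneracies so that, after summation, they reconstitute exactly the weighted average'' is precisely the gap: the fix is to reorganize the sum over the observable's Fourier index $m$ first, so that the degenerate fiber contributions land automatically in the $m=0$ block and the spectral projections do the bookkeeping for you.
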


See Figures~\ref{fig:strips} and \ref{fig:carte} for examples of this.

\begin{figure}[h!]
\begin{center}
\setlength{\unitlength}{0.9cm}
\thicklines
\begin{picture}(1.3,1.9)(-0.8,-2.1)
   \put(-8,-0.5){\line(1,0){4}}
	 \put(-8,-1.5){\line(1,0){4}}
	 \put(-7.5,-1.5){\line(0,1){1}}
	 \put(-6.5,-1.5){\line(0,1){1}}
	 \put(-5.5,-1.5){\line(0,1){1}}
	 \put(-4.5,-1.5){\line(0,1){1}}
	 \put(-7.5,-1.5){\textcolor{gray}{\circle*{.2}}}
	 \put(-7.5,-0.5){\textcolor{gray}{\circle*{.2}}}
	 \put(-6.5,-1.5){\textcolor{gray}{\circle*{.2}}}
	 \put(-6.5,-0.5){\textcolor{gray}{\circle*{.2}}}
	 \put(-5.5,-1.5){\textcolor{gray}{\circle*{.2}}}
	 \put(-5.5,-0.5){\textcolor{gray}{\circle*{.2}}}
	 \put(-4.5,-1.5){\textcolor{gray}{\circle*{.2}}}
	 \put(-4.5,-0.5){\textcolor{gray}{\circle*{.2}}}
	  \put(-2.5,0){\line(1,0){4}}
	 \put(-2.5,-1){\line(1,0){4}}
	 \put(-2.5,-2){\line(1,0){4}}
	 \put(-2,-1){\line(0,1){1}}
	 \put(-1,-1){\line(0,1){1}}
	 \put(0,-1){\line(0,1){1}}
	 \put(1,-1){\line(0,1){1}}
	 \put(-2,-2){\line(0,1){1}}
	 \put(-1,-2){\line(0,1){1}}
	 \put(0,-2){\line(0,1){1}}
	 \put(1,-2){\line(0,1){1}}
	 \put(-2,-1){\textcolor{pink}{\circle*{.2}}}
	 \put(-2,0){\textcolor{red}{\circle*{.2}}}
	 \put(-2,-2){\textcolor{red}{\circle*{.2}}}
	 \put(-1,-1){\textcolor{pink}{\circle*{.2}}}
	 \put(-1,0){\textcolor{red}{\circle*{.2}}}
	 \put(-1,-2){\textcolor{red}{\circle*{.2}}}
	 \put(0,-1){\textcolor{pink}{\circle*{.2}}}
	 \put(0,0){\textcolor{red}{\circle*{.2}}}
	 \put(0,-2){\textcolor{red}{\circle*{.2}}}
	 \put(1,-1){\textcolor{pink}{\circle*{.2}}}
	 \put(1,0){\textcolor{red}{\circle*{.2}}}
	 \put(1,-2){\textcolor{red}{\circle*{.2}}}
	  \put(3,0){\line(1,0){4}}
	 \put(3,-1){\line(1,0){4}}
	 \put(3,-2){\line(1,0){4}}
	 \put(3.5,-1){\line(0,1){1}}
	 \put(4.5,-1){\line(0,1){1}}
	 \put(5.5,-1){\line(0,1){1}}
	 \put(6.5,-1){\line(0,1){1}}
	 \put(3.5,-2){\line(0,1){1}}
	 \put(4.5,-2){\line(0,1){1}}
	 \put(5.5,-2){\line(0,1){1}}
	 \put(6.5,-2){\line(0,1){1}}
	 \put(3.5,-1){\textcolor{blue}{\circle*{.2}}}
	 \put(3.5,0){\textcolor{cyan}{\circle*{.2}}}
	 \put(3.5,-2){\textcolor{cyan}{\circle*{.2}}}
	 \put(4.5,-1){\textcolor{blue}{\circle*{.2}}}
	 \put(4.5,0){\textcolor{cyan}{\circle*{.2}}}
	 \put(4.5,-2){\textcolor{cyan}{\circle*{.2}}}
	 \put(5.5,-1){\textcolor{blue}{\circle*{.2}}}
	 \put(5.5,0){\textcolor{cyan}{\circle*{.2}}}
	 \put(5.5,-2){\textcolor{cyan}{\circle*{.2}}}
	 \put(6.5,-1){\textcolor{blue}{\circle*{.2}}}
	 \put(6.5,0){\textcolor{cyan}{\circle*{.2}}}
	 \put(6.5,-2){\textcolor{cyan}{\circle*{.2}}}
	 \end{picture}
\caption{A point mass on the ladder (left) eventually equidistributes. On the strip of width $3$, if the initial point mass lies in the top layer, it eventually (center) puts $\frac{3}{8}$ of its mass over the top and bottom layers and $\frac{1}{4}$ on the middle. If the point mass was in the middle layer, it eventually (right) puts $\frac{1}{4}$ on the top and bottom layers and $\frac{1}{2}$ in the middle layer.}\label{fig:strips}
\end{center}
\end{figure}
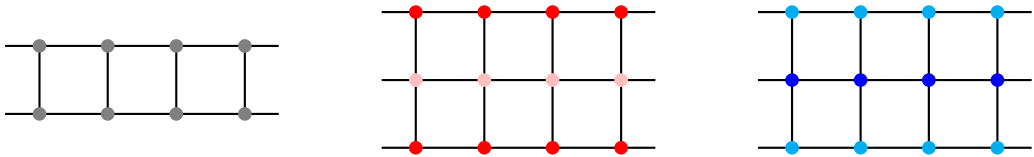

In this result we focus on the two most natural choices of initial states: point masses (Theorem~\ref{thm:pergra}) and initial states uniformly spread over a single fundamental cell (Remark~\ref{rem:perini}). Still, we give an expression \eqref{e:gena} for more general states.

As we explain in the end of Section~\ref{sec:pergra}, some Floquet condition must be assumed at least to rule out flat bands.

\subsection{Torus dynamics}\label{sec:intor}
We now turn our attention to the continuum and consider the torus $\T_\ast^d$. This model is rather unusual compared to the quantum walk literature, which typically studies evolutions on graphs, but has the advantage of being directly comparable to the Kronecker-Weyl theorem \eqref{e:KW}. The technical issue however is that we can no longer consider Dirac distributions $\delta_x$ directly as in the case of graphs. We will thus regularize (approximate) them in two ways: in momentum space, then in position space.

For the momentum-space approximation, we consider $\delta_y^E:=\frac{1}{\sqrt{N_E}} \mathbf{1}_{(-\infty,E]}(-\Delta)\delta_y$, where $N_E$ is the number of Laplacian eigenvalues in $(-\infty,E]$. This truncated Dirac function has previously been considered in \cite{AR12}. In our case, $\delta_y^E$ is a trigonometric polynomial, see \S~\ref{sec:conmo}.
\begin{thm}\label{thm:ctstor}
We have for any $T>0$,
\begin{enumerate}[\rm(1)]
\item For any $y\in \T_\ast^d$, any $a\in H^s(\T_\ast^d)$, $s>d/2$,
\[
\lim_{E\to\infty} \frac{1}{T}\int_0^T \langle \ee^{\ii t\Delta} \delta_y^E, a \ee^{\ii t\Delta} \delta_y^E\rangle\,\dd t = \int_{\T_\ast^d} a(x)\,\dd x\,.
\]
\item If $x\neq y$, then for $a\in H^s(\T_\ast^d)$, $s>d/2$,
\[
\lim_{E\to\infty}  \frac{1}{T}\int_0^T \langle \ee^{\ii t\Delta} \delta_x^E, a \ee^{\ii t\Delta} \delta_y^E\rangle\,\dd t = 0\,.
\]
\item Result \emph{(1)} remains true if $a=a^E$ depends on the semiclassical parameter $E$, as long as all partial derivatives of order $\le s$ are uniformly bounded by $c E^{r}$ for some $r<\frac{1}{4}$. More precisely, $\lim\limits_{E\to\infty} |\frac{1}{T}\int_0^T \langle \ee^{\ii t\Delta}\delta_y^E, a^E\ee^{\ii t\Delta}\delta_y^E\rangle\,\dd t - \int_{\T_\ast^d} a^E(x)\,\dd x| =0$.
\item The probability measure $\dd\mu_{y,T}^E(x)=(\frac{1}{T}\int_0^T |\ee^{\ii t\Delta} \delta_y^E(x)|^2\,\dd t)\dd x$ on $\T_\ast^d$ converges weakly to the uniform measure $\dd x$ as $E\to\infty$.
\end{enumerate}
\end{thm}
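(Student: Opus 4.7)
The plan is to diagonalize $-\Delta$ in the Fourier basis $e_k(x)=\ee^{2\pi\ii k\cdot x}$, $k\in\Z^d$, with eigenvalues $4\pi^2|k|^2$. Setting $R_E=\sqrt{E}/(2\pi)$ and $N_E=\#\{k\in\Z^d\colon|k|\le R_E\}$, one has $\delta_y^E=N_E^{-1/2}\sum_{|k|\le R_E}\ee^{-2\pi\ii k\cdot y}\,e_k$. Expanding $a=\sum_m\hat a(m)\,e_m$ and using $a\,e_k=\sum_m\hat a(m)\,e_{m+k}$, the time-averaged matrix element reads
\[
\frac{1}{T}\!\int_0^T\!\!\langle\ee^{\ii t\Delta}\delta_y^E,\,a\,\ee^{\ii t\Delta}\delta_y^E\rangle\,\dd t=\frac{1}{N_E}\!\!\sum_{|k|,|j|\le R_E}\!\!\hat a(k-j)\,\ee^{2\pi\ii(k-j)\cdot y}\,\phi_T\bigl(4\pi^2(|k|^2-|j|^2)\bigr),
\]
where $\phi_T(\lambda)=(\ee^{\ii T\lambda}-1)/(\ii T\lambda)$ and $\phi_T(0)=1$. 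I would split this double sum into a diagonal ($k=j$), an on-shell ($|k|^2=|j|^2$, $k\ne j$) and an off-shell ($|k|^2\ne|j|^2$) contribution. The diagonal gives exactly $\hat a(0)=\int_{\T_\ast^d}a(x)\,\dd x$, the desired limit.

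For the on-shell part, parametrize by $m=k-j\ne 0$: the constraint $|k|^2=|j|^2$ becomes $2k\cdot m=|m|^2$, an affine hyperplane, and a standard sublattice count gives $O(R_E^{d-1})$ lattice points on it inside $|k|\le R_E$, uniformly in $m$. Since $a\in H^s(\T_\ast^d)$ with $s>d/2$ ensures $\sum_m|\hat a(m)|<\infty$ by Cauchy--Schwarz, the on-shell term is bounded by $O(R_E^{-1})\sum_m|\hat a(m)|\to 0$. For the off-shell part, the key observation is that $l:=|k|^2-|j|^2\in\Z\setminus\{0\}$, so $|\phi_T(4\pi^2 l)|\le(2\pi^2 T|l|)^{-1}$. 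For fixed $m$, reparametrize by $l=2k\cdot m-|m|^2$: the same hyperplane count bounds the number of $k$ per $l$ by $O(R_E^{d-1})$, and $|k|,|j|\le R_E$ forces $|l|\lesssim R_E^2$. Summing $\sum_{l\ne 0}|\phi_T(4\pi^2 l)|\lesssim\log R_E/T$ and then over $m$ against $|\hat a(m)|$, the off-shell term is $O\bigl(\log R_E/(T R_E)\bigr)\sum_m|\hat a(m)|\to 0$ as $E\to\infty$ for fixed $T$. This establishes (1).

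For (2), the same decomposition applies, except that the diagonal now carries an extra factor $N_E^{-1}\sum_{|k|\le R_E}\ee^{2\pi\ii k\cdot(x-y)}$; for $x\ne y$ in $\T_\ast^d$ this is a Riemann-sum approximant of $|B_{R_E}|^{-1}\int_{B_{R_E}}\ee^{2\pi\ii\xi\cdot(x-y)}\,\dd\xi$, which decays like $R_E^{-(d+1)/2}$ by the Bessel-type asymptotics of the Fourier transform of the unit ball indicator. For (3), the hypothesis $\|\partial^\alpha a^E\|_\infty\le c\,E^r$ for $|\alpha|\le s$ yields $\sum_m|\hat{a^E}(m)|\le C\|a^E\|_{H^s}\le C'E^r$, which, plugged into the bounds of (1), preserves the $E\to\infty$ limit in the stated range of $r$. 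Part (4) follows from (1) by density: any $a\in C^0(\T_\ast^d)$ is approximated in sup norm by $H^s$ functions with $s>d/2$ via mollification, and since the $\mu_{y,T}^E$ are probability measures of total mass one, a standard $\varepsilon/3$ argument yields weak-$*$ convergence to $\dd x$.

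The main obstacle I anticipate is the off-shell estimate, where the uniform hyperplane lattice-point count and the logarithmic sum $\sum 1/|l|$ must both be controlled against the normalization $1/N_E$. This is effectively a quantum-ergodicity computation in which the high multiplicity of Laplace eigenvalues on $\T_\ast^d$ is tamed by an arithmetic (hyperplane-counting) argument, and essentially all the quantitative work concentrates at this step.
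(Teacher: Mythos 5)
Your decomposition into a diagonal, an on-shell ($|k|^2=|j|^2$, $k\neq j$) and an off-shell term is the same structural starting point as the paper's, and your on-shell hyperplane-counting estimate is identical. Your off-shell argument, however, is genuinely different and in fact sharper for $\T_\ast^d$: by exploiting that $|k|^2-|j|^2$ is a nonzero \emph{integer}, hence $\ge 1$ in absolute value, and then summing the resulting harmonic series over $|l|\lesssim R_E^2$, you obtain an $O(\log R_E/(TR_E))$ bound. The paper instead runs a Ces\`aro-type argument (splitting according to whether the eigenvalue gap is $\ge 2/\varepsilon$ or not, then optimizing $\varepsilon\asymp E^{-1/4}$), which yields only $O(E^{-1/4})$. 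Your bound is cleaner and quantitatively better, but it crucially uses the integrality of the gaps; the paper's softer argument is more robust and extends verbatim to general rectangular tori $\times[0,b_i)$, where gaps can be arbitrarily small --- which the paper uses in \S~\ref{sec:ctsgen}. For the theorem as stated, your route is perfectly valid.

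There is, however, a genuine flaw in your treatment of part (2). You assert that $N_E^{-1}\sum_{|k|\le R_E}\ee^{2\pi\ii k\cdot(x-y)}$ is a Riemann-sum approximant of $|B_{R_E}|^{-1}\int_{B_{R_E}}\ee^{2\pi\ii\xi\cdot(x-y)}\,\dd\xi$ and therefore inherits the Bessel decay $R_E^{-(d+1)/2}$. This does not hold: the mesh of the sum is $1$, but the frequency $\alpha=x-y$ is a fixed nonzero element of $\T_\ast^d$ and can be of order one, so the summand oscillates at the mesh scale and the Riemann sum does \emph{not} track the integral. (Compare $\sum_{|k|\le R}(-1)^k$ versus $\int_{-R}^R\ee^{\pi\ii\xi}\,\dd\xi$.) In general the lattice exponential sum over the ball can be as large as $O(R_E^{d-1})$, so after dividing by $N_E\sim R_E^d$ the correct rate is $O(R_E^{-1})$, not $R_E^{-(d+1)/2}$ --- strictly slower when $d\ge 2$. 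The conclusion that the term tends to zero is still correct, but the right argument is a direct slicing of the lattice sum: pick $i$ with $\alpha_i\neq 0$, bound each one-dimensional slice by the Dirichlet-kernel constant $1/|\sin\pi\alpha_i|$ uniformly in $E$, and multiply by the $O(R_E^{d-1})$ slices. This is what the paper does. Parts (3) and (4) are fine, with (4) handled by a density/mollification argument that is different from but equivalent to the paper's appeal to a portmanteau-type theorem.
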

Hence, averaging $a$ over $\dd\mu_{y,T}^E(x)=(\frac{1}{T}\int_0^T|\ee^{\ii t\Delta} \delta_y^E(x)|^2\dd t)\,\dd x$ is the same as averaging $a$ over the uniform measure $\dd x$, after the initial state becomes sufficiently localized.

Remarkably, equidistribution occurs immediately if we are initially sufficiently close to a Dirac distribution. We do not need to wait for large time $T$. Compare with \eqref{e:KW}.\footnote{One interpretation is that in semiclassical analysis, one considers evolutions of $-h^2\Delta$ instead, with $h\to 0$, so in this sense, we do study long times.}

The third point allows taking observables of shrinking support. This problem was recently studied by \cite{Han,HR}. More precisely, we can allow observables concentrated near any $x_0\in \T_\ast^d$, with support shrinking like $E^{-\beta}$ for $\beta = \frac{1}{2(d+1)}$. In fact, starting from any fixed smooth $a$ supported in a ball $B_R$ around the origin, define $a^{E,x_0}(x) = a(E^\beta(x-x_0))$. This is supported in the ball $B_{R E^{-\beta}}(x_0)$. It satisfies $\partial_{x_i}^k a^{E,x_0} = E^{\beta k} \partial_{x_i}^k a(E^{\beta}(x-x_0))$, hence $\|\partial_{x_i}^k a^{E,x_0}\|_\infty \le E^{\beta k} \|\partial_{x_i}^ka\|_\infty$. We thus need $\beta s<\frac{1}{4}$ for point (3). We also need $s>d/2$ to respect the assumptions. Choosing $s=\frac{d}{2}+\frac{1}{4}$, we see that $\beta = \frac{1}{2(d+1)}$ suffices.

This gives an even more precise result of the intuition that the mass of $|\ee^{\ii t\Delta} \delta_y^E(x)|^2\,\dd x$ equidistributes on average. It says that this equidistribution property remains true if we zoom in near any point $x_0\in \T_\ast^d$.

Interestingly, $(\ee^{\ii t \Delta} \delta_y^E)(x) = \frac{1}{\sqrt{N_E}}\sum_{\lambda_\ell \le E} \overline{e_\ell(y)} \ee^{-\ii t\lambda_\ell} e_\ell(x)$ with $\lambda_\ell=4 \pi^2 \ell^2$ is a normalized \emph{truncated theta function} if $d=1$, which is important in number theory \cite{Hia,Kuz}.

Theorem~\ref{thm:ctstor} extends immediately to more general energy cutoffs of $\delta_y$, i.e. there is some flexibility in the choice of $\delta_y^E$. We can also consider more general tori $\T=\times_{i=1}^d[0,b_i)$. See \S~\ref{sec:ctsgen} for details. As before,

\begin{lem}\label{lem:tav}
Time averaging is necessary, even when $\lim_{E\to\infty} \langle \ee^{\ii t\Delta}\delta_y^E,a\ee^{\ii t\Delta}\delta_y^E\rangle$ exists, it generally depends on the value of $t$ and it may not be equal to $\int a(x)\,\dd x$.
\end{lem}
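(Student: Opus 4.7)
The plan is to evaluate the unaveraged limit at the single time $t=0$ and exhibit an observable and an initial point for which this limit differs from $\int a$; the $t$-dependence will then follow by comparison with Theorem~\ref{thm:ctstor}(1). Since the pathology already appears on the circle, it is enough to treat the one-dimensional torus $\T_\ast^1=[0,1)$.

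First I would expand $\ee^{\ii t\Delta}\delta_y^E(x)=\frac{1}{\sqrt{N_E}}\sum_{\lambda_\ell\le E}\ee^{-\ii t\lambda_\ell}\overline{e_\ell(y)}e_\ell(x)$ in the Fourier basis $e_\ell(x)=\ee^{2\pi\ii\ell x}$, with $\lambda_\ell=4\pi^2\ell^2$, and regroup the product by $k=m-\ell$ to obtain
\[
\langle \ee^{\ii t\Delta}\delta_y^E,\, a\,\ee^{\ii t\Delta}\delta_y^E\rangle = \sum_{k\in\Z}\hat a(k)\,\ee^{2\pi\ii ky}\cdot\frac{1}{N_E}\sum_{\ell\in A_E(k)}\ee^{\ii t(\lambda_{\ell+k}-\lambda_\ell)},
\]
with $A_E(k)=\{\ell\in\Z:\lambda_\ell,\lambda_{\ell+k}\le E\}$ and $\hat a(k)=\int a(x)\ee^{-2\pi\ii kx}\,\dd x$. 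At $t=0$ the inner sum equals $|A_E(k)|$, and $|A_E(k)|/N_E\to 1$ for each fixed $k$ since $|A_E(k)|=N_E-|k|$ in dimension one. The assumption $a\in H^s(\T_\ast^1)$ with $s>1/2$ yields $\sum_k|\hat a(k)|<\infty$ by Cauchy--Schwarz, so dominated convergence gives
\[
\lim_{E\to\infty}\langle\delta_y^E,\, a\,\delta_y^E\rangle=\sum_{k\in\Z}\hat a(k)\,\ee^{2\pi\ii ky}=a(y),
\]
by Fourier inversion at the continuity point $y$ (valid since $a\in H^s\hookrightarrow C^0$).

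To close the argument I would take $a(x)=\cos(2\pi x)$ and $y=0$, so that the $t=0$ limit is $a(0)=1$ while $\int a=0$; this already establishes that the unaveraged limit may differ from $\int a$. For the $t$-dependence, the identity $\lambda_{\ell+k}-\lambda_\ell=4\pi^2(2\ell k+k^2)$ turns the inner sum into $\ee^{4\pi^2\ii tk^2}\sum_{\ell\in A_E(k)}\ee^{8\pi^2\ii tk\ell}$, an explicit geometric series of modulus $O(|1-\ee^{8\pi^2\ii tk}|^{-1})$. For any $t\notin\frac{1}{4\pi}\Q$ this bound is $O(1)$ for every $k\ne 0$, hence the inner normalized sum vanishes after division by $N_E\sim\sqrt E$, and therefore $\lim_{E\to\infty}\langle\ee^{\ii t\Delta}\delta_y^E, a\,\ee^{\ii t\Delta}\delta_y^E\rangle=\hat a(0)=0$ for a.e.\ such $t$, in stark contrast with the value $1$ obtained at $t=0$.

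No step constitutes a serious obstacle: the only nontrivial calculation is the elementary $O(1)$ bound on the geometric series for $t\ne 0$ and $k\ne 0$. The lemma is essentially a direct consequence of the $t=0$ computation, which exhibits that without time averaging the dynamics retains memory of the initial localization at $y$.
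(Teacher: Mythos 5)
Your proof is correct and follows a genuinely different route from the paper's. The paper fixes the single observable $a(x)=e_1(x)=\ee^{2\pi\ii x}$, reduces the quantity to the explicit sum $\frac{e_1(y)}{N_E}\sum_{\lambda_\ell,\lambda_{\ell+1}\le E}\ee^{4\pi^2\ii t(2\ell+1)}$, and then evaluates it along two rational time families: $t=n/(4\pi)$, where the $E$-limit is $(-1)^n e_1(y)$, and $t=(2n+1)/(8\pi)$, where the $E$-limit is $0$; the disagreement and the fact that the first limit is not $\int a$ give the lemma. You instead prove two structural facts valid for all $a\in H^s(\T_\ast^1)$: at $t=0$ the unaveraged $E$-limit is exactly $a(y)$ (Fourier inversion plus dominated convergence, using $\sum_k|\hat a(k)|<\infty$), while for every $t$ with $4\pi t\notin\Q$ it equals $\hat a(0)=\int a$, via the geometric-series bound $O(|1-\ee^{8\pi^2\ii tk}|^{-1})$ for each $k\neq 0$ and the same $\ell^1(\Z)$ domination to pass the limit inside the $k$-sum. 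Specializing to $a=\cos(2\pi x)$, $y=0$ then gives $1\neq 0$ and the $t$-dependence. Your approach buys a cleaner conceptual picture (without averaging, the unevolved state remembers the localization through $a(y)$, whereas generic times erase it), and it shows the lemma holds for essentially any nonconstant observable; the paper's buys a fully explicit two-family computation with only rational arithmetic. Two small remarks: your ``for a.e.\ such $t$'' can be sharpened to ``for all $t$ with $4\pi t\notin\Q$'', since your geometric-series bound already covers all such $t$; and for general $a\in H^s$ (rather than your final trigonometric polynomial) you should state explicitly that the passage to $\hat a(0)$ for $4\pi t\notin\Q$ again rests on the $\sum_k|\hat a(k)|$ domination, as the $k\neq 0$ bound $|1-\ee^{8\pi^2\ii tk}|^{-1}$ is not itself summable in $k$.
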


We next give a result by approximating in position space.

\begin{thm}\label{thm:cont1}
Fix any $y\in \T_\ast^d$ and consider $\phi_y^\varepsilon = \frac{1}{\sqrt{\varepsilon_1\varepsilon_2\cdots\varepsilon_d}} \mathbf{1}_{\times_{i=1}^d [y_i,y_i+\varepsilon_i]}$. Then for any $a\in H^s(\T_\ast^d)$, $s>d/2$, $T>0$,
\[
\lim_{\varepsilon\downarrow 0} \frac{1}{T}\int_0^T \langle \ee^{\ii t\Delta} \phi_y^\varepsilon, a \ee^{\ii t\Delta} \phi_y^\varepsilon\rangle\,\dd t = \int_{\T_\ast^d} a(x)\,\dd x\,,
\]
where $\varepsilon\downarrow 0$ means more precisely that $\varepsilon_i\downarrow 0$ for each $i$. If $x\neq y$, then
\[
\lim_{\varepsilon\downarrow 0} \frac{1}{T}\int_0^T \langle \ee^{\ii t\Delta} \phi_x^\varepsilon, a \ee^{\ii t\Delta} \phi_y^\varepsilon\rangle\,\dd t = 0\,.
\]

Furthermore, the probability measure $\dd\mu_{y,T}^\varepsilon(x)=(\frac{1}{T}\int_0^T |\ee^{\ii t\Delta} \phi_y^\varepsilon(x)|^2\,\dd t)\dd x$ on $\T_\ast^d$ converges weakly to the uniform measure $\dd x$ as $\varepsilon\downarrow 0$.
\end{thm}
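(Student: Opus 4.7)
The plan is to expand in the Fourier eigenbasis $e_k(x):=\ee^{2\pi\ii k\cdot x}$, $k\in\Z^d$, of $-\Delta$ on $\T_\ast^d$ (with eigenvalues $4\pi^2|k|^2$), and reduce the $\varepsilon\downarrow 0$ limit to a measure estimate for the set of ``resonant'' times at which a translate of the little box $B_\varepsilon(y):=\prod_{i=1}^d[y_i, y_i+\varepsilon_i]$ meets itself on the torus. Set $V_\varepsilon:=\varepsilon_1\cdots\varepsilon_d$, $c_k:=\langle e_k,\phi_y^\varepsilon\rangle = V_\varepsilon^{-1/2}\int_{B_\varepsilon(y)}\ee^{-2\pi\ii k\cdot u}\,\dd u$, and $\tilde a(m):=\int_{\T_\ast^d}a(x)\ee^{2\pi\ii m\cdot x}\,\dd x$. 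Substituting the Fourier expansions of $\phi_y^\varepsilon$ and $a$ and changing variables $l=k+m$, the time average takes the form
\[
\frac{1}{T}\int_0^T \langle\ee^{\ii t\Delta}\phi_y^\varepsilon, a\,\ee^{\ii t\Delta}\phi_y^\varepsilon\rangle\,\dd t = \sum_{m\in\Z^d}\tilde a(m)\,S_m(\varepsilon,T),
\]
with $S_m(\varepsilon,T):=\frac{1}{T}\int_0^T \ee^{-4\pi^2\ii t|m|^2}\sum_k\overline{c_k}c_{k+m}\ee^{-8\pi^2\ii t\,k\cdot m}\,\dd t$. Applying Parseval to the translate $\mathbf{1}_{B_\varepsilon(y)}(\cdot+\xi)$ against the modulation $\ee^{-2\pi\ii m\cdot(\cdot)}\mathbf{1}_{B_\varepsilon(y)}(\cdot)$ yields the central identity
\[
\sum_{k\in\Z^d}\overline{c_k}\,c_{k+m}\,\ee^{-2\pi\ii k\cdot\xi} \;=\; \frac{1}{V_\varepsilon}\int_{B_\varepsilon(y)\cap(B_\varepsilon(y)-\xi)}\ee^{-2\pi\ii m\cdot v}\,\dd v
\]
on $\T_\ast^d$. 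Specializing to $m=0$, $\xi=0$ gives $S_0(\varepsilon,T)=\sum_k|c_k|^2=1$, so the $m=0$ term contributes exactly $\tilde a(0)=\int a(x)\,\dd x$, which is the desired main term.

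For $m\neq 0$, plugging $\xi=4\pi tm$ into the identity and bounding the inner integral by $V_\varepsilon^{-1}|B_\varepsilon(y)\cap(B_\varepsilon(y)-4\pi tm)|\le 1$ gives
\[
|S_m(\varepsilon,T)| \;\le\; \frac{1}{T}\bigl|\{t\in[0,T]:\,B_\varepsilon(y)\cap(B_\varepsilon(y)-4\pi tm)\neq\emptyset \text{ in }\T_\ast^d\}\bigr|.
\]
The intersection is nonempty iff each coordinate $(4\pi tm_i)\bmod 1$ lies in a window of width $2\varepsilon_i$. Since $t\mapsto(4\pi tm_i)\bmod 1$ wraps the circle at speed $4\pi|m_i|$, picking any $i_0$ with $m_{i_0}\neq 0$ bounds this measure by $2T\varepsilon_{i_0}+O_m(\varepsilon_{i_0})$; hence $S_m(\varepsilon,T)\to 0$ as $\varepsilon\downarrow 0$ for every fixed $m\neq 0$ and $T>0$. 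Combined with the trivial bound $|S_m|\le 1$ and the absolute summability $\sum_m|\tilde a(m)|\le\bigl(\sum_m\langle m\rangle^{-2s}\bigr)^{1/2}\|a\|_{H^s}<\infty$ (which is where $s>d/2$ enters), dominated convergence yields
\[
\Bigl|\frac{1}{T}\int_0^T \langle\ee^{\ii t\Delta}\phi_y^\varepsilon, a\,\ee^{\ii t\Delta}\phi_y^\varepsilon\rangle\,\dd t - \int_{\T_\ast^d}a(x)\,\dd x\Bigr| \;\le\; \sum_{m\neq 0}|\tilde a(m)|\,|S_m(\varepsilon,T)| \;\longrightarrow\; 0
\]
as $\varepsilon\downarrow 0$, proving the first assertion.

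The cross term in point (2) is handled by the identical Parseval computation, now producing $V_\varepsilon^{-1}\int_{B_\varepsilon(y)\cap(B_\varepsilon(x)-\xi)}\ee^{-2\pi\ii m\cdot v}\,\dd v$ with effective shift $\xi=4\pi tm$. For $x\neq y$ fixed, once $\max_i\varepsilon_i$ is smaller than the torus distance between $x$ and $y$ in some coordinate, even the $m=0$ contribution vanishes outright (the two boxes are then disjoint for \emph{all} $t$), while the $m\neq 0$ bookkeeping is unchanged. Point (3) then follows from (1) by standard approximation: trigonometric polynomials are dense in $C(\T_\ast^d)$ in the sup-norm and lie in $H^s$ for every $s$, and the $\mu_{y,T}^\varepsilon$ are probability measures. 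The delicate step is the torus measure estimate---tracking the wrap-around of $t\mapsto 4\pi tm\bmod\Z^d$ coordinate by coordinate; the remainder is routine Fourier/Parseval bookkeeping plus dominated convergence.
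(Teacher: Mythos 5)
Your proof is correct, and it takes a genuinely different route from the paper's. The paper (via its more general Theorem~\ref{thm:ctspogen}) works with abstract tensor-product states $\phi_\varepsilon=\bigotimes_i\phi_{\varepsilon_i}$ satisfying only the Fourier-coefficient condition $\sup_r|\langle\phi_{\varepsilon_i},e_r\rangle|\to 0$, and splits the $\ell$-sum into a resonant part ($\lambda_{\ell+m}=\lambda_\ell$) and a non-resonant part, handling the first via Cauchy--Schwarz on the tensor factors and the second via a uniform bound on $\sum_r(2r+m^2)^{-2}$. You instead specialize to the box indicator, use the Parseval/Poisson identity to convert the inner $k$-sum back into the volume of a shifted self-intersection of the box, and then reduce everything---resonant and non-resonant contributions alike---to a single geometric estimate on the measure of ``resonant'' times for the linear flow $t\mapsto 4\pi t m\bmod\Z^d$ through the $2\varepsilon$-window. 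This is a clean unification and gives an explicit $O(\varepsilon_{i_0}+\varepsilon_{i_0}/T)$ rate for each fixed $m$, but it is tied to the specific indicator and does not recover the paper's more flexible class of regularizers. Your treatment of the cross term is right: the $m=0$ piece dies once the boxes are disjoint, and the same time-measure estimate kills every $m\neq 0$ uniformly in the shift between $x$ and $y$. The weak-convergence step by approximating $C(\T_\ast^d)$ observables with trigonometric polynomials against probability measures is standard and equivalent to the paper's appeal to Klenke.

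One small point worth tightening if you write this up: the identity and the ``window of width $2\varepsilon_i$'' argument need to be interpreted on the torus (the boxes may wrap around $[0,1)^d$), which you acknowledge but should state as the condition $(4\pi t m_i)\bmod 1\in(-\varepsilon_i,\varepsilon_i)\pmod 1$ coordinate by coordinate; this is exactly what your counting of circle revolutions implements, so there is no gap, only a sentence to add.
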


This theorem says that once our initial state is close enough (in position space) to a Dirac mass, then its averaged dynamics will become equidistributed.

Theorem~\ref{thm:cont1} is actually valid for a more general class of initial states $(\phi_\varepsilon)$, see \S~\ref{sec:conpo}.

Theorems~\ref{thm:ctstor} and \ref{thm:cont1} remain true if we first take the limit over $T$ and then over $E$, i.e.
\begin{equation}\label{e:oldcont}
\lim_{E\to\infty}\lim_{T\to\infty} \frac{1}{T}\int_0^T \langle \ee^{\ii t\Delta} \delta_y^E, a \ee^{\ii t\Delta} \delta_y^E\rangle\,\dd t = \int_{\T_\ast^d} a(x)\,\dd x\,.
\end{equation}
This is in fact easier to prove, see Remark~\ref{rem:cts2lim}. What is remarkable is that we don't need to consider a large time in Theorems~\ref{thm:ctstor} and \ref{thm:cont1} for equidistribution to occur. Also notice that the limits over $E$ and $T$ can be interchanged, in view of the theorems and \eqref{e:oldcont}.

\subsection{Sphere dynamics}\label{sec:insphe}
As we mentioned in the introduction, classical evolution on the sphere is far from ergodic. We have a confirmation of this as follows.

\begin{thm}\label{thm:sphe}
Fix $\xi\in\mathbb{S}^{d-1}$. There exists a normalized approximate Dirac distribution $S_\xi^{(n)}$ such that $\lim_{T\to\infty} \frac{1}{T}\int_0^T |\ee^{\ii t\Delta} S_\xi^{(n)}(\eta)|^2\,\dd t$ is not equidistributed on the sphere as $n\to\infty$ and actually diverges for $\eta=\xi$.

If $d=3$, there exists an observable $a$ such that the analog of \eqref{e:oldcont} is violated.
\end{thm}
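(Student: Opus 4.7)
My plan is to construct $S_\xi^{(n)}$ as a spectrally truncated Dirac mass at $\xi$, in the spirit of the construction $\delta_y^E$ used on the torus in Theorem~\ref{thm:ctstor}, and to exploit the very high multiplicities of Laplacian eigenvalues on the sphere. Concretely, set
\begin{equation*}
S_\xi^{(n)} = \frac{1}{\sqrt{N_n}}\sum_{k=0}^n Z_k(\xi,\cdot)\,,\qquad N_n = \sum_{k=0}^n \dim\mathcal{H}_k\,,
\end{equation*}
where $Z_k(\xi,\cdot)$ is the reproducing kernel of the eigenspace $\mathcal{H}_k$ of spherical harmonics of degree $k$, so that $\|S_\xi^{(n)}\|_{L^2}=1$. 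Since the eigenvalues $\lambda_k = k(k+d-2)$ are pairwise distinct across $k$, expanding $\ee^{\ii t\Delta}S_\xi^{(n)}$ in spherical harmonics and time-averaging kills every off-diagonal oscillatory cross term and leaves
\begin{equation*}
\lim_{T\to\infty}\frac{1}{T}\int_0^T |\ee^{\ii t\Delta}S_\xi^{(n)}(\eta)|^2\,\dd t = \frac{1}{N_n}\sum_{k=0}^n |Z_k(\xi,\eta)|^2\,.
\end{equation*}

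At $\eta=\xi$ the diagonal reproducing identity $Z_k(\xi,\xi)=\dim\mathcal{H}_k$ reduces this to $\frac{1}{N_n}\sum_{k=0}^n(\dim\mathcal{H}_k)^2$. For $d\ge 3$ the elementary asymptotics $\dim\mathcal{H}_k\asymp k^{d-2}$ and $N_n\asymp n^{d-1}$ make this grow like $n^{d-2}\to\infty$, proving the divergence assertion. Since the time-averaged density is nonnegative, integrates to $1$, and diverges pointwise at $\xi$, its mass must concentrate near $\xi$ (and, by reflection symmetry on $\mathbb{S}^2$, also near $-\xi$), ruling out weak convergence to the uniform measure $\dd\bar\eta$.

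For the second part on $\mathbb{S}^2$ I would exhibit an explicit observable by computing the Cesàro-type limit of the time-averaged probabilities. Using the addition formula $Z_k(\xi,\eta)=(2k+1)P_k(\xi\cdot\eta)$ and writing $\bar a(\theta)$ for the axial average of $a$ around the $\xi$-axis,
\begin{equation*}
\int_{\mathbb S^2} a(\eta)\,\frac{1}{N_n}\sum_{k=0}^n |Z_k(\xi,\eta)|^2\,\dd\bar\eta(\eta)= \frac{1}{N_n}\sum_{k=0}^n \frac{(2k+1)^2}{2}\int_{-1}^1 \bar a(\arccos x)\,P_k(x)^2\,\dd x\,.
\end{equation*}
The Laplace--Darboux asymptotic $P_k(\cos\theta)\sim\sqrt{2/(\pi k \sin\theta)}\cos((k+\tfrac12)\theta-\pi/4)$ implies that $\tfrac{2k+1}{2}P_k(x)^2\,\dd x$ converges weakly on $[-1,1]$ to the arcsine measure $\dd x/(\pi\sqrt{1-x^2})$. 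Since the weights $(2k+1)/N_n$ are nonnegative, sum to $1$, and charge the large-$k$ range, a Cesàro argument yields
\begin{equation*}
\lim_{n\to\infty}\int_{\mathbb S^2} a\,\dd\nu_n = \int_0^\pi \bar a(\theta)\,\frac{\dd\theta}{\pi}\,,
\end{equation*}
whereas the uniform spherical average is $\int a\,\dd\bar\eta = \int_0^\pi \bar a(\theta)\,\frac{\sin\theta}{2}\,\dd\theta$. Choosing $a(\eta)=(\xi\cdot\eta)^2$ gives $\bar a(\theta)=\cos^2\theta$, with limit $\tfrac12$ from the first formula and uniform average $\tfrac13$ from the second, violating the analog of \eqref{e:oldcont}.

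The main technical obstacle is making the weak convergence of $\tfrac{2k+1}{2}P_k^2\,\dd x$ to the arcsine measure sufficiently uniform in $k$ to survive the Cesàro sum, and in particular to control the contribution of $x$ close to $\pm 1$ (where the bulk asymptotic degenerates). The global bound $|P_k|\le 1$, the identity $\int_{-1}^1 P_k^2\,\dd x=\tfrac{2}{2k+1}$, and Mehler--Heine-type estimates near the endpoints are enough to split the integral into a bulk region where the Darboux expansion gives a quantitative error and an endpoint region of vanishing total $L^1$-mass; averaging over $k$ via the weights $(2k+1)/N_n$ then finishes the argument.
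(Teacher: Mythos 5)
Your overall plan works, but it takes a genuinely different route from the paper's proof, and one intermediate assertion is wrong.

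The paper does not use the sharp spectral cutoff $\frac{1}{\sqrt{N_n}}\sum_{k\le n}Z_\xi^{(k)}$. It instead takes $S_\xi^{(n)} = \frac{1}{\sqrt{M_{n,d}}}\sum_{k=0}^n \mu_{n,k,d} Z_\xi^{(k)}$ with de la Vall\'ee-Poussin-type weights $\mu_{n,k,d}=\frac{n!(n+d-2)!}{(n-k)!(n+k+d-2)!}\le 1$, chosen precisely because the corresponding unnormalized kernel satisfies $\langle R_\xi^{(n)},f\rangle\to f(\xi)$ for every \emph{continuous} $f$, making it a genuine $\delta$-sequence (your sharp cutoff only reproduces $f(\xi)$ for sufficiently smooth $f$, since the Lebesgue constants of spherical-harmonic partial sums grow). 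With those weights, the blow-up at $\xi$ is no longer elementary: for $d=3$ the paper has to evaluate $\sum_k\mu_{n,k,3}^2$ via Dixon's identity to get the rate $n^{1/4}$. For the observable, the paper again avoids asymptotics of Legendre polynomials and instead uses the exact three-term recurrence to compute $\int_{-1}^1 t^2P_k(t)^2\,\dd t$ in closed form. Your route (Darboux asymptotics of $P_k$ plus a Ces\`aro argument with the weights $(2k+1)/N_n$) is arguably cleaner for the second part because your weights concentrate on large $k$; the paper's weights $\mu_{n,k,3}^2$ concentrate on small $k$, which is exactly why the paper must work harder there. So: same circle of ideas, different initial state, different technique for the observable, and you avoid Dixon's identity at the cost of a weaker Dirac property.

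Two concrete corrections. First, your normalization is off: since $\|Z_\xi^{(k)}\|^2=N_{k,d}/|\mathbb{S}^{d-1}|$, one has $\|\frac{1}{\sqrt{N_n}}\sum_{k\le n}Z_\xi^{(k)}\|^2=1/|\mathbb{S}^{d-1}|$, so you need the prefactor $\sqrt{|\mathbb{S}^{d-1}|/N_n}$; this rescales all subsequent formulas by a harmless constant. Second, and more importantly, the sentence ``Since the time-averaged density is nonnegative, integrates to $1$, and diverges pointwise at $\xi$, its mass must concentrate near $\xi$ \dots, ruling out weak convergence to the uniform measure'' is a non sequitur. Pointwise blow-up at a single point is compatible with weak convergence to the uniform measure: on $[0,1]$ the densities $g_n=1-\frac{1}{n}+n\mathbf{1}_{[0,n^{-2}]}$ are nonnegative, have total mass $1$, diverge at $0$, yet $g_n\,\dd x\rightharpoonup\dd x$. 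So divergence at $\eta=\xi$ by itself only shows that $p_\xi^{(n)}$ does not converge to the constant density in any pointwise or $L^\infty$ sense (which is all the paper's Lemma on the density claims), not that the induced measures fail to converge weakly to uniform. The genuine obstruction to equidistribution in your write-up is supplied only by the explicit observable $a(\eta)=(\xi\cdot\eta)^2$ giving $1/2\neq 1/3$; you should rely on that and drop the concentration heuristic. With those fixes, and provided you make the Ces\`aro step quantitative near the endpoints $x=\pm 1$ as you sketch, the argument is sound.
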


\subsection{Earlier results and perspectives}\label{sec:ew}
The time evolution of quantum walks is a central topic. Let us mention \cite{Kar,CLR} in relation to mixing time. We are not aware of earlier works showing equidistribution, and it would be very interesting to see which discrete-time quantum walks satisfy this phenomenon. We mention \cite{MSA} in which the evolution of a Grover walk in large boxes in $\Z^2$ was shown to localize, and \cite{DD} where (non)-thermalization of fullerene graphs was investigated. See also \cite{ST} for thermalization in a free fermion chain.

A study of the quantum dynamics on the torus appeared previously in the more general setting of Schr\"odinger operators in \cite{AM14,Maci}. It is shown in \cite{AM14} that if $(u_n)$ is a sequence in $L^2(\T_\ast^d)$ such that $\|u_n\| = 1$, if $H = -\Delta+ V$ is a Schr\"odinger operator on $\T_\ast^d$ and if $\dd \mu_n(x) = (\int_0^1 |(\ee^{-\ii tH} u_n)(x)|^2\,\dd t)\,\dd x$, then any weak limit of $\mu_n$ is absolutely continuous.

This is much broader than our framework. As a special case, this result implies that any weak limit of the measures $\mu_{y,T}^E$ and $\mu_{y,T}^\varepsilon$ in Theorems~\ref{thm:ctstor} and \ref{thm:cont1} is absolutely continuous. More general statements regarding Wigner distributions can also be found in \cite{AM14}.

At this level of generality, this result cannot be improved to ensuring convergence to the uniform measure. For example, if we take $d=1$, $V=0$ and $u_n(x) = \sqrt{2}\cos(2\pi x)$ for all $n$, then $\dd \mu_n(x) = (\int_0^1 2|\ee^{-\ii t (4\pi^2)} \cos(2\pi x)|^2\,\dd t)\,\dd x = 2\cos^2(2\pi x)\,\dd x$, which is not the uniform measure. Similarly, if we take $u_n(x) = \sqrt{2}\cos(2\pi x)$ for even $n$ and $u_n(x) = \sqrt{2}\sin(2\pi x)$ for odd $n$, we see that $\mu_n$ does not converge, having two limit points.

%Hence, the convergence of the whole sequence $\mu_{y,T}^E$ and $\mu_{y,T}^\varepsilon$ in Theorems~\ref{thm:ctstor} and \ref{thm:cont1}, and the fact that the limit is the uniform measure, does not follow from this general statement. Our choice of $u_n$ as a normalized $\delta$-sequence is natural for our physical purposes, and also necessary in some sense.

As this preprint was being circulated, Maxime Ingremeau and Fabricio Maci\`a explained to us that it is possible to prove the first point in each of Theorem~\ref{thm:ctstor} and \ref{thm:cont1} by first computing the semiclassical measures of $(\delta_y^E)$ and $(\phi_y^\varepsilon)$, which lives in phase space $T^\ast \T_\ast^d$. There should be a unique limit for each sequence, of the form $\mu_0(\dd x,\dd \xi)=\delta_y f(\xi)\,\dd \xi$, which does not charge the resonant frequencies, so one could apply \cite[(8) and Prp. 1]{Maci}, which rely on the microlocal analysis developed in \cite{Ma09}. It seems this can work even in the presence of a potential if one uses \cite[Th. 3]{AM14}. Our proof on the other hand is very simple, using explicit computations, we make no use of microlocal analysis as our framework is more special, because we had different aims in mind. Concerning the sphere, one could consider sequences of the form $\rho_y^h(x)=h^{-d/2}\rho(\frac{x-y}{h})$, where $\rho$ is an $L^2$ coordinate chart. The semiclassical measure will have a similar form. Using \cite[Thm. 4]{Ma09}, one can then deduce that $\frac{1}{T}\int_0^T |\ee^{\ii t\Delta}\rho_y^h(x)|^2\,\dd t$ will be absolutely continuous as $h\to 0$ (more precisely, a weighted superposition of uniform orbit measures modulated by $|\widehat{\rho}(\xi)|^2$). In the same spirit, one can work a bit and use \cite[Prp. 2.2.(i) and Th.4.3]{MaRi16} to deduce that $\frac{1}{T}\int_0^T |\ee^{\ii t\Delta}S_\xi^{(n)}(\eta)|^2\,\dd t$ is absolutely continuous as $n\to \infty$. This is not exactly our aim in Theorem~\ref{thm:sphe}, but is an interesting complementary information.

%In the same spirit, one may try to reprove Theorem~\ref{thm:sphe} by computing the semiclassical measure of $(S_\xi^{(n)})$ and apply \cite[Thm. 4]{Ma09}. We have not attempted to compute these semiclassical measures to verify this, as this would require an appendix in itself. As we will see in this paper, our arguments use a much simpler machinery; using microlocal analysis to prove them seems overkill.

As for negative curvature, the authors in \cite{AR12} consider the case of a compact Anosov manifold $M$ and show that if $\delta_y^h$ is an $h$-truncated Dirac distribution, then as $h\to 0$, $\frac{1}{T}\int_0^T \langle \ee^{\ii t\Delta/2} \delta_y^h, \mathrm{Op}_h(a) \ee^{\ii t\Delta/2}\delta_y^h\rangle\,\dd t \approx \int_{S^\ast M} a\,\dd L$ \emph{for most $y$}. This is another instance where equidistribution occurs immediately once $\delta_y^h$ becomes close enough to $\delta_y$: there is no need to take $T\to\infty$. The fact that it holds for most $y$ means more precisely that the volume of $y\in M$ where this doesn't hold vanishes as $h\to0$. In this respect, the fact that our equidistribution results for the torus (and graphs) hold \emph{for each $y$} is worth emphasizing.

We finally mention the paper \cite{Schu} in the context of the evolution of Lagrangian states. These are localized in speed rather than in position.

As we mentioned earlier, there is a large literature on eigenfunction quantum ergodicity, in particular \cite{MR,Zel}. It is natural to ask if this property is related to the present quantum dynamical picture. We discuss this in Appendix~\ref{app:a}. In particular, while there are several proofs of eigenfunction ergodicity for regular graphs with few cycles \cite{A}, it is not very clear how to prove the dynamical criterion in that context; this seems like an interesting direction for future considerations.

\section{Case of the integer lattice}\label{sec:pro}

Here we prove Theorem~\ref{thm:zd} and \ref{thm:zdexten} and Propositions~\ref{prp:invertlimzd}--\ref{prp:ves}. Throughout, $a:=a_N$.

\begin{proof}[Proof of Theorem~\ref{thm:zd}]
Consider the orthonormal basis $e_m^{(N)}(n) = \frac{1}{N^{d/2}}\ee^{2\pi\ii m\cdot n/N}$. Given $\psi = \sum_{\ell\in \Lambda_N} \psi_\ell^{(N)} e_\ell^{(N)}$, where $\psi_\ell^{(N)}=\langle e_\ell^{(N)},\psi\rangle$, since $A_N e_k^{(N)} = \lambda_k^{(N)} e_k^{(N)}$, for $\lambda_k^{(N)} = \sum_{i=1}^d 2\cos(\frac{2\pi k_i}{N})$, we have $\ee^{-\ii tA_N}\psi = \sum_{\ell\in \Lambda_N} \psi_\ell^{(N)} \ee^{-\ii t\lambda_\ell^{(N)}} e_\ell^{(N)}$. 

Expand $a = \sum_{m \in \Lambda_N} a_m^{(N)} e_m^{(N)}$. Then using that $e_m^{(N)}e_\ell^{(N)} = \frac{1}{N^{d/2}} e_{\ell+m}^{(N)}$, we obtain
\[
(\ee^{\ii tA_N} a\ee^{-\ii tA_N} \psi)(n) = \sum_{\ell,m\in\Lambda_N} \psi_\ell^{(N)} \ee^{\ii t(\lambda_{\ell+m}^{(N)}-\lambda_\ell^{(N)})} a_m^{(N)} e_m^{(N)}(n) e_\ell^{(N)}(n)\,.
\]

In particular, as $\psi_\ell^{(N)} = \overline{e_\ell^{(N)}(v)}$ for $\psi = \delta_v$, we get
\begin{align}\label{e:scalex}
\langle \ee^{-\ii tA_N} \delta_v, a\ee^{-\ii tA_N} \delta_v\rangle &= (\ee^{\ii tA_N} a\ee^{-\ii tA_N} \delta_v)(v) \nonumber\\
&=\frac{1}{N^d}\sum_{\ell,m\in\Lambda_N} \ee^{\ii t(\lambda_{\ell+m}^{(N)}-\lambda_\ell^{(N)})} a_m^{(N)} e_m^{(N)}(v)\nonumber\\
&= \frac{1}{N^d} \sum_{w\in \Lambda_N} a(w) + \frac{1}{N^d} \sum_{\substack{m\in\Lambda_N\\ m\neq 0}}  a_m^{(N)} e_m^{(N)}(v)\sum_{\ell\in \Lambda_N}  \ee^{\ii t(\lambda_{\ell+m}^{(N)}-\lambda_\ell^{(N)})} \,,
\end{align}
where we used that $a_0^{(N)}e_0^{(N)}(v) = \frac{1}{N^d}\sum_{w\in \Lambda_N} a(w)$ for any $v$. 

If $\lambda_{\ell+m}^{(N)}\neq \lambda_\ell^{(N)}$ then $\frac{1}{T}\int_0^T \ee^{\ii t(\lambda_{\ell+m}^{(N)}-\lambda_\ell^{(N)})}\,\dd t = \frac{\ee^{\ii T(\lambda_{\ell+m}^{(N)}-\lambda_{\ell}^{(N)})}-1}{T(\lambda_{\ell+m}^{(N)}-\lambda_\ell^{(N)})}\to 0$ as $T\to\infty$. Thus,
\begin{multline}\label{e:limdis}
\lim_{T\to\infty} \frac{1}{T} \int_0^T \langle \ee^{-\ii tA_N}\delta_v,a\ee^{-\ii tA_N}\delta_v\rangle - \langle a\rangle \\
= \frac{1}{N^d}\sum_{\substack{m\in\Lambda_N\\ m\neq 0}} a_m^{(N)} e_m^{(N)}(v)\cdot \#\{ \ell\in \Lambda_N:\lambda_{\ell+m}^{(N)}=\lambda_\ell^{(N)}\} \,.
\end{multline}

Let $A_m = \{ \ell\in \Lambda_N:\lambda_{\ell+m}^{(N)}=\lambda_\ell^{(N)}\}$. We show that 
\begin{equation}
 \#A_m \le 2N^{d-1} \,.
\end{equation}

We have $A_m = \{\ell \in [\![0,N-1 ]\!]^d : \sum_{j=1}^d \cos(\frac{2\pi(\ell_j+m_j)}{N})-\cos(\frac{2\pi \ell_j}{N}) = 0\}$. Consider the projection of this surface onto a plane. More precisely, suppose $m_j\neq 0$ and consider $P_{e_j^\bot} \ell = (\ell_1,\dots,\ell_{j-1},0,\ell_{j+1},\dots,\ell_d)$. Suppose $n,k \in A_m$ and $P_{e_j^\bot} n = P_{e_j^\bot} k$. Then $n_i = k_i$ for all $i\neq j$. So $\cos(\frac{2\pi(n_j+m_j)}{N})-\cos(\frac{2\pi n_j}{N}) = - \sum_{i\neq j} \cos(\frac{2\pi(n_i+m_i)}{N})-\cos(\frac{2\pi n_i}{N}) = - \sum_{i\neq j} \cos(\frac{2\pi(k_i+m_i)}{N})-\cos(\frac{2\pi k_i}{N}) = \cos(\frac{2\pi(k_j+m_j)}{N})-\cos(\frac{2\pi k_j}{N})$. Since $\cos\theta-\cos\varphi = -2\sin(\frac{\theta+\varphi}{2})\sin(\frac{\theta-\varphi}{2})$, this implies that $\sin\pi(\frac{2n_j+m_j}{N})\sin\frac{\pi m_j}{N} = \sin\pi(\frac{2k_j+m_j}{N})\sin\frac{\pi m_j}{N}$. Since $m_j\in [\![1,N-1 ]\!]$, this implies $\sin\pi(\frac{2n_j+m_j}{N}) = \sin\pi(\frac{2k_j+m_j}{N})$. But $\frac{2n_j+m_j}{N}\le 3$ and $\frac{2k_j+m_j}{N}\le 3$. So we must have $\pi\frac{2n_j+m_j}{N} = \pi\frac{2k_j+m_j}{N}$ or $\pi - \pi\frac{2k_j+m_j}{N}$ or $2\pi + \pi\frac{2k_j+m_j}{N}$. This leads to $n_j = k_j$ or $n_j = \frac{N}{2}-k_j-m_j$ or $n_j = N+k_j$. The last case is excluded as $n_j<N$.

We thus showed that any $(n_1,\dots,n_{j-1},0,n_{j+1},\dots,n_d)$ has at most two preimages within $A_m$ under the mapping $P_{e_j^\bot}$. This implies that $\# A_m \le 2N^{d-1}$ for any $m\neq 0$.

In the special case $a(w) = \ee^{2\pi\ii k\cdot w/N} = N^{d/2} e_k^{(N)}(w)$, we have $a_m^{(N)} = 0$ for $m\neq k$ and $a_k^{(N)} = N^{d/2}$. So the RHS in \eqref{e:limdis} reduces to
\[
\ee^{2\pi\ii k\cdot v/N}\cdot  \frac{\#A_k}{N^d}\to 0 \,.
\]

More generally, suppose $a_N(n) = f(n/N)$ for some $f\in H_s(\T_\ast^d)$, with $s>d/2$. Here $H_s(\T_\ast^d)$ is the Sobolev space of order $s$, with norm $\|f\|_{H^s}^2 = \sum_{k\in \Z^d} |\hat{f}_k|^2\langle k\rangle^{2s}$, where $\hat{f}_k = \int_{\T_\ast^d} \ee^{-2\pi\ii k\cdot x}f(x)\,\dd x$ and $\langle k\rangle = \sqrt{1+|k|^2}$. Then $\|\hat{f}\|_1:=\sum_k |\hat{f}_k| \le C_s \|f\|_{H^s}$, where $C_s^2=\sum_k\langle k\rangle^{-2s}<\infty$ since $2s>d$. On the other hand, $f = \sum_k \hat{f}_k e_k$ with $e_k(x) = \ee^{2\pi\ii k\cdot x}$, so $a_m^{(N)} = \langle e_m^{(N)}, f(\cdot/N)\rangle_{\ell^2(\Lambda_N)} = \sum_{k\in \Z^d} \hat{f}_k \langle e_m^{(N)},e_k(\cdot/N)\rangle_{\ell^2(\Lambda_N)} = \hat{f}_m N^{d/2}$, since $e_k(n/N) = N^{d/2} e_k^{(N)}(n)$.

We showed that $a_m^{(N)} e_m^{(N)}(v) = \hat{f}_m \ee^{2\pi\ii m\cdot v/N}$. Thus,
\[
\left|\lim_{T\to\infty} \frac{1}{T} \int_0^T \langle \ee^{-\ii tA_N}\delta_v,a\ee^{-\ii tA_N}\delta_v\rangle - \langle a\rangle\right|\le \frac{2}{N} \sum_{m\in\Lambda_N} |a_m^{(N)} e_m^{(N)}(v)| \le \frac{2}{N} \|\hat{f}\|_1 \le \frac{C}{N}\|f\|_{H^s}\to 0 .
\]

The estimate is also true if $a_N$ is the restriction to $\Lambda_N$ of some $a\in\ell^1(\Z^d)$. In that case, we have $a = \sum_{n\in \Z^d} c_n \delta_n$ with $\|a\|_1=\sum_{n\in \Z^d} |c_n|<\infty$. On the other hand, $a_m^{(N)} = \sum_{n\in \Z^d} c_n \langle e_m^{(N)},\delta_n\rangle = \sum_{n\in \Z^d} c_n \overline{e_m^{(N)}(n)}$. So $|a_m^{(N)} e_m^{(N)}(v)| \le \frac{1}{N^d} \|a\|_1$, hence $\sum_{m\in \Lambda_N} |a_m^{(N)} e_m^{(N)}(v)| \le \|a\|_1$ and we may conclude as before.

Note that the two cases ($a_N=f(\cdot/N)$ and $a\in \ell^1(\Z^d)$) are distinct, in the sense that $\lim_{N\to\infty} \sum_{n\in \Lambda_N} |f(n/N)| = \infty$ in general.
\end{proof}

\begin{proof}[Proof of Theorem~\ref{thm:zdexten}]
Arguing as before, we find that
\[
\langle \ee^{-\ii tA_N}\delta_v, a\ee^{-\ii tA_N}\delta_w\rangle = \frac{1}{N^d}\sum_{\ell,m\in\Lambda_N} \ee^{\frac{2\pi\ii\ell\cdot(v-w)}{N}}\ee^{\ii t(\lambda_{\ell+m}^{(N)}-\lambda_\ell^{(N)})}a_m^{(N)} e_m^{(N)}(v)
\]
Here, the term $m=0$ is $\frac{1}{N^d}\sum_{\ell\in\Lambda_N} \ee^{\frac{2\pi\ii\ell\cdot(v-w)}{N}} a_0 e_0(v) = 0$ since $v\neq w$. The remaining terms $\sum_{m\neq 0,\ell\in \Lambda_N}$ tend to zero as $T$ followed by $N$ tend to infinity by the same argument as before (the phase $\ee^{\frac{2\pi\ii\ell\cdot(v-w)}{N}}$ makes no difference). This proves the first part.

For the second part, assume $\phi,\psi$ are supported in a compact $K\subset \Lambda_N$, $N$ large enough. Then $\phi = \sum_{v\in K} \phi(v)\delta_v$ and $\psi = \sum_{v\in K} \psi(v)\delta_v$. Thus,
\[
\langle \ee^{-\ii tA_N}\phi, a\ee^{-\ii tA_N}\psi\rangle = \sum_{v,w\in K}  \overline{\phi(w)}\psi(v)\langle \ee^{-\ii tA_N}\delta_v, a\ee^{-\ii tA_N}\delta_w\rangle\,.
\]

Hence,
\begin{multline*}
\langle \ee^{-\ii tA_N}\phi, a\ee^{-\ii tA_N}\psi\rangle - \langle a\rangle\langle \phi,\psi\rangle  = \sum_{v\in K}  \overline{\phi(v)}\psi(v)\left(\langle \ee^{-\ii tA_N}\delta_v, a\ee^{-\ii tA_N}\delta_v\rangle - \langle a\rangle\right) \\
+ \sum_{v,w\in K,v\neq w} \overline{\phi(w)}\psi(v)\langle \ee^{-\ii tA_N}\delta_w, a\ee^{-\ii tA_N}\delta_v\rangle  \,.
\end{multline*}

We see that \eqref{e:ephipsi} follows  from \eqref{e:maindiszd} and \eqref{e:delvw}.
\end{proof}

\begin{proof}[Proof of Proposition~\ref{prp:invertlimzd}]
Take $a_N(n) = f(n/N)$ for $f(x)=\prod_{i=1}^d(1-x_i)$ on $\T_\ast^d$. Then $f\ge 0$. Now
\[
\langle \ee^{-\ii tA_N} \delta_v, a_N \ee^{-\ii tA_N}\delta_v\rangle = \sum_{n\in \Lambda_N} a_N(n) |\ee^{-\ii tA_N}\delta_v(n)|^2 = \sum_{n\in \Z^d} f\Big(\frac{n}{N}\Big) \chi_{\Lambda_N}(n) |\ee^{-\ii tA_N}\delta_v(n)|^2\,,
\]
where we extend $f$ to $\R^d$ arbitrarily in a continuous fashion and we define $\ee^{-\ii tA_N}\delta_v(n):=0$ for $n\notin \Lambda_N$. 

Now $\ee^{-\ii tA_N}\delta_v(w) \to \ee^{-\ii t\cA_{\Z^d}}\delta_v(w)$ for any $w$. This can be seen for example from the explicit expression of the kernels through the Fourier transform, which shows that if $\phi(x)=\sum_{i=1}^d 2\cos 2\pi x_i$ for $x\in \T_\ast^d$, then $\ee^{-\ii tA_N}(w,v)= \frac{1}{N^d}\sum_{n\in \Lambda_N} \ee^{2\pi\ii (w-v)\cdot n/N}\ee^{-\ii t \phi(n/N)}$ for $v,w\in\Lambda_N$ and $\ee^{-\ii t\cA_{\Z^d}}(w,v) = \int_{\T_\ast^d} \ee^{2\pi\ii (w-v)\cdot x}\ee^{-\ii t\phi(x)}\,\dd x$.

We thus have $\lim_{N\to\infty} f(n/N) \chi_{\Lambda_N}(n) |\ee^{-\ii tA_N} \delta_v(n)|^2 = f(0) |\ee^{-\ii t\cA_{\Z^d}}\delta_v(n)|^2$ for any $n$. So by Fatou's lemma,
\begin{equation}\label{e:fatou}
\liminf_{N\to\infty} \langle \ee^{-\ii tA_N}\delta_v, a_N\ee^{-\ii tA_N}\delta_v\rangle \ge \sum_{n\in \Z^d} f(0)|\ee^{-\ii t\cA_{\Z^d}}\delta_v(n)|^2 = 1\,,
\end{equation}
where we used $f(0)=1$ and $\sum_n |\ee^{-\ii t\cA_{\Z^d}}\delta_v(n)|^2 = \|\ee^{\ii t\cA_{\Z^d}} \delta_v\|^2 = \|\delta_v\|^2=1$.

On the other hand, by Riemann integration, $ \langle a_N\rangle = \frac{1}{N^d}\sum_{n\in \Lambda_N} f(n/N) \to \int_{\T_\ast^d} f(x)\,\dd x = \frac{1}{2^d}$. This proves the result.

The statement holds for the average dynamics since $\ee^{-\ii tA_N}\delta_v(w) \to \ee^{-\ii t\cA_{\Z^d}}\delta_v(w)$ implies $\frac{1}{T}\int_0^T |\ee^{-\ii tA_N}\delta_v(w)|^2\,\dd t \to \frac{1}{T}\int_0^T |\ee^{-\ii t\cA_{\Z^d}}\delta_v(w)|^2\,\dd t$, as $|\ee^{-\ii tA_N}\delta_v(w)|^2 \le \|\ee^{-\ii tA_N}\|^2=1$. The (averaged) lower bound \eqref{e:fatou} still holds by Tonelli's theorem.
\end{proof}

\begin{proof}[Proof of Proposition~\ref{prp:noavzd}]
Take $d=1$ and $a_N(x) = \ee^{2\pi\ii x/N}$, so that $a_m = \sqrt{N}\delta_{m,1}$. Since $\lambda_\ell = 2\cos \frac{2\pi \ell}{N}$, \eqref{e:scalex} reduces to $\frac{\ee^{\frac{2\pi\ii v}{N}}}{N}\sum_{\ell=0}^{N-1} \ee^{-4\ii t \sin\frac{\pi}{N}(2\ell+1)\sin\frac{\pi}{N}}$. Specializing to $t=n\in \N$, it is shown\footnote{In \cite[Lemma C.2]{KleiBal} it is assumed the sum takes the form $\Gamma(n) = \sum_{j=1}^t c_j \ee^{2\pi\ii n \theta_j}$ for some distinct $\theta_j\in [0,1)$. In our case, for $N>4$, $|4\sin \frac{\pi}{N}|<\pi$, so $\vartheta_j := \frac{-4\sin\frac{\pi}{N}(2j+1)\sin\frac{\pi}{N}}{2\pi}\in(-\frac{1}{2},\frac{1}{2})$. If $N_1>1$ is the number of distinct $\vartheta_j$, then we may rearrange our sum as $\frac{\ee^{\frac{2\pi\ii v}{N}}}{N}\sum_{\ell=1}^{N_1} c_\ell \ee^{2\pi\ii n\vartheta_\ell}$ and the proof is the same.} in \cite[Lemma C.2]{KleiBal} that this sum has no limit as $n\to\infty$. In particular, $\langle \ee^{-\ii tA_N} \delta_v, a_N\ee^{-\ii tA_N}\delta_v\rangle$ has no limit as $t\to\infty$.

The same lemma shows that $\int_{T-1}^T \langle \ee^{-\ii tA_N} \delta_v, a_N\ee^{-\ii tA_N}\delta_v\rangle\,\dd t$ has no limit as $T\to\infty$. Here the  expression becomes $\frac{\ee^{\frac{2\pi\ii v}{N}}}{N}\sum_{\ell=0}^{N-1}  \ee^{\ii T b_\ell}\frac{(1-\ee^{-\ii b_\ell})}{\ii b_\ell}$ for $b_\ell = -4\sin\frac{\pi}{N}(2\ell+1)\sin\frac{\pi}{N}$.
\end{proof}

\begin{proof}[Proof of Proposition~\ref{prp:ves}]
If $N$ is odd, consider $a_N(n) = 2\cos (\frac{2\pi n_1}{N})$. Then $a_m^{(N)} = N^{d/2}$ if $m = \pm \mathfrak{e}_1$, where $\mathfrak{e}_1=(1,0,\dots,0)$, and $a_m^{(N)}=0$ otherwise. So the RHS \eqref{e:limdis} reduces to
\[
 \ee^{\frac{2\pi\ii v_1}{N}}\frac{\#\{\ell\in\Lambda_N: \lambda_{\ell+\mathfrak{e}_1}^{(N)}=\lambda_\ell^{(N)}\}}{N^d} + \ee^{\frac{-2\pi\ii v_1}{N}} \frac{\#\{\ell\in\Lambda_N: \lambda_{\ell-\mathfrak{e}_1}^{(N)}=\lambda_\ell^{(N)}\}}{N^d}\,.
\]
Since $\lambda_k^{(N)} = \sum_{i=1}^d 2\cos \frac{2\pi k_i}{N}$, we have $\lambda_{\ell+\mathfrak{e}_1}^{(N)} = \lambda_\ell^{(N)}$ iff $\cos(\frac{2\pi(\ell_1+1)}{N}) = \cos(\frac{2\pi\ell_1}{N})$, i.e. $\sin\pi(\frac{2\ell_1+1}{N})\sin\frac{\pi}{N}=0$. This occurs iff $\frac{2\ell_1+1}{N} = 0,1,2$, i.e. $\ell_1 = \frac{-1}{2}$, $\frac{N-1}{2}$ or $\frac{2N-1}{2}$, respectively. The only choice in $\{0,\dots,1\}$ is $\ell_1 = \frac{N-1}{2}$. Since $\ell_j$ can be arbitrary for $j\ge 2$, we see that $\frac{\#\{\ell\in\Lambda_N: \lambda_{\ell+\mathfrak{e}_1}^{(N)}=\lambda_\ell^{(N)}\}}{N^d} = \frac{1}{N}$. Similarly, $\lambda_{\ell-\mathfrak{e}_1}^{(N)} = \lambda_\ell^{(N)}$ iff $\frac{2\ell_1-1}{N} = 0,1,2$, and the only valid choice is $\ell_1 = \frac{N+1}{2}$. We thus showed that the RHS of \eqref{e:limdis} is $b_N(v) = \frac{2\cos(\frac{2\pi v_1}{N})}{N}$.

If $N$ is even, we take $a_N(n) = 2\cos(\frac{4\pi n_1}{N})$. Then $a_m^{(N)}=N^{d/2}$ if $m=\pm 2\mathfrak{e}_1$ and zero otherwise. Here, $\lambda_{\ell\pm 2\mathfrak{e}_1}^{(N)}=\lambda_\ell^{(N)}$ iff $\frac{2\ell_1\pm 2}{N}=0,1,2$, and we conclude as before that $b_N(v) = \frac{4\cos(\frac{4\pi v_1}{N})}{N}$.
\end{proof}

\begin{rem}\label{rem:t=n}
We can take $T$ to depend on $N$, provided it grows fast enough. To see this, back to \eqref{e:scalex}, we notice that in the expansion of $\frac{1}{T}\int_0^T\langle \ee^{\ii tA_N}\delta_v,a\ee^{-\ii tA_N}\delta_v\rangle\,\dd t$, we should now account for the additional term
\begin{equation}\label{e:t=tn}
\frac{1}{N^d}\sum_{\substack{m\in \Lambda_N,\\m\neq 0}} a_m^{(N)} e_m^{(N)}(v)\sum_{\substack{\ell\in\Lambda_N\\\lambda_{\ell+m}^{(N)}\neq \lambda_\ell^{(N)}}}\frac{1}{T}\cdot\frac{\ee^{\ii T(\lambda_{\ell+m}^{(N)}-\lambda_\ell^{(N)})}-1}{\ii(\lambda_{\ell+m}^{(N)}-\lambda_\ell^{(N)})} \,.
\end{equation}

This can be bounded crudely by $\sum_{m\neq 0} |a_m^{(N)}e_m^{(N)}(v)|\cdot \frac{2}{T}\cdot \sup_{\lambda_j^{(N)}\neq \lambda_k^{(N)}} |\lambda_j^{(N)}-\lambda_k^{(N)}|^{-1}$. We showed in the proof that $\sum_{m} |a_m^{(N)} e_m^{(N)}(v)|$ stays bounded for all $N$ for our choice of observables $a$. Thus, if $T = T(N)$ grows faster than the smallest spectral gap between distinct eigenvalues of $A_N$, the term \eqref{e:t=tn} will vanish as required as $N\to\infty$. For $d=1$, it suffices that $T$ grows faster than $N^2$.
\end{rem}

\section{Periodic graphs}\label{sec:pergra}

We here extend ergodicity to $\Z^d$-periodic graphs $\Gamma$. We assume there exist linearly independent vectors $\fa_1,\dots,\fa_d$ in a Euclidean space $\R^D$ such that, if $n_\fa = \sum_{i=1}^d n_i \fa_i$ and $\Z_\fa^d = \{n_\fa:n\in \Z^d\}$, then
\begin{equation}\label{e:vga}
V(\Gamma) = V_f + \Z_\fa^d\,,
\end{equation}
where $V_f$ is the fundamental cell containing a finite number $\nu$ of vertices, which is then repeated periodically under translations by $n_\fa\in \Z_\fa^d$.

For example, $\Gamma=\Z^d$ has $V_f=\{0\}$ and $\fa_j=\mathfrak{e}_j$ the standard basis. An infinite strip of width $k$ has $V_f = P_k$, the $k$-path, $d=1$ and $\fa_1=\mathfrak{e}_1$. See \cite{McKSa,SaYo} for more examples.

We endow $V_f$ with a potential $(Q_1,\dots,Q_\nu)$ and copy these values across the blocks $V_f+n_\fa$. This turns $Q$ into a periodic potential on $\Gamma$. We consider the Schr\"odinger operator $\cH = \cA_\Gamma+Q$.

From \eqref{e:vga}, any $u\in\Gamma$ takes the form $u=u_\fa+\{u\}_\fa$ for some $u_\fa\in\Z_\fa^d$ and $\{u\}_\fa\in V_f$.

Fix a large $N$ and let $\Gamma_N = \cup_{n\in \LL_N^d} (V_f+n_\fa)$, where $\LL_N^d = \{0,\dots,N-1\}^d$. We consider the restriction $H_N$ on $\Gamma_N$ with periodic boundary conditions. Then it holds that \cite{McKSa}, if $U:\ell^2(\Gamma_N)\to\mathop\oplus_{j\in\mathbb{L}_N^d} \ell^2(V_f)$ is defined by
\[
(U\psi)_j(v_i) = \frac{1}{N^{d/2}}\sum_{k\in \Lambda_N}\ee^{\frac{-2\pi\ii j\cdot k}{N}}\psi(v_i+k_\fa)\,,
\]
then $U$ is unitary and
\begin{equation}\label{e:dia}
UH_NU^{-1} = \mathop\oplus_{j\in\LL_N^d} H\Big(\frac{j_\fb}{N}\Big)\,,
\end{equation}
where $b_1,\dots,b_\nu$ is the dual basis of $(\fa_i)$ satisfying $\fa_i\cdot \fb_j = 2\pi \delta_{i,j}$, $n_\fb = \sum_{i=1}^d n_i \fb_i$ and
\[
H(\theta_\fb)f(v_i) = \sum_{u\sim v_i}\ee^{\ii\theta_\fb\cdot \lfloor u\rfloor_\fa}f(\{u\}_\fa) + Q_if(v_i) \,.
\]
Much like $\cA_{\Z^d}$ is unitarily equivalent to multiplication by a function that is a sum of cosines via the Fourier transform, \eqref{e:dia} is a finite version of the fact that $\cH$ is unitarily equivalent to multiplication by a $\nu\times \nu$ matrix $H(\theta_\fb)$ via the Floquet transform $U$. Denote by $E_s(\theta_\fb)$, $s=1,\dots,\nu$ the eigenvalues of $H(\theta_\fb)$.

As initial state, we consider $\psi = \delta_{v_p}\otimes\delta_{n_\fa}$, more precisely $\psi(v_i+k_\fa):= \delta_{v_p}(v_i)\delta_{n}(k)$ for $v_i\in V_f$ and $k\in \Z^d$. In other words, we start from a point mass. 

\begin{thm}\label{thm:pergra}
Assume that
\begin{equation}\label{e:flo}
\sup_{m\neq 0} \frac{\#\{(r,s,w)\in \LL_N^d\times \{1,\dots,\nu\}^2:E_s(\frac{r_{\fb}+m_{\fb}}{N})-E_w(\frac{r_{\fb}}{N})=0\}}{N^d}\to 0
\end{equation}
as $N\to\infty$. Suppose the observable $a_N$ satisfies one of the following conditions:
\begin{enumerate}[\rm(i)]
\item $a_N(k_\fa+v_q) = f^{(q)}(k/N)$ for some $\nu$ functions $f^{(q)}\in H^s(\T_\ast^d)$, with $s>d/2$,
\item or, $a_N$ is the restriction to $\Gamma_N$ of an integrable function $a\in\ell^1(\Gamma)$.
\end{enumerate}
Then
\[
\lim_{N\to\infty}\Big|\lim_{T\to\infty}\frac{1}{T}\int_0^T\langle \ee^{-\ii tH_N}\delta_{v_p}\otimes\delta_{n_\fa}, a\ee^{-\ii tH_N}\delta_{v_p}\otimes \delta_{n_\fa}\rangle \\
- \langle a\rangle_p\Big|=0\,,
\]
where, denoting $\langle a(\cdot+v_q)\rangle:= \frac{1}{N^d} \sum_{n\in \LL_N^d} a(n_\fa+v_q)$, 
\begin{equation}\label{e:avp}
\langle a\rangle_p = \frac{1}{N^d}\sum_{r\in\LL_N^d} \sum_{q=1}^\nu \langle a(\cdot+v_q)\rangle \sum_{s=1}^{\nu'} \Big|\Big[P_{E_s}\Big(\frac{r_{\fb}}{N}\Big)\delta_{v_q}\Big](v_p)\Big|^2\,.
\end{equation}
\end{thm}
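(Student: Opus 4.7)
The plan is to adapt the strategy of the proof of Theorem~\ref{thm:zd}, replacing the Fourier basis of $\ell^2(\Lambda_N)$ by a Bloch--Floquet eigenbasis of $H_N$ on $\ell^2(\Gamma_N)$ built from \eqref{e:dia}. For each $r\in\LL_N^d$, pick an orthonormal basis $(\chi_s(r_\fb/N))_{s=1}^\nu$ of $\ell^2(V_f)$ consisting of eigenvectors of $H(r_\fb/N)$ with eigenvalues $E_s(r_\fb/N)$. Applying $U$ one verifies that the vectors
\[
\Phi_{r,s}(v_i+k_\fa)\;=\;\frac{1}{N^{d/2}}\,\ee^{2\pi\ii r\cdot k/N}\,\chi_s(r_\fb/N)(v_i)
\]
form an orthonormal eigenbasis of $H_N$ with $H_N\Phi_{r,s}=E_s(r_\fb/N)\Phi_{r,s}$, and that $\psi=\delta_{v_p}\otimes\delta_{n_\fa}$ has expansion coefficients $c_{r,s}:=\langle\Phi_{r,s},\psi\rangle = N^{-d/2}\,\ee^{-2\pi\ii r\cdot n/N}\,\overline{\chi_s(r_\fb/N)(v_p)}$, in particular $|c_{r,s}|\le N^{-d/2}$.

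Expanding $\langle\ee^{-\ii tH_N}\psi,a_N\ee^{-\ii tH_N}\psi\rangle$ in this basis and averaging over $t\in[0,T]$ kills every phase $\ee^{\ii t(E_s(r_\fb/N)-E_{s'}(r'_\fb/N))}$ with nonzero frequency. Setting $r'=r+m$, the surviving sum splits into a \emph{principal term} $m=0$ and an \emph{error} $m\neq 0$. For $m=0$, a direct computation shows that the plane waves in the matrix element cancel, giving
\[
\langle\Phi_{r,s},a_N\Phi_{r,s'}\rangle\;=\;\sum_{q=1}^\nu\overline{\chi_s(r_\fb/N)(v_q)}\,\chi_{s'}(r_\fb/N)(v_q)\,\langle a(\cdot+v_q)\rangle.
\]
Substituting and grouping the pairs $(s,s')$ with $E_s(r_\fb/N)=E_{s'}(r_\fb/N)$ by their common eigenvalue $E_\alpha$, one uses the identity
\[
\sum_{s,s':E_s=E_{s'}=E_\alpha}\chi_s(v_p)\overline{\chi_{s'}(v_p)}\,\overline{\chi_s(v_q)}\chi_{s'}(v_q)\;=\;\bigl|[P_{E_\alpha}(r_\fb/N)\delta_{v_q}](v_p)\bigr|^2,
\]
where $P_{E_\alpha}$ is the spectral projector on the $E_\alpha$-eigenspace of $H(r_\fb/N)$; summing over $r\in\LL_N^d$ reproduces exactly $\langle a\rangle_p$ of \eqref{e:avp}.

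For $m\neq 0$, one has $\langle \Phi_{r,s}, a_N \Phi_{r+m,s'}\rangle = N^{-d}\sum_q\overline{\chi_s}\,\chi_{s'}\sum_{k\in\Lambda_N}\ee^{2\pi\ii m\cdot k/N}a_N(v_q+k_\fa)$. Under hypothesis~(i), expanding $f^{(q)}=\sum_{r'\in\Z^d}\hat f^{(q)}_{r'}\ee^{2\pi\ii r'\cdot x}$ and using the orthogonality $\sum_{k\in\Lambda_N}\ee^{2\pi\ii(m+r')\cdot k/N}=N^d\mathbf{1}_{m+r'\equiv 0\,(\mathrm{mod}\,N)}$, together with Cauchy--Schwarz in $q$ (and $\|\chi_s\|_{\ell^2(V_f)}=1$), yields the uniform bound $|\langle\Phi_{r,s},a_N\Phi_{r+m,s'}\rangle|\le \max_q\sum_{r'\equiv -m\,(\mathrm{mod}\,N)}|\hat f^{(q)}_{r'}|$. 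Combining with $|\overline{c_{r,s}}\,c_{r+m,s'}|\le N^{-d}$ and hypothesis \eqref{e:flo}---which states that the coincidence set $\{(r,s,s'):E_{s'}((r+m)_\fb/N)=E_s(r_\fb/N)\}$ has cardinality $\le\varepsilon_N N^d$ with $\varepsilon_N\to 0$ uniformly in $m\neq 0$---the total error is bounded by
\[
\varepsilon_N\sum_{m\in\LL_N^d}\max_q\sum_{r'\equiv -m\,(\mathrm{mod}\,N)}|\hat f^{(q)}_{r'}|\;\le\;\varepsilon_N\sum_{q=1}^\nu\|\hat f^{(q)}\|_{\ell^1(\Z^d)}\;\xrightarrow[N\to\infty]{}\;0,
\]
since by Sobolev embedding $\|\hat f^{(q)}\|_{\ell^1}\lesssim\|f^{(q)}\|_{H^s}<\infty$ for $s>d/2$, exactly as in the proof of Theorem~\ref{thm:zd}. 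In case~(ii) one uses instead the crude bound $|\langle\Phi_{r,s},a_N\Phi_{r+m,s'}\rangle|\le N^{-d}\|a\|_{\ell^1(\Gamma)}$, and summing over the at most $N^d$ values of $m$ gives an error $\le\varepsilon_N\|a\|_{\ell^1(\Gamma)}\to 0$.

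The main technical obstacle is the algebraic identification of the $m=0$ diagonal with $\langle a\rangle_p$: one must correctly track possible degeneracies in the Floquet spectrum $\{E_s(r_\fb/N)\}_s$, so that the surviving sum is naturally expressed through spectral projectors onto distinct eigenvalues $E_1,\dots,E_{\nu'}$ (with $\nu'=\nu'(r)\le\nu$ depending on $r$). The hypothesis~\eqref{e:flo} then plays in the off-diagonal analysis exactly the role that the elementary counting bound $\#A_m\le 2N^{d-1}$ played for $\Z^d$, and it is precisely what forces the error to vanish as $N\to\infty$.
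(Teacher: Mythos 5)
Your proposal is correct and follows essentially the same strategy as the paper's proof. The only structural difference is that the paper invokes \cite[Lemma~2.2]{McKSa} as a black box for the Floquet-diagonalized form of $\frac{1}{T}\int_0^T \ee^{\ii tH_N}a\ee^{-\ii tH_N}\,\dd t\,\psi$, already expressed via the band projectors $P_s(\cdot)$, whereas you rederive this from scratch by constructing the explicit Bloch eigenbasis $\Phi_{r,s}$ and then regroup the surviving $m=0$ degenerate pairs $(s,s')$ into distinct-eigenvalue projectors at the end; the $m\neq 0$ error bound via \eqref{e:flo} plus the $\ell^1$-summability of the observable's Floquet coefficients is the same in both. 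Your version is slightly more self-contained (and a bit more careful about identifying $a_m^{(N)}$ with the aliased sum $\sum_{r'\equiv -m\,(\mathrm{mod}\,N)}\hat f^{(q)}_{r'}$, which the paper elides notationally), but the decomposition, the key identities, and the role of the Floquet condition are identical.
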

The Floquet condition \eqref{e:flo} was used as a requirement for quantum ergodicity in \cite{McKSa}. It is a bit stronger than asking that $H$ has purely absolutely continuous spectrum. We refer to \cite{McKSa} for numerous examples which satisfy \eqref{e:flo}.

In the special case $\langle a(\cdot+v_q)\rangle = \langle a(\cdot+v_1)\rangle$ $\forall q=1,\dots,\nu$, \eqref{e:avp} reduces to $\langle a(\cdot+v_1)\rangle$. In fact, we get
\[
\frac{1}{N^d}\langle a(\cdot+v_1)\rangle\sum_{r\in\LL_N^d}\sum_{s=1}^{\nu'} \Big\|P_{E_s}\Big(\frac{r_\fb}{N}\Big)\delta_{v_p}\Big\|^2 = \frac{1}{N^d} \langle a(\cdot+v_1)\rangle\sum_{r\in\LL_N^d} \|\delta_{v_p}\|^2= \langle a(\cdot+v_1)\rangle\,.
\]
This scenario occurs in particular if $a$ is locally constant, i.e. takes a fixed value on each periodic block $V_f+n_\fa$, which depends on $n$ but not on $v_q\in V_f$.

In general, \eqref{e:avp} gives not the uniform average of $a$, but a weighted average, with weights depending on $p$ and the spectral decomposition of the Floquet matrix. Note however that $\frac{1}{\nu}\sum_{p=1}^\nu \langle a\rangle_p = \frac{1}{\nu}\sum_{q=1}^\nu \langle a(\cdot+v_q)\rangle$ is the uniform average. So the mean density $\mu_{n,T}^{(N)}(u)=\frac{1}{T}\int_0^T \frac{1}{\nu}\sum_{p=1}^\nu |(\ee^{-\ii tH_N} \delta_{v_p}\otimes \delta_{n_\fa})(u)|^2\,\dd t$ on $\Gamma_N$ approaches the uniform measure $\frac{1}{\nu N^d}$ for $T,N\gg 0$.

\begin{exa}\label{exa:per}
Let $Q\equiv 0$. The average $\langle a\rangle_p$ is the uniform average if:
\begin{enumerate}[(i)]
\item $\nu=1$, for example $\Gamma=\Z^d$ or the triangular lattice. See \cite[\S~4.1]{McKSa} for more examples.
\item $\Gamma$ is the hexagonal lattice or $\Gamma$ is an infinite ladder (strip of width $2$). The argument is given in \cite[\S~4.2,4.3]{McKSa}.
\end{enumerate}
If $\Gamma$ is the infinite strip of width $3$, then $\langle a\rangle_p$ is not the uniform average. Here, $A(\theta_\fb) = \begin{pmatrix} c_\theta& 1&0\\ 1&c_\theta&1\\ 0&1&c_\theta\end{pmatrix}$ for $c_\theta = 2\cos 2\pi\theta$. The eigenvectors are independent of $\theta$ and given by $w_1 = \frac{1}{2}(1,\sqrt{2},1)$, $w_2=\frac{1}{\sqrt{2}}(-1,0,1)$ and $w_3 = \frac{1}{2}(1,-\sqrt{2},1)$ for $\lambda_1=c_\theta+\sqrt{2}$, $\lambda_2 = c_\theta$, $\lambda_3 = c_\theta-\sqrt{2}$. It follows that $(P_i\delta_{v_q})(v_p) = w_i(v_p)w_i(v_q)$.

See Figure~\ref{fig:strips}. Suppose we take $v_p=v_1$. Then
\[
\sum_{i=1}^3 |(P_i \delta_{v_q})(v_1)|^2 = \frac{|w_1(v_q)|^2}{4} + \frac{|w_2(v_q)|^2}{2} + \frac{|w_3(v_q)|^2}{4}\,.
\]
For $q=1,2,3$, this gives $\frac{3}{8}$, $\frac{1}{4}$ and $\frac{3}{8}$, respectively. So $\langle a\rangle_1 = \frac{3\langle a(\cdot+v_1)\rangle + 2\langle a(\cdot +v_2)\rangle + 3\langle a(\cdot + v_3)\rangle}{8}$, which is not the uniform average: there is more weight to both sides of the strip.

For comparison, suppose we take $v_p=v_2$, the central vertex. Then
\[
\sum_{i=1}^3 |(P_i \delta_{v_q})(v_2)|^2 = \frac{|w_1(v_q)|^2}{2} + \frac{|w_3(v_q)|^2}{2}\,.
\]
For $q=1,2,3$, this gives $\frac{1}{4}$, $\frac{1}{2}$ and $\frac{1}{4}$, respectively. So $\langle a\rangle_2 = \frac{\langle a(\cdot+v_1)\rangle + 2\langle a(\cdot +v_2)\rangle + \langle a(\cdot + v_3)\rangle}{4}$, which is not the uniform average either. There is more weight to the center of the strip.

More surprisingly perhaps, the spreading is not uniform in cylinders either, which are regular, very homogeneous graphs. 

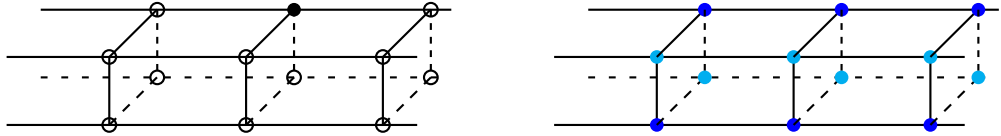
\begin{figure}[h!]
\begin{center}
\setlength{\unitlength}{0.9cm}
\thicklines
\begin{picture}(1.3,1.4)(-0.8,-0.8)
	 \put(-7,0){\line(1,0){6}}
	 \put(-7,-1){\line(1,0){6}}
	 \put(-5.5,-1){\line(0,1){1}}
	 \put(-3.5,-1){\line(0,1){1}}
	 \put(-1.5,-1){\line(0,1){1}}
	 \put(-3.5,-1){\circle{.2}}
	 \put(-3.5,0){\circle{.2}}
	 \put(-5.5,-1){\circle{.2}}
	 \put(-5.5,0){\circle{.2}}
	 \put(-1.5,-1){\circle{.2}}
	 \put(-1.5,0){\circle{.2}}
	 \put(-5.5,0){\line(1,1){0.7}}
	 \put(-3.5,0){\line(1,1){0.7}}
	 \put(-1.5,0){\line(1,1){0.7}}
	 \put(-6.5,0.7){\line(1,0){6}}
	 \multiput(-6.5,-0.3)(0.3,0){20}{\line(1,0){0.1}}
	 \put(-4.8,0.7){\circle{.2}}
	 \put(-2.8,0.7){\circle*{.2}}
	 \put(-0.8,0.7){\circle{.2}}
 	 \multiput(-4.8,0.7)(0,-0.2){5}{\line(0,-1){0.1}}
  \multiput(-2.8,0.7)(0,-0.2){5}{\line(0,-1){0.1}}
	 \multiput(-0.8,0.7)(0,-0.2){5}{\line(0,-1){0.1}}
	 \multiput(-5.5,-1)(0.2,0.2){4}{\line(1,1){0.1}}
	\multiput(-3.5,-1)(0.2,0.2){4}{\line(1,1){0.1}}
	\multiput(-1.5,-1)(0.2,0.2){4}{\line(1,1){0.1}}
	\put(-4.8,-0.3){\circle{.2}}
	\put(-2.8,-0.3){\circle{.2}}
	\put(-0.8,-0.3){\circle{.2}}
	 \put(1,0){\line(1,0){6}}
	 \put(1,-1){\line(1,0){6}}
	 \put(2.5,-1){\line(0,1){1}}
	 \put(4.5,-1){\line(0,1){1}}
	 \put(6.5,-1){\line(0,1){1}}
	 \put(4.5,-1){\textcolor{blue}{\circle*{.2}}}
	 \put(4.5,0){\textcolor{cyan}{\circle*{.2}}}
	 \put(2.5,-1){\textcolor{blue}{\circle*{.2}}}
	 \put(2.5,0){\textcolor{cyan}{\circle*{.2}}}
	 \put(6.5,-1){\textcolor{blue}{\circle*{.2}}}
	 \put(6.5,0){\textcolor{cyan}{\circle*{.2}}}
	 \put(2.5,0){\line(1,1){0.7}}
	 \put(4.5,0){\line(1,1){0.7}}
	 \put(6.5,0){\line(1,1){0.7}}
	 \put(1.5,0.7){\line(1,0){6}}
	 \multiput(1.5,-0.3)(0.3,0){20}{\line(1,0){0.1}}
	 \put(3.2,0.7){\textcolor{blue}{\circle*{.2}}}
	 \put(5.2,0.7){\textcolor{blue}{\circle*{.2}}}
	 \put(7.2,0.7){\textcolor{blue}{\circle*{.2}}}
 	 \multiput(3.2,0.7)(0,-0.2){5}{\line(0,-1){0.1}}
  \multiput(5.2,0.7)(0,-0.2){5}{\line(0,-1){0.1}}
	 \multiput(7.2,0.7)(0,-0.2){5}{\line(0,-1){0.1}}
	 \multiput(2.5,-1)(0.2,0.2){4}{\line(1,1){0.1}}
	\multiput(4.5,-1)(0.2,0.2){4}{\line(1,1){0.1}}
	\multiput(6.5,-1)(0.2,0.2){4}{\line(1,1){0.1}}
	\put(3.2,-0.3){\textcolor{cyan}{\circle*{.2}}}
	\put(5.2,-0.3){\textcolor{cyan}{\circle*{.2}}}
	\put(7.2,-0.3){\textcolor{cyan}{\circle*{.2}}}
\end{picture}
\caption{A point mass (left) spreads $\frac{3}{8}$ of its mass over both its line and the line diagonally opposite to it, and only $\frac{1}{8}$ of its mass on each of the other two lines (right). If the cylinder has size $4N$, then each dark blue vertex carries a mass $\frac{3}{8N}$ and each light blue vertex carries a mass $\frac{1}{8N}$.}\label{fig:carte}
\end{center}
\end{figure}

For example, for the $4$-cylinder in Figure~\ref{fig:carte}, we have $A(\theta_\fb) = c_\theta \mathrm{Id}_4 + \cA_{C_4}$, where $C_4$ is the $4$-cycle, so $A(\theta_\fb)$ shares the eigenvectors of $\cA_{C_4}$ given by
\[
\textstyle 2,\ 0,\ 0,\ -2, \qquad \frac{1}{2}(1, 1, 1, 1), \ \frac{1}{\sqrt{2}}(0, -1, 0, 1), \ \frac{1}{\sqrt{2}}(-1, 0, 1, 0), \ \frac{1}{2}(-1, 1, -1, 1),
\] 
respectively. If $w_i$ are the eigenvectors in this order, then the three eigenprojections are again independent of $\theta$ (this holds in general for Cartesian products such as $\Z^d\mathop\square G_F$, with $G_F$ finite) and given by $(P_{E_1}\delta_{v_q})(v_p) = w_1(v_p)w_1(v_q)$, $(P_{E_3} \delta_{v_q})(v_p) = w_4(v_p)w_4(v_q)$ and $(P_{E_2}\delta_{v_q})(v_p) = w_2(v_p)w_2(v_q)+w_3(v_p)w_3(v_q)$. Hence, $|(P_{E_1}\delta_{v_q})(v_p)|^2 = |(P_{E_3}\delta_{v_q})(v_p)|^2 = \frac{1}{16}$. We may assume $v_p=v_1$ by homogeneity. Then $\sum_{i=1}^3 |(P_{E_i}\delta_{v_q}(v_1)|^2 = \frac{1}{8} + \frac{|w_3(v_q)|^2}{2}$. For $q=1,2,3,4$, this gives $\frac{3}{8}$, $\frac{1}{8}$, $\frac{3}{8}$ and $\frac{1}{8}$, respectively. This is illustrated in Figure~\ref{fig:carte}.

It was observed in \cite{McKSa} that some eigenbases of the cylinder are uniformly distributed while others are not. We see that having one equidistributed eigenbasis is not enough to obtain the dynamic equidistribution that we discuss in this paper. This is in contrast to the folklore physics heuristics of \S~\ref{sec:et}.
\end{exa}

\begin{proof}[Proof of Theorem~\ref{e:avp}]
It is shown in \cite[Lemma 2.2]{McKSa} that
\[
\frac{1}{T}\int_0^T \ee^{\ii tH_N}a\ee^{-\ii tH_N}\,\dd t \psi(k_{\fa}+v_i) = \sum_{r\in\LL_N^d}	\sum_{\ell=1}^\nu (U\psi)_r(v_\ell)  F_T(k,r;v_i,v_\ell) e_r^{(N)}(k)\,,
\]
where
\begin{multline}\label{e:ft}
F_T(k,r;v_i,v_\ell): = \sum_{m\in \LL_N^d}\sum_{q,s,w=1}^\nu  \frac{1}{T}\int_0^T\ee^{\ii t[E_s(\frac{r_{\fb}+m_{\fb}}{N})-E_w(\frac{r_{\fb}}{N})]}\,\dd t \, \\
\times P_s\Big(\frac{r_{\fb}+m_{\fb}}{N}\Big)(v_i,v_q)a_m^{(N)}(v_q)P_w\Big(\frac{r_{\fb}}{N}\Big)(v_q,v_\ell)e_{m}^{(N)}(k)\,,
\end{multline}
and $a_m^{(N)}(v_q) = \langle e_m^{(N)},a(\cdot_\fa+v_q)\rangle = \sum_{n\in\LL_N^d} \overline{e_m^{(N)}(n)}a(n_\fa+v_q)$.

If $\psi = \delta_{v_p}\otimes\delta_{n_\fa}$, then $(U\psi)_r(v_\ell) = \frac{1}{N^{d/2}} \delta_{v_p}(v_\ell) \ee^{\frac{-2\pi\ii r\cdot n}{N}}$. Hence,
\begin{equation}\label{e:ipe}
\frac{1}{T}\int_0^T \ee^{\ii tH_N}a\ee^{-\ii tH_N}\,\dd t \delta_{v_p}\otimes \delta_{n_\fa}(k_{\fa}+v_i) = \frac{1}{N^{d/2}}\sum_{r\in\LL_N^d} \ee^{\frac{-2\pi\ii r\cdot n}{N}}  F_T(k,r;v_i,v_p) e_r^{(N)}(k)\,.
\end{equation}
Since $\langle A\delta_{v_p}\otimes \delta_{n_\fa},B \delta_{v_p}\otimes \delta_{n_\fa}\rangle = (A^\ast B\delta_{v_p}\otimes \delta_{n_\fa})(v_p+n_\fa)$, we consider
\[
\frac{1}{T}\int_0^T \ee^{\ii tH_N}a\ee^{-\ii tH_N}\,\dd t \delta_{v_p}\otimes \delta_{n_\fa}(n_{\fa}+v_p) = \frac{1}{N^{d}}\sum_{r\in\LL_N^d}  F_T(n,r;v_p,v_p)\,.
\]

Taking the limit $T\to\infty$, this reduces to \cite{McKSa},
\begin{equation}\label{e:meanbkrp}
\frac{1}{N^{d}}\sum_{r\in\LL_N^d}  b(n,r;v_p,v_p)
\end{equation}
where, denoting $S_r = \{(m,s,w):E_s(\frac{r_\fb+m_\fb}{N})-E_w(\frac{r_\fb}{N})=0\}$,  we have
\begin{multline}\label{e:blim}
b(n,r,v_i,v_\ell) = \sum_{m\in \LL_N^d}\sum_{q,s,w=1}^\nu \mathbf{1}_{S_r}(m,s,w)P_s\Big(\frac{r_{\fb}+m_{\fb}}{N}\Big)(v_i,v_q) \\
\times a_m^{(N)}(v_q) P_w\Big(\frac{r_{\fb}}{N}\Big)(v_q,v_\ell)e_{m}^{(N)}(n)\,,
\end{multline}
If in \eqref{e:meanbkrp} we consider only the term $m=0$ from \eqref{e:blim}, with $v_i=v_\ell=v_p$, we get
\begin{multline}\label{e:m=0}
\frac{1}{N^d}\sum_{r\in\LL_N^d}\sum_{\substack{q,s,w=1\\E_s=E_w}}^\nu  P_s\Big(\frac{r_{\fb}}{N}\Big)(v_p,v_q)a_0^{(N)}(v_q)P_w\Big(\frac{r_{\fb}}{N}\Big)(v_q,v_p)e_{0}^{(N)}(n)\\
=\frac{1}{N^d}\sum_{r\in\LL_N^d} \sum_{q=1}^\nu \langle a(\cdot+v_q)\rangle \sum_{s=1}^{\nu'} P_{E_s}\Big(\frac{r_{\fb}}{N}\Big)(v_p,v_q)P_{E_s}\Big(\frac{r_{\fb}}{N}\Big)(v_q,v_p) = \langle a\rangle_p
\end{multline}
where $\nu'$ is the number of distinct eigenvalues. 
To prove the theorem, we should show that
\[
\frac{1}{N^d}\sum_{r\in\LL_N^d} \sum_{m\neq 0}\sum_{q,s,w=1}^\nu \mathbf{1}_{S_r}(m,s,w)P_s\Big(\frac{r_{\fb}+m_{\fb}}{N}\Big)(v_p,v_q) a_m^{(N)}(v_q) P_w\Big(\frac{r_{\fb}}{N}\Big)(v_q,v_p)e_{m}^{(N)}(n)\to 0
\]
Let $A_m = \{(r,s,w):E_s(\frac{r_{\fb}+m_{\fb}}{N})-E_w(\frac{r_{\fb}}{N})=0\}$. Then $(m,s,w)\in S_r\iff (r,s,w)\in A_m$ so the above is
\[
\frac{1}{N^d} \sum_{m\neq 0}\sum_{q=1}^\nu a_m^{(N)}(v_q)e_{m}^{(N)}(n) \sum_{r\in\LL_N^d}\sum_{s,w=1}^{
\nu} \mathbf{1}_{A_m}(r,s,w)P_s\Big(\frac{r_{\fb}+m_{\fb}}{N}\Big)(v_p,v_q) P_w\Big(\frac{r_{\fb}}{N}\Big)(v_q,v_p)\,.
\]
Assume $\sum_m\sum_{q=1}^\nu |a_m^{(N)}(v_q)e_m^{(N)}(n)|\le C_a$ (observable condition). By \eqref{e:flo}, $\sup_{m\neq 0} \frac{|A_m|}{N^d}\to 0$. Hence, the above tends to $0$ as required, since $|P_s(\theta_\fb)(v,w)|\le 1$. 

The observable condition is satisfied for the two classes we have. If $a_N(k_\fa+v_q) = f^{(q)}(k/N)$ with $f^{(q)}\in H^s(\T_\ast^d)$, $s>d/2$, then $a_m^{(N)}(v_q) = \langle e_m^{(N)},f^{(q)}(\cdot/N)\rangle_{\ell^2(\LL_N^d)} = \hat{f}_m^{(q)}N^{d/2}$. As before, this implies that $\sum_{m}\sum_q |a_m^{(N)}(v_q)e_m^{(N)}(n)|\le \sum_{q} \|\hat{f}^{(q)}\|_1$, which is finite.

The second scenario is that $a_N$ is the restriction to $\Gamma_N$ of some $a\in\ell^1(\Gamma)$. Here, $a = \sum_{n\in \Z^d}\sum_{q=1}^\nu c_{n,q}\delta_{n_\fa+v_q}$ with $\sum_{n,q}|c_{n,q}|<\infty$. Then $a_m^{(N)}(v_q) = \sum_{n\in \Z^d} c_{n,q} \overline{e_m^{(N)}(n)}$. This implies $|a_m^{(N)}(v_q)e_m^{(N)}(n)|\le \frac{1}{N^d}\|a\|_1$ for all $q$ implying the hypothesis.
\end{proof}

\begin{rem}[Another natural initial state]\label{rem:perini}
We may ask what happens if instead of starting from a point mass $\delta_{v_p}\otimes\delta_{n_\fa}$, our initial state is equally distributed on the fundamental set, that is $\psi_0 = \frac{1}{\sqrt{\nu}} \mathbf{1}_{V_f}\otimes \delta_{n_\fa}$ for some fixed $n\in \LL_N^d$. In case of the ladder for instance, this corresponds to a vector localized on the two vertices of $V_f$, each carrying mass $\frac{1}{\sqrt{2}}$. We will see that the limiting distribution is still not the uniform average in general. 

Revisiting the proof, we now have $(U\psi_0)_r(v_\ell) = \frac{1}{\sqrt{\nu}N^{d/2}} \ee^{\frac{-2\pi\ii r\cdot n}{N}}$, so the RHS of \eqref{e:ipe} becomes $\frac{1}{\sqrt{\nu}N^{d/2}}\sum_r \ee^{\frac{-2\pi\ii r\cdot n}{N}}\sum_{\ell=1}^\nu F_T(k,r,v_i,v_\ell) e_r^{(N)}(k)$. Here, we have$\langle \frac{1}{\sqrt{\nu}} \mathbf{1}_{V_f}\otimes \delta_{n_\fa},\phi\rangle = \frac{1}{\sqrt{\nu}}\sum_{i=1}^\nu \phi(n+v_i)$, so \eqref{e:meanbkrp} is replaced by $\frac{1}{\nu N^d} \sum_r\sum_{\ell,i=1}^\nu b(n,r,v_i,v_\ell)$. Consequently, instead of \eqref{e:m=0} we get $\frac{1}{\nu N^d}\sum_r \sum_{\ell,i=1}^\nu \sum_{q=1}^\nu \langle a(\cdot+v_q)\rangle \sum_{s=1}^{\nu'} P_{E_s}\big(\frac{r_{\fb}}{N}\big)(v_i,v_q)P_{E_s}\big(\frac{r_{\fb}}{N}\big)(v_q,v_\ell)$. This simplifies to
\[
\mathbf{E}(a)=\frac{1}{\nu N^d} \sum_{r\in\LL_N^d} \sum_{q=1}^\nu  \langle a(\cdot+v_q)\rangle\sum_{s=1}^{\nu'}\Big|\Big[P_{E_s}\Big(\frac{r_\fb}{N}\Big)\mathbf{1}_{V_f}\Big](v_q)\Big|^2\,.
\]
The rest of the proof is the same, so our theorem now says that averaging $a$ over the evolution of $\psi_0$ is close to $\mathbf{E}(a)$. Comparing with Example~\ref{exa:per}, in case of the ladder and the honeycomb lattice, this is again the uniform average. In case of the strip of width $3$, we here have $(P_i\mathbf{1}_{V_f})(v_q) = \langle w_i,\mathbf{1}_{V_f}\rangle w_i(v_q)$, so
\[
\sum_{i=1}^3 |(P_i\mathbf{1}_{V_f})(v_q)|^2 = \frac{(2+\sqrt{2})^2|w_1(v_q)|^2}{4} + \frac{(2-\sqrt{2})^2|w_3(v_q)|^2}{4}\,.
\]
For $q=1,2,3$, this gives $\frac{(2+\sqrt{2})^2+(2-\sqrt{2})^2}{16}=\frac{3}{4}$, $\frac{(2+\sqrt{2})^2+(2-\sqrt{2})^2}{8}=\frac{3}{2}$ and $\frac{3}{4}$, respectively. Thus, $\mathbf{E}(a)=\frac{1}{3}\cdot\frac{3\langle a(\cdot+v_1)\rangle + 6\langle a(\cdot+v_2)\rangle +3\langle a(\cdot+v_3)\rangle}{4} = \frac{\langle a(\cdot+v_1)\rangle + 2\langle a(\cdot+v_2)\rangle +\langle a(\cdot+v_3)\rangle}{4}$. So we still don't get the uniform average; there is more weight given to the middle line. Curiously, this is the same as starting from a point mass in the middle.

In case of the cylinder, $P_{E_1} \mathbf{1}_{V_f} = \langle w_1,\mathbf{1}_{V_f}\rangle w_1 = 2 w_1$, while $P_{E_2}\mathbf{1}_{V_f} = P_{E_3} \mathbf{1}_{V_f}=0$, since $w_2,w_3,w_4$ are all orthogonal to $\mathbf{1}_{V_f}$. It follows that $\sum_{i=1}^4 |P_{E_s} \mathbf{1}_{V_f}(v_q)|^2 = 4 |w_1(v_q)|^2 = 1$. Thus, $\mathbf{E}(a) = \frac{1}{4}\sum_{q=1}^4 \langle a(\cdot+v_q)\rangle$ is now the uniform average, in contrast to the case of an initial state consisting of a point mass which was discussed in Example~\ref{exa:per}.
\end{rem}

In general, if the initial state $\psi_0$ has a compact support, the limiting average becomes
\begin{equation}\label{e:gena}
\mathbf{E}_{\psi_0}(a) = \sum_{r\in\LL_N^d} \sum_{q=1}^\nu \langle a(\cdot+v_q)\rangle \sum_{s=1}^{\nu'} \Big| \Big[P_{E_s}\Big(\frac{r_\fb}{N}\Big)(U\psi_0)_r\Big](v_q)\Big|^2\,.
\end{equation}

Regarding the Floquet assumption \eqref{e:flo}, it is likely to be necessary in view of \cite[Prp. 1.6]{McKSa}. It is clear that it cannot be completely dropped, as this would allow the presence of ``flat bands'', that is infinitely degenerate eigenvalues for $\cH$ with eigenvectors of compact support. If we take such an eigenvector as an initial state, it will not spread, since we simply get $\ee^{-\ii tH_N} \psi_0 = \ee^{-\ii t \lambda} \psi_0$, so $|\ee^{-\ii tH_N} \psi_0| = |\psi_0|$ for all times. An example is given in Figure~\ref{fig:fla}. See \cite{McKSa,SaYo} for more background on this phenomenon. Hence, at least pure AC spectrum for $\cH$ should be assumed, but \eqref{e:flo} is stronger than this.

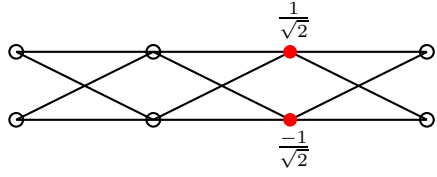
\begin{figure}[h!]
\begin{center}
\setlength{\unitlength}{0.9cm}
\thicklines
\begin{picture}(1.3,1.8)(-0.8,-1.4)
	 \put(-3,0){\line(1,0){6}}
	 \put(-3,-1){\line(1,0){6}}
	 \put(-3,-1){\line(2,1){2}}
	 \put(-3,0){\line(2,-1){2}}
	 \put(-1,-1){\line(2,1){2}}
	 \put(-1,0){\line(2,-1){2}}
	 \put(1,-1){\line(2,1){2}}
	 \put(1,0){\line(2,-1){2}}
	 \put(1,-1){\textcolor{red}{\circle*{.2}}}
	 \put(1,0){\textcolor{red}{\circle*{.2}}}
	 \put(-1,-1){\circle{.2}}
	 \put(-1,0){\circle{.2}}
	 \put(-3,-1){\circle{.2}}
	 \put(-3,0){\circle{.2}}
	 \put(3,-1){\circle{.2}}
	 \put(3,0){\circle{.2}}
	 \put(0.8,0.4){\small{$\frac{1}{\sqrt{2}}$}}
	 \put(0.8,-1.5){\small{$\frac{-1}{\sqrt{2}}$}}
\end{picture}
\caption{An initial state with the given weights (and zero on the remaining vertices) stays frozen and does not spread under the action of $\ee^{-\ii tA_N}$. This graph has a flat band $\lambda=0$.}\label{fig:fla}
\end{center}
\end{figure}

\section{Continuous case}

\subsection{Regularizing in momentum space}\label{sec:conmo}
The Dirac distribution $\delta_y$ on $\R^d$ satisfies $\langle \delta_y,f\rangle = f(y)$. As in \cite{AR12}, we consider here a normalized truncated Dirac distribution defined by $\delta_y^I:=\frac{1}{\sqrt{N_I}} \mathbf{1}_I(-\Delta)\delta_y$, where $I$ is an interval and $N_I$ is the number of eigenvalues of $-\Delta$ in $I$. Let us fix $I=(-\infty,E]$ and denote $N_E= N_I$, $\delta_y^E = \delta_y^I$ and $\mathbf{1}_{\le E} = \mathbf{1}_I$.

In our framework, $\delta_y^E$ is a trigonometric polynomial, as we can see by \emph{defining} $\delta_y^E$ through its Fourier expansion, $\delta_y^E:= \sum_j  \langle e_j, \delta_y^E\rangle e_j = \frac{1}{\sqrt{N_E}}\sum_{\lambda_j\le E} \overline{e_j(y)} e_j$, for $e_j(x)=\ee^{2\pi\ii j\cdot x}$. This function satisfies $\delta_y^E(y) = \sqrt{N_E} \to \infty$ as $E\to \infty$ and $\delta_y^E(x) \to 0$ as $E\to \infty$ for $x\neq y\in \T_\ast^d$ (see the \emph{proof of (2)} below). Also, $\|\delta_y^E\|^2  = \frac{1}{N_E} \sum_{\lambda_j,\lambda_k\le E} e_{k-j}(y)\langle e_j,e_k\rangle = 1$ and $\langle \delta_y^E,f\rangle = \frac{1}{\sqrt{N_E}} \sum_{\lambda_j\le E} e_j(y) \langle e_j,f\rangle = \frac{1}{\sqrt{N_E}} [\mathbf{1}_{\le E}(-\Delta) f](y)$.

\begin{proof}[Proof of Theorem~\ref{thm:ctstor}]
We first note that if $a= \sum_m a_m e_m$, then
\begin{equation}\label{e:obsob}
\sum_{m\in \Z^d} |a_m|<\infty
\end{equation}
since we assumed that $\|a\|_{H_s}^2 = \sum_m  |a_m|^2 \langle m\rangle^{2s} <\infty$ for $\langle m\rangle = \sqrt{1+m^2}$ and $s>d/2$.

We have $\ee^{\ii t \Delta} \delta_y^E = \frac{1}{\sqrt{N_E}}\sum_{\lambda_\ell \le E} \overline{e_\ell(y)} \ee^{-\ii t\lambda_\ell} e_\ell$ and $a = \sum_m a_m e_m$. As $e_me_\ell = e_{m+\ell}$, we get $\ee^{-\ii t \Delta} a\ee^{\ii t\Delta} \delta_y^E = \frac{1}{\sqrt{N_E}} \sum_m a_m \sum_{\lambda_\ell \le E} \overline{e_\ell (y)} \ee^{\ii t(\lambda_{\ell+m}-\lambda_\ell)} e_{m+\ell}$.

Thus, $\langle \delta_y^E, \ee^{-\ii t\Delta} a \ee^{\ii t\Delta} \delta_y^E\rangle = \frac{1}{N_E} \sum\limits_{\substack{m,\ell\in \Z^d,\\ \lambda_\ell,\lambda_{m+\ell}\le E}} a_m \ee^{\ii t(\lambda_{\ell+m}-\lambda_\ell)}e_m(y)$. The term $m=0$ corresponds to $\frac{1}{N_E} \sum_{\lambda_\ell \le E} a_0 e_0(y) = \int_{\T_\ast^d} a(x)\,\dd x$. Since $\frac{1}{T}\int_0^T \ee^{\ii t(\lambda_{\ell+m}-\lambda_\ell)}\,\dd t = \frac{1}{T}\cdot \frac{\ee^{\ii T(\lambda_{\ell+m}-\lambda_\ell)}-1}{\ii(\lambda_{\ell+m}-\lambda_\ell)}$ for $\lambda_{\ell+m}\neq \lambda_\ell$, we get
\begin{multline}\label{e:conclucont}
 \frac{1}{T}\int_0^T \langle \ee^{\ii t\Delta} \delta_y^E, a \ee^{\ii t\Delta} \delta_y^E\rangle\,\dd t = \int_{\T_\ast^d} a(x)\,\dd x + \frac{1}{N_E}\sum_{\substack{m\neq 0,\ell\in \Z^d\\ \lambda_\ell,\lambda_{\ell+m}\le E,\\\lambda_\ell = \lambda_{\ell+m}}} a_m e_m(y) \\
 + \frac{1}{T N_E} \sum_{m\neq 0}a_m e_m(y)\sum_{\substack{\ell\in\Z^d,\\ \lambda_\ell,\lambda_{\ell+m}\le E\\ \lambda_{\ell+m}\neq\lambda_\ell}}  \frac{\ee^{\ii T(\lambda_{\ell+m}-\lambda_\ell)}-1}{\ii(\lambda_{\ell+m}-\lambda_\ell)}
\\
 = \int_{\T_\ast^d} a(x)\,\dd x + \sum_{m\neq 0} a_m e_m(y) \cdot \frac{\# \{\ell: \lambda_{\ell}\le E,\lambda_{\ell+m}\le E, \lambda_{\ell+m}= \lambda_\ell\}}{N_E}\\
 + \frac{1}{T N_E} \sum_{m\neq 0}a_m e_m(y)\sum_{\substack{\ell\in\Z^d,\\ \lambda_\ell,\lambda_{\ell+m}\le E\\ \lambda_{\ell+m}\neq\lambda_\ell}}  \frac{\ee^{\ii T(\lambda_{\ell+m}-\lambda_\ell)}-1}{\ii(\lambda_{\ell+m}-\lambda_\ell)}\,.
\end{multline}

\textbf{The second term.}
Now $\lambda_k = 4\pi^2 k^2$, for $k^2:= k_1^2+\dots+k_d^2$. We have $N_E = \#\{\ell : \ell^2 \le \frac{E}{4\pi^2}\} \sim c_d E^{d/2}$ by known Weyl asymptotics, which say that $N_E$ is asymptotic to the volume of the $d$-dimensional ball of radius $\frac{\sqrt{E}}{2\pi}$. On the other hand, the constraint $\lambda_{\ell+m} = \lambda_\ell$ means that $\ell \cdot m = -m^2/2$. This defines an affine hyperplane in $\R^d$. Hence, for any $m\neq 0$, $\{\ell :\lambda_\ell \le E, \lambda_{\ell+m}=\lambda_\ell\}$ is itself the number of points on a $(d-1)$-dimensional ball of radius $\le \frac{\sqrt{E}}{2\pi}$, and as such, is bounded by $c_{d-1}E^{(d-1)/2}$, uniformly in $m\neq 0$ (by varying $m$ we may get fewer, but not more than $c_{d-1}E^{(d-1)/2}$ points). We thus see that
\[
\sup_{m\neq 0} \frac{\# \{\ell: \lambda_{\ell}\le E,\lambda_{\ell+m}\le E, \lambda_{\ell+m}= \lambda_\ell\}}{N_E} \to 0
\]
as $E\to\infty$. By \eqref{e:obsob}, it follows that the second error term in \eqref{e:conclucont} decays like $E^{-1/2}$.

\textbf{The third term.} Let us show that $\lim_{E\to\infty}\frac{1}{N_E}\sum\limits_{\substack{\ell\,:\,\lambda_\ell\le E\\ \lambda_{\ell+m}\neq\lambda_\ell}}  \frac{1}{|\lambda_{\ell+m}-\lambda_\ell|} = 0$ uniformly in $m$. Roughly speaking, this is a Ces\`aro argument (if $c_n\to 0$ then $\frac{1}{n}\sum_{k=1}^n c_k \to 0$).

Let $\varepsilon>0$ and fix $m\neq 0$, say $m_i\neq 0$ and write $\ell = (\hat{\ell}_i,\ell_i)$ with $\hat{\ell}_i\in \Z^{d-1}$. We have $\lambda_{\ell+m}-\lambda_\ell = 2\ell\cdot m+m^2$. So $\frac{1}{N_E}\sum_{\substack{\ell: \lambda_\ell \le E\\ |2\ell\cdot m+m^2|\ge \frac{2}{\varepsilon}}} \frac{1}{|\lambda_{\ell+m}-\lambda_\ell|}\le \frac{\varepsilon}{2}$.

On the other hand, if $B_E = \{\lambda_\ell\le E\}$, then
\begin{multline*}
\{\ell\in B_E: \ell \cdot m = 0\} = \Big\{\ell: \ell^2\le \frac{E}{4\pi} \text{ and } \ell_i = \frac{-\hat{\ell}_i\cdot \hat{m}_i}{m_i}\Big\} \\
= \Big\{\hat{\ell}_i: \hat{\ell}_i^2 + (\frac{-\hat{\ell}_i\cdot \hat{m}_i}{m_i})^2 \le \frac{E}{4\pi^2}\Big\} \subseteq \Big\{\hat{\ell}_i: \hat{\ell}_i^2\le \frac{E}{4\pi^2}\Big\},
\end{multline*}
so $|\{\ell\in B_E: \ell \cdot m = 0\}| \lesssim E^{(d-1)/2}$, with the implicit constant depending on the dimension, but not $m$. Similarly, note that $|2\ell\cdot m+m^2|<\frac{2}{\varepsilon}$ implies
\[
-\frac{2}{\varepsilon}-m^2-2\hat{\ell}_i\cdot \hat{m}_i<2 \ell_i m_i < \frac{2}{\varepsilon}-m^2-2\hat{\ell}_i\cdot \hat{m}_i. 
\]
There are at most $\frac{2}{|m_i|\varepsilon}+1$ values of $\ell_i$ in this interval. We see by applying the previous argument to each such value that $|\{\ell\in B_E : |2\ell\cdot m+m^2|<\frac{2}{\varepsilon}\}| \lesssim \varepsilon^{-1} E^{(d-1)/2}$, since $|m_i|\ge 1$. Summarizing, we have
\[
 \frac{1}{N_E}\sum\limits_{\substack{\ell\,:\,\lambda_\ell\le E\\ \lambda_{\ell+m}\neq\lambda_\ell}}  \frac{1}{|\lambda_{\ell+m}-\lambda_\ell|}\lesssim \frac{\varepsilon^{-1}}{\sqrt{E}} + \frac{\varepsilon}{2} \lesssim \frac{1}{E^{1/4}}
\]
by choosing $\varepsilon \asymp E^{-1/4}$. Using \eqref{e:obsob}, this implies the third term vanishes like $E^{-1/4}$.

This completes the proof of (1), and (3), as $\|a^E\|_1\le C_s\|a^E\|_{H^s}\le C_sE^r$ and $E^{r-\frac{1}{4}}\to 0$.

\textbf{Proof of (2).} $\langle \delta_x^E, \ee^{-\ii t\Delta} a \ee^{\ii t\Delta} \delta_y^E\rangle = \frac{1}{N_E} \sum\limits_{\substack{m,\ell\in \Z^d,\\ \lambda_\ell,\lambda_{m+\ell}\le E}} \ee^{2\pi\ii \ell\cdot(x-y)}a_m \ee^{\ii t(\lambda_{\ell+m}-\lambda_\ell)}e_m(y)$ by the same calculations. Let $x\neq y$. Note that $x_j-y_j\in (-1,1)$ for all $j$ since $x,y\in\T_\ast^d$, and $x_i-y_i\neq 0$ for at least one $i$. The term $m=0$ corresponds to $\frac{\langle a\rangle}{N_E}\sum_{\lambda_\ell\le E} \ee^{2\pi\ii \ell\cdot(x-y)}$, where $\langle a\rangle = \int_{\T_\ast^d} a(w)\dd w$.  Here $\lambda_\ell \le E \iff \ell^2 \le \frac{E}{4\pi^2}$. Consider for simplicity $d=1$, so the sum runs over $[-\frac{\sqrt{E}}{2\pi},\frac{\sqrt{E}}{2\pi}]$ and equals $\frac{\ee^{2\pi\ii \alpha (\beta+1)}-\ee^{-2\pi\ii \alpha\beta}}{\ee^{2\pi\ii \alpha}-1}$ for $\alpha=x-y$ and $\beta = \frac{\sqrt{E}}{2\pi}$. Since $\alpha\neq 0$, this may be bounded by some $c_\alpha$ independent of $E$, so $\frac{\langle a\rangle}{N_E}\sum_{\lambda_\ell\le E} \ee^{2\pi\ii \ell\cdot(x-y)} \to 0$.

In general, if $\alpha=x-y$, $\alpha_i\neq 0$ and we denote $\ell = (\ell_i,\hat{\ell}_i)$ with $\hat{\ell}_i\in \R^{d-1}$, then the finite sum over the ball $B_E = \{\ell^2\le \frac{E}{4\pi^2}\}$ can be rearranged into sections $\ell_i\in B_E(\hat{\ell}_i)$, for each $\hat{\ell}_i$ such that $(\ell_i,\hat{\ell}_i)\in B_E$. And each $\sum_{\ell_i\in B_E(\hat{\ell}_i)} \ee^{2\pi\ii \ell_i \alpha_i}$ can again be bounded by some $C_{\alpha_i}$ independently of $E$. By the Weyl asymptotics, we see that $|\sum_{\ell\in B_E} \ee^{2\pi\ii \ell\cdot \alpha}|\le C_{\alpha_i,d} E^{(d-1)/2}$. Since $N_E\sim E^{d/2}$, this implies $\frac{\langle a\rangle}{N_E}\sum_{\lambda_\ell\le E} \ee^{2\pi\ii \ell\cdot(x-y)} \to 0$.

We have shown that the term $m=0$ vanishes as $E\to\infty$. On the other hand, the sum over nonzero $m$ is controlled as in (1); the presence of the phase $\ee^{2\pi\ii \ell\cdot (x-y)}$ makes no difference. We conclude that if $x\neq y$, then $\frac{1}{T}\int_0^T \langle \ee^{\ii t\Delta}\delta_x^E,a\ee^{\ii t\Delta}\delta_y^E\rangle\,\dd t\to 0$ as $E\to\infty$.

\textbf{Weak convergence.} $\frac{1}{T}\int_0^T\langle \ee^{\ii t\Delta} \delta_y^E, a\ee^{\ii t\Delta}\delta_y^E\rangle\,\dd t = \frac{1}{T}\int_0^T \int_{\T_\ast^d}a(x)|(\ee^{\ii t\Delta}\delta_y^E)(x)|^2\,\dd x\dd t$. The continuous function $a$ is bounded on the compact $\T_\ast^d$, so $\int_0^T\int_{\T_\ast^d}|a(x)||(\ee^{\ii t\Delta}\delta_y^E)(x)|^2\,\dd x\dd t \le T\|a\|_\infty \|\ee^{\ii t\Delta}\delta_y^E\|^2 = T\|a\|_\infty$ is finite. By the Fubini theorem, we get $\frac{1}{T}\int_0^T\langle \ee^{\ii t\Delta} \delta_y^E, a\ee^{\ii t\Delta}\delta_y^E\rangle = \int_{\T_\ast^d} a(x)(\frac{1}{T}\int_0^T |(\ee^{\ii t\Delta}\delta_y^E)(x)|^2\,\dd t)\,\dd x = \int_{\T_\ast^d} a(x)\,\dd\mu_{y,T}^E(x)$.

It follows from (1) that $\int_{\T_\ast^d} a(x)\,\dd\mu_{y,T}^E(x) \to \int_{\T_\ast^d} a(x)\,\dd x$ for any Sobolev function $a$, in particular for any smooth function on $\T_\ast^d$. Combining \cite[Cor 15.3, Thm 13.34]{Klenke}, we deduce that $\dd\mu_{y,T}^E(x) \xrightarrow{w} \dd x$ as $E\to\infty$.
\end{proof} 

\begin{proof}[Proof of Lemma~\ref{lem:tav}]

Consider $d=1$ and $a(x) = e_1(x) = \ee^{2\pi\ii x}$. The calculation in the previous proof shows that
\begin{equation}\label{e:noave}
\langle \ee^{\ii t\Delta} \delta_y^E, a\ee^{\ii t\Delta}\delta_y^E\rangle = \frac{1}{N_E} \sum_{\ell:\lambda_\ell,\lambda_{\ell+1}\le E} \ee^{4\pi^2 \ii t(2\ell+1)} e_1(y)\,,
\end{equation}
where we used $a_m = \delta_{m,1}$ and $\lambda_k=4\pi^2 k^2$.

If $t=\frac{n}{4\pi}$, this gives $\frac{\ee^{\ii \pi n}e_1(y)}{N_E} \sum_{\lambda_\ell,\lambda_{\ell+1}\le E} \ee^{2\ell n \pi \ii} = c_E (-1)^n e_1(y)$ for $c_E = \frac{\sqrt{E}-\pi}{\sqrt{E}}$.

On the other hand, if $t = \frac{2n+1}{8\pi}$, then \eqref{e:noave} becomes $\frac{e_1(y)\ee^{\ii \frac{(2n+1)\pi}{2}}}{N_E}\sum_{\lambda_{\ell},\lambda_{\ell+1}\le E} \ee^{\ii (2n+1)\pi \ell} = \frac{e_1(y)\ee^{\ii \frac{(2n+1)\pi}{2}}}{N_E}\sum_{\lambda_{\ell},\lambda_{\ell+1}\le E} (-1)^\ell \in \{0, \pm \frac{e_1(y)\ee^{\ii \frac{(2n+1)\pi}{2}}}{N_E}\}$. 

The limits over $E$ are different: in the first case it gives $(-1)^n e_1(y)$, in the second case it gives $0$. The latter case corresponds to $\langle a\rangle=\langle e_1\rangle=0$, but not the former.
\end{proof}

\subsection{Generalizations}\label{sec:ctsgen}
It is not very clear what would be the analog of \eqref{e:ephipsi}. The limit $\lim_{T\to\infty} \frac{1}{T}\int_0^T \langle \ee^{\ii t\Delta}\phi, a\ee^{\ii t\Delta}\psi\rangle\,\dd t$ is not necessarily equal to $\langle \phi,\psi\rangle \langle a\rangle$ even if $\phi,\psi$ are smooth. For example, take $\phi=e_j$ and $\psi=e_k$. Then $\langle \ee^{\ii t\Delta}e_j, a\ee^{\ii t\Delta} e_k\rangle= \ee^{\ii t(\lambda_k-\lambda_j)}\langle e_j,ae_k\rangle$. We see that if $k\neq j$ but $\lambda_k=\lambda_j$ (e.g. $k=(0,1)$, $j=(1,0)$), then $\frac{1}{T}\int_0^T\langle \ee^{\ii t\Delta} e_j, a\ee^{\ii t\Delta} e_k\rangle = \langle e_j,ae_k\rangle$. Taking $a = e_{j-k}$, this has value $1$. In contrast, $\langle e_j,e_k\rangle\langle a\rangle = 0$.

Instead of $\delta_y^E$, we can consider variants such as $\chi_y^E := \frac{1}{\sqrt{\sum_j \chi_E(\lambda_j)^2}} \chi_E(-\Delta)\delta_y$. In other words, $\chi_y^E = \frac{1}{\sqrt{\sum_j \chi_E(\lambda_j)^2}} \sum_{j} \chi_E(\lambda_j) \overline{e_j(y)}e_j$. Here, instead of $\chi_E = \mathbf{1}_{\le E}$, we only ask $\chi_E(\lambda) = 0$ if $\lambda>E$ and $0<c_0\le \chi_E(\lambda)\le c_1$ on $[0,E-1]$. This allows for example to consider smooth cutoffs. Then the proof carries over. In fact, \eqref{e:conclucont} becomes
\begin{multline*}
\int_{\T_\ast^d} a(x)\,\dd x + \sum_{m\neq 0} a_m e_m(y) \cdot \frac{\sum_{\substack{\lambda_\ell\le E\\\lambda_\ell=\lambda_{\ell+m}}}\chi_E(\lambda_\ell)^2}{\sum_{\lambda_\ell\le E} \chi_E(\lambda_\ell)^2}\\
 + \frac{1}{T \sum_{\lambda_\ell\le E}\chi_E(\lambda_\ell)^2} \sum_{m\neq 0}a_m e_m(y)\sum_{\substack{\ell\in\Z^d,\\ \lambda_\ell,\lambda_{\ell+m}\le E\\ \lambda_{\ell+m}\neq\lambda_\ell}}  \frac{\ee^{\ii T(\lambda_{\ell+m}-\lambda_\ell)}-1}{\ii(\lambda_{\ell+m}-\lambda_\ell)}\chi_E(\lambda_\ell)\chi_E(\lambda_{\ell+m})\,.
\end{multline*}

For the second term, we bound the fraction by $\frac{c_1^2}{c_0^2}\cdot\frac{\#\{\ell:\lambda_\ell\le E,\,\lambda_{\ell+m}\le E,\,\lambda_{\ell+m}=\lambda_\ell\}}{N_{E-1}}$, which converges to zero uniformly in $m$ by the same argument. Similarly, the third term is controlled as before since $|\chi_E(\lambda_\ell)\chi_E(\lambda_{\ell+m})|\le c_1^2$ and $\sum_{\lambda_\ell\le E} \chi_E(\lambda_\ell)^2\ge c_0N_{E-1}$.

Finally, the proof can be generalized to tori of the form $\T = \times_{i=1}^d [0,b_i)$. Here we use the basis $e_\ell(x) = \frac{\ee^{\frac{2\pi x_1\ell_1}{b_1}}\cdots \ee^{\frac{2\pi x_d\ell_d}{b_d}} }{\sqrt{b_1\cdots b_d}}$, with eigenvalues $\lambda_\ell = 4\pi^2 \sum_{i=1}^d \frac{\ell_i^2}{b_i^2}$. The set $B_E = \{\lambda_\ell \le E\}$ now consists of points in an ellipsoid of axes $\frac{b_i\sqrt{E}}{2\pi}$. We still have $N_E\sim C_{b,d} E^{d/2}$ and the proof carries over mutatis mutandis. If we assume some irrationality condition, the second term decays faster with $E$ as the multiplicity reduces, however it seems that the third error term does not improve.%Let us note that convergence could potentially get faster in this case (which would allow for observables $a^E$ of support shrinking faster). For example, take $d=2$. Then $\lambda_{\ell+m}=\lambda_\ell$ means $\frac{2\ell_1m_1+m_1^2}{b_1^2} + \frac{2\ell_2m_2+m_2^2}{b_2^2}=0$, implying $\ell_1+\frac{m_1}{2}=-\frac{b_1^2}{b_2^2}(\frac{\ell_2 m_2}{m_1}+\frac{m_2^2}{2m_1})$. If $\frac{b_1^2}{b_2^2}\notin \Q$, then this can only be true if $2\ell_2+m_2=0$. This in turn implies $2\ell_1+m_1=0$. We thus get that the only solution to $\lambda_{\ell+m}=\lambda_\ell$ for $m\neq 0$ is $\ell = (\frac{-m_1}{2},\frac{-m_2}{2})$. Thus, the second term now de

\subsection{Regularizing in position space}\label{sec:conpo}
Our aim here is to prove Theorem~\ref{thm:cont1}. We fix an arbitrary sequence $(\phi_\varepsilon)$ for $\varepsilon=(\varepsilon_1,\dots,\varepsilon_d)$ which satisfies the following:
\begin{itemize}
\item $\phi_\varepsilon = \mathop\otimes_{i=1}^d \phi_{\varepsilon_i}$, that is, $\phi_\varepsilon(x) = \phi_{\varepsilon_1}(x_1)\cdots \phi_{\varepsilon_d}(x_d)$ for some functions $\phi_{\varepsilon_i}$ on $\R$.
\item $\|\phi_{\varepsilon_i}\|=1$ for each $i$.
\item $\sup_{r\in\Z}|\langle \phi_{\varepsilon_i},e_r\rangle| \to 0$ as $\varepsilon_i\to 0$, where $e_r(s) = \ee^{2\pi\ii rs}$ for $s\in \T_\ast$.
\end{itemize}

The most important example is $\phi_\varepsilon = \frac{1}{\sqrt{\varepsilon_1\varepsilon_2\cdots\varepsilon_d}} \mathbf{1}_{\times_{i=1}^d [y_i,y_i+\varepsilon_i]}$. Here, $\langle \phi_{\varepsilon_i},e_r\rangle = \sqrt{\varepsilon_i}$ if $r=0$ and $\langle \phi_{\varepsilon_i},e_r\rangle = \frac{1}{\sqrt{\varepsilon_i}} \cdot \frac{\ee^{2\pi\ii r(y_i+\varepsilon_i)}-\ee^{2\pi\ii r y_i}}{2\pi r}$ if $r\neq 0$. Since $|\ee^{\ii x}-1|\le |x|$, we see that $|\langle \phi_{\varepsilon_i},e_r\rangle|\le \sqrt{\varepsilon_i}$. This can be regarded as a normalized point mass in the sense that if $\widetilde{\phi}_\varepsilon = \frac{1}{\varepsilon_1\cdots \varepsilon_d} \mathbf{1}_{\times_{i=1}^d [y_i,y_i+\varepsilon_i]}$, then for any integrable $g$, we have $\langle \widetilde{\phi}_\varepsilon, g\rangle \to g(y)$ for a.e. $y$ by the Lebesgue differentiation theorem. Also, $\widetilde{\phi}_\varepsilon(y) = \frac{1}{\varepsilon_1\cdots \varepsilon_d}\to \infty$ and $\widetilde{\phi}_\varepsilon(x)\to 0$ for $x\neq y$.

\begin{thm}\label{thm:ctspogen}
For any $(\phi_\varepsilon)$ and $(\psi_\varepsilon)$ as above, any $a\in H^s(\T_\ast^d)$, $s>d/2$, any $T>0$,
\[
\lim_{\varepsilon\downarrow 0} \frac{1}{T}\int_0^T \langle \ee^{\ii t\Delta} \phi_\varepsilon, a \ee^{\ii t\Delta} \psi_\varepsilon\rangle\,\dd t = \Big(\int_{\T_\ast^d} a(x)\,\dd x\Big)\Big(\lim_{\varepsilon\downarrow 0}\langle \phi_\varepsilon,\psi_\varepsilon\rangle\Big)\,,
\]
where $\varepsilon\downarrow 0$ means more precisely that $\varepsilon_i\downarrow 0$ for each $i$.
\end{thm}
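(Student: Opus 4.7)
The plan is to mimic the Fourier calculation in the proof of Theorem \ref{thm:ctstor}. Using $\ee^{\ii t\Delta}e_k = \ee^{-\ii t\lambda_k}e_k$ with $\lambda_k = 4\pi^2|k|^2$, the identity $e_m e_k = e_{m+k}$ and $a = \sum_m a_m e_m$, a direct expansion gives
\[
\langle \ee^{\ii t\Delta}\phi_\varepsilon, a\ee^{\ii t\Delta}\psi_\varepsilon\rangle = \sum_{m,k\in\Z^d}\overline{\langle e_{m+k},\phi_\varepsilon\rangle}\langle e_k,\psi_\varepsilon\rangle\, a_m\, \ee^{\ii t(\lambda_{m+k}-\lambda_k)}.
\]
After time averaging, the term $m=0$ produces $a_0\langle\phi_\varepsilon,\psi_\varepsilon\rangle = (\int_{\T_\ast^d} a\,\dd x)\langle\phi_\varepsilon,\psi_\varepsilon\rangle$, which is the targeted main contribution. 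The rest splits, for each $m\neq 0$, into a resonant piece $\mathrm{Res}(m,\varepsilon) = \sum_{k\in R_m}\overline{\langle e_{m+k},\phi_\varepsilon\rangle}\langle e_k,\psi_\varepsilon\rangle$ over the hyperplane $R_m = \{k : 2k\cdot m + |m|^2 = 0\}$, and a non-resonant oscillatory piece $R_T(m,\varepsilon)$ weighted by the Dirichlet kernel $\frac{\ee^{\ii T\gamma_k}-1}{\ii T\gamma_k}$ with $\gamma_k = \lambda_{m+k}-\lambda_k \in 4\pi^2\Z\setminus\{0\}$. The problem then reduces to showing $\sum_{m\neq 0}a_m[\mathrm{Res}(m,\varepsilon)+R_T(m,\varepsilon)]\to 0$ as $\varepsilon\downarrow 0$.

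For the resonant piece I would fix any coordinate $i$ with $m_i\neq 0$: the defining equation of $R_m$ then determines $k_i$ from $\hat k_i = (k_j)_{j\neq i}$. Using the tensor structure $|\langle e_k,\psi_\varepsilon\rangle|^2 = \prod_j|\langle e_{k_j},\psi_{\varepsilon_j}\rangle|^2$ and $\sum_{k_j}|\langle e_{k_j},\psi_{\varepsilon_j}\rangle|^2 = \|\psi_{\varepsilon_j}\|^2 = 1$, a short Cauchy--Schwarz argument yields
\[
\sum_{k\in R_m}|\langle e_k,\psi_\varepsilon\rangle|^2 \le \sup_{r\in\Z}|\langle e_r,\psi_{\varepsilon_i}\rangle|^2, \qquad \sum_{k\in R_m}|\langle e_{m+k},\phi_\varepsilon\rangle|^2 \le \sup_{r\in\Z}|\langle e_r,\phi_{\varepsilon_i}\rangle|^2,
\]
so the third hypothesis on $(\phi_\varepsilon),(\psi_\varepsilon)$ directly forces $\mathrm{Res}(m,\varepsilon)\to 0$ for each fixed $m\neq 0$.

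The main obstacle is the non-resonant term, since $|\gamma_k|\ge 4\pi^2$ is independent of $\varepsilon$, so the naive bound $|R_T(m,\varepsilon)|\lesssim 1/T$ does not vanish as $\varepsilon\downarrow 0$. I would introduce an $\varepsilon$-dependent threshold $A = A(\varepsilon)>0$ and split $\{k : \gamma_k \neq 0\}$ according to whether $|\gamma_k|\ge A$ or $|\gamma_k|<A$. On the far range, $\bigl|\frac{\ee^{\ii T\gamma}-1}{\ii T\gamma}\bigr|\le \frac{2}{T|\gamma|}$ combined with $\sum_k|\langle e_{m+k},\phi_\varepsilon\rangle||\langle e_k,\psi_\varepsilon\rangle|\le 1$ gives a contribution $\le 2/(TA)$. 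On the near-resonant strip $\{0<|\gamma_k|<A\}$, the constraint $|2k\cdot m+|m|^2|<A/(4\pi^2)$ traps $k_i$ inside an interval of length $\lesssim A/|m_i|$ for each $\hat k_i$; the same tensor Cauchy--Schwarz bound as for $R_m$ then controls this part by $\lesssim (1+A/|m_i|)\sup_r|\langle e_r,\phi_{\varepsilon_i}\rangle|\sup_r|\langle e_r,\psi_{\varepsilon_i}\rangle|$. Letting $A=A(\varepsilon)\to\infty$ slowly enough as $\varepsilon\downarrow 0$, e.g. $A$ equal to the inverse square root of the product of the two $\sup_r$-quantities, drives both pieces to zero.

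Finally, since $s>d/2$ gives $\sum_m|a_m|<\infty$ as recalled in the proof of Theorem \ref{thm:zd}, and each of $|\mathrm{Res}(m,\varepsilon)|$ and $|R_T(m,\varepsilon)|$ is bounded by $1$ uniformly in $m$ and $\varepsilon$ (via $\sum_k|\langle e_{m+k},\phi_\varepsilon\rangle||\langle e_k,\psi_\varepsilon\rangle|\le 1$ and the trivial sup-bound on the Dirichlet kernel), dominated convergence over $m\in\Z^d\setminus\{0\}$ with dominating sequence $(2|a_m|)$ concludes that the error $\sum_{m\neq 0}a_m[\mathrm{Res}(m,\varepsilon)+R_T(m,\varepsilon)]$ vanishes as $\varepsilon\downarrow 0$. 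Combined with the $m=0$ main term, this yields the asymptotic identity $\frac{1}{T}\int_0^T\langle \ee^{\ii t\Delta}\phi_\varepsilon, a\ee^{\ii t\Delta}\psi_\varepsilon\rangle\,\dd t = \langle\phi_\varepsilon,\psi_\varepsilon\rangle\int_{\T_\ast^d} a\,\dd x + o(1)$, which gives the stated theorem once the limit of $\langle\phi_\varepsilon,\psi_\varepsilon\rangle$ is assumed to exist.
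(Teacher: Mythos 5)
Your proposal is correct and follows the same overall route as the paper: Fourier-expand $a$ and the initial states, isolate the $m=0$ term as the main contribution $a_0\langle\phi_\varepsilon,\psi_\varepsilon\rangle$, split the remainder for each $m\neq 0$ into a resonant sum over the hyperplane $2k\cdot m+|m|^2=0$ and a non-resonant oscillatory sum, use the tensor-product structure together with the hypothesis $\sup_r|\langle e_r,\psi_{\varepsilon_i}\rangle|\to 0$ to kill the general term, and then conclude by dominated convergence over $m$ with dominating sequence a constant times $|a_m|$. Your treatment of the resonant piece is essentially identical to the paper's. The one place you genuinely depart is the non-resonant piece. The paper observes that, after fixing $i$ with $m_i\neq 0$ and writing $\gamma_k=4\pi^2(2r+m^2)$, the reciprocal gaps $\frac{1}{|2r+m^2|}$ have a uniformly bounded $\ell^2$-sum ($\sum_{r\neq -m^2/2}(2r+m^2)^{-2}\le 3$), so one Cauchy--Schwarz in $r$ together with extracting $\sup_k|\langle e_k,\psi_{\varepsilon_i}\rangle|$ gives the vanishing in one step, with no auxiliary parameter. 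You instead introduce a threshold $A=A(\varepsilon)$, discard the far range $|\gamma_k|\ge A$ at cost $O(1/(TA))$, and control the near-resonant band $0<|\gamma_k|<A$ by counting $O(1+A/|m_i|)$ admissible $k_i$ per $\hat k_i$, then send $A\to\infty$ slowly (e.g.\ $A\asymp(\sup_r|\langle e_r,\phi_{\varepsilon_i}\rangle|\sup_r|\langle e_r,\psi_{\varepsilon_i}\rangle|)^{-1/2}$). Both arguments rest on the same two ingredients, namely the tensor structure and the flattening hypothesis on the Fourier coefficients of the initial states; the paper's version is shorter and parameter-free, while yours is slightly more robust in that it never uses the quantitative number-theoretic fact that $\sum_r(2r+m^2)^{-2}$ is summable, only that the spectral gaps escape to infinity along each fiber and that near-resonances are trapped in thin slabs. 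In either form the same caveat applies (and you handle it correctly via dominated convergence): the coordinate $i$ depends on $m$, so the per-$m$ rate is only uniform after maximizing the finitely many $\sup_r$-quantities over $i\in\{1,\dots,d\}$.
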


If $\phi_\varepsilon=\psi_\varepsilon$, the scalar product on the right is $\|\phi_\varepsilon\|^2=1$. This implies Theorem~\ref{thm:cont1}. In fact, if we take $\psi_\varepsilon=\frac{1}{\sqrt{\varepsilon_1\varepsilon_2\cdots\varepsilon_d}} \mathbf{1}_{\times_{i=1}^d [y_i,y_i+\varepsilon_i]}$ and $\phi_\varepsilon = \frac{1}{\sqrt{\varepsilon_1\varepsilon_2\cdots\varepsilon_d}} \mathbf{1}_{\times_{i=1}^d [x_i,x_i+\varepsilon_i]}$ for $x\neq y$, then $\lim_{\varepsilon\downarrow 0}\langle \phi_\varepsilon,\psi_\varepsilon\rangle=0$.
\begin{proof}
We have $\ee^{\ii t\Delta} \psi = \sum_{\ell} \ee^{-\ii t\lambda_\ell}\psi_\ell e_\ell$ and $a=\sum_m a_m e_m$, so using $e_m e_{\ell} = e_{m+\ell}$, we get $\ee^{-\ii t\Delta} a\ee^{\ii t\Delta}\psi = \sum_{m,\ell} a_m \psi_\ell \ee^{\ii t(\lambda_{\ell+m}-\lambda_\ell)} e_{m+\ell}$. This implies that
\[
\langle \ee^{\ii t\Delta} \phi,a \ee^{\ii t\Delta}\psi\rangle = \sum_{m} a_m \sum_\ell \ee^{\ii t(\lambda_{\ell+m}-\lambda_\ell)} \psi_\ell \langle \phi,e_{\ell+m}\rangle.
\]
Thus, $\frac{1}{T}\int_0^T \langle \ee^{\ii t\Delta} \phi,a \ee^{\ii t\Delta}\psi\rangle\,\dd t = a_0 \sum_{\ell} \langle e_\ell,\psi\rangle\langle \phi,e_\ell\rangle + \sum_{m\neq 0} a_m \sum_{\ell:\lambda_{\ell}=\lambda_{\ell+m}}\langle e_\ell,\psi\rangle\langle \phi,e_{\ell+m}\rangle + \sum_{m\neq 0} a_m \sum_{\ell:\lambda_{\ell}\neq \lambda_{\ell+m}} \frac{\ee^{\ii T(\lambda_{\ell+m}-\lambda_\ell)}-1}{\ii T(\lambda_{\ell+m}-\lambda_\ell)}\langle e_\ell,\psi\rangle\langle \phi,e_{\ell+m}\rangle$.

For $\phi=\phi_\varepsilon$ and $\psi=\psi_\varepsilon$, the first term has the form
\[
a_0 \langle \phi_\varepsilon,\psi_\varepsilon\rangle = \langle \phi_\varepsilon,\psi_\varepsilon\rangle \int_{\T_\ast^d} a(x)\,\dd x\,.
\]

\textbf{Second term:} We prove that $\lim_{\varepsilon\to 0} \sum_{m\neq 0} a_m \sum_{\ell\,:\, 2\ell \cdot m = -m^2} \langle e_\ell, \psi_\varepsilon\rangle\langle \phi_\varepsilon,e_{\ell+m}\rangle = 0$.

First assume $d=1$. Then $m\neq 0$ and $2\ell m=-m^2$ implies $\ell = -m/2$. So we get $\sum_{m\neq 0} a_m \langle e_{-m/2},\psi_\varepsilon\rangle\langle \phi_\varepsilon,e_{m/2}\rangle$. By hypothesis, the general term vanishes as $\varepsilon\to 0$. Moreover, it is  bounded by $|a_m|\cdot \|\psi_\varepsilon\|\|\phi_\varepsilon\|\|e_{-m/2}\|\|e_{m/2}\| = |a_m|$, which is summable. By dominated convergence, the result follows.

Now let $d>1$. Using $|\sum_{m\neq 0} F(m)|\le \sum_{i=1}^d \sum_{m_i\neq 0} |F(m)|$, it suffices to show that $\lim_{\varepsilon\downarrow 0} \sum_{i=1}^d\sum_{m_i\neq 0} |a_m| |\sum_{\ell:2\ell\cdot m=-m^2}  \langle e_\ell, \psi_\varepsilon\rangle\langle \phi_\varepsilon,e_{\ell+m}\rangle| = 0$.

By hypothesis, $\phi_\varepsilon = \mathop\otimes_{i=1}^d \phi_{\varepsilon_i}$. It follows that $\langle e_k, \phi_\varepsilon \rangle = \prod_{j=1}^d \langle e_{k_j},\phi_{\varepsilon_j}\rangle$. Suppose that $m_i\neq 0$. Then $\ell_i = \frac{-m^2-2\hat{\ell}_i\cdot \hat{m}_i}{2 m_i}$, where for $x\in \R^d$, we denoted $x = (\hat{x}_i,x_i)$ with $\hat{x}_i\in \R^{d-1}$. We thus consider
\[
\sum_{i=1}^d \sum_{m_i\neq 0} |a_m| \sum_{\hat{\ell}_i\in \Z^{d-1}} |\langle e_{\ell_i},\psi_{\varepsilon_i}\rangle\langle \phi_{\varepsilon_i},e_{\ell_i+m_i}\rangle|\prod_{j\le d,j\neq i} |\langle e_{\ell_j},\psi_{\varepsilon_j}\rangle\langle \phi_{\varepsilon_j},e_{\ell_j+m_j}\rangle| \,.
\]
We first show the general term $F_\varepsilon(m)\to 0$ as $\varepsilon\to 0$. For this, we bound
\begin{multline*}
\sum_{\hat{\ell}_i\in \Z^{d-1}} |\langle e_{\ell_i},\psi_{\varepsilon_i}\rangle\langle \phi_{\varepsilon_i},e_{\ell_i+m_i}\rangle|\prod_{j\le d,j\neq i} |\langle e_{\ell_j},\psi_{\varepsilon_j}\rangle\langle \phi_{\varepsilon_j},e_{\ell_j+m_j}\rangle| \\
\le  \sup_{r\in \Z} |\langle e_r, \psi_{\varepsilon_i}\rangle|\sup_{k\in\Z}|\langle \phi_{\varepsilon_i},e_k\rangle| \sum_{\hat{\ell}_i\in \Z^{d-1}}\prod_{j\le d,j\neq i} |\langle e_{\ell_j},\psi_{\varepsilon_j}\rangle\langle \phi_{\varepsilon_j},e_{\ell_j+m_j}\rangle| \\
= \sup_{r\in \Z} |\langle e_r, \psi_{\varepsilon_i}\rangle|\sup_{k\in\Z}|\langle \phi_{\varepsilon_i},e_k\rangle| \prod_{j\le d,j\neq i} \sum_{\ell_j\in \Z}  |\langle e_{\ell_j},\psi_{\varepsilon_j}\rangle\langle \phi_{\varepsilon_j},e_{\ell_j+m_j}\rangle| \\
\le\sup_{r\in \Z} |\langle e_r, \psi_{\varepsilon_i}\rangle|\sup_{k\in\Z}|\langle \phi_{\varepsilon_i},e_k\rangle| \prod_{j\le d,j\neq i}\Big( \sum_{\ell_j\in \Z}  |\langle e_{\ell_j},\psi_{\varepsilon_j}\rangle|^2\Big)^{1/2}\Big(\sum_{\ell_j\in \Z}|\langle \phi_{\varepsilon_j},e_{\ell_j+m_j}\rangle|^2\Big)^{1/2} \\
= \sup_{r\in \Z} |\langle e_r, \psi_{\varepsilon_i}\rangle|\sup_{k\in\Z}|\langle \phi_{\varepsilon_i},e_k\rangle| \prod_{j\le d,j\neq i}\|\psi_{\varepsilon_j}\|\|\phi_{\varepsilon_j}\|\to 0
\end{multline*}
by hypothesis. On the other hand, the general term can be bounded by
\[
|a_m|\sum_{\ell:2\ell\cdot m=-m^2} |\langle e_{\ell},\psi_{\varepsilon}\rangle\langle e_{\ell+m},\phi_{\varepsilon}\rangle|\le |a_m| \Big(\sum_{\ell\in \Z^d} |\langle e_{\ell},\psi_{\varepsilon}\rangle|^2\Big)^{1/2}\Big(\sum_{\ell\in\Z^d}|\langle e_{\ell+m},\phi_{\varepsilon}\rangle|^2\Big)^{1/2}\le |a_m|
\]
which is summable. By dominated convergence, the result follows.

\textbf{Third term.} We need to show the following term vanishes as $\varepsilon\downarrow 0$:
\begin{equation}\label{e:contextraterm}
\sum_{m\neq 0} a_m \sum_{\ell:\lambda_{\ell+m}\neq \lambda_\ell} \frac{\ee^{\ii T(\lambda_{\ell+m}-\lambda_\ell)}-1}{\ii T(\lambda_{\ell+m}-\lambda_\ell)} \langle e_\ell,\psi_\varepsilon\rangle\langle \phi_\varepsilon, e_{\ell+m}\rangle\,.
\end{equation}

Let $m\neq 0$, say $m_i\neq 0$. We have
\[
\sum_{\ell\,:\, 2\ell\cdot m\neq -m^2} \frac{|\langle e_\ell,\psi_\varepsilon\rangle\langle \phi_\varepsilon, e_{\ell+m}\rangle|}{|2\ell\cdot m+m^2|} = \sum_{r\neq \frac{-m^2}{2}} \sum_{\ell\,:\, \ell\cdot m=r} \frac{|\langle e_\ell,\psi_\varepsilon\rangle\langle \phi_\varepsilon, e_{\ell+m}\rangle|}{|2r+m^2|} \,.
\]

Now $\ell \cdot m = r \iff \ell_i = \frac{r-\hat{\ell}_i\cdot \hat{m}_i}{m_i}$. Recalling $\phi_\varepsilon=\otimes\phi_{\varepsilon_i}$, this can be written as
\begin{multline*}
 \sum_{\hat{\ell}_i\in\Z^{d-1}} |\langle e_{\hat{\ell}_i},\psi_{\hat{\varepsilon}_i}\rangle\langle \phi_{\hat{\varepsilon}_i},e_{\hat{\ell}_i+\hat{m}_i}\rangle|\sum_{r\neq \frac{-m^2}{2}} \frac{|\langle e_{\frac{r-\hat{\ell}_i\cdot \hat{m}_i}{m_i}},\psi_{\varepsilon_i}\rangle\langle \phi_{\varepsilon_i},e_{\frac{m_i^2+r-\hat{\ell}_i\cdot \hat{m}_i}{m_i}}\rangle|}{|2r+m^2|}\\
 \le \sup_{k\in \Z} |\langle e_k,\psi_{\varepsilon_i}\rangle|\sum_{\hat{\ell}_i}  |\langle e_{\hat{\ell}_i},\psi_{\hat{\varepsilon}_i}\rangle\langle \phi_{\hat{\varepsilon}_i},e_{\hat{\ell}_i+\hat{m}_i}\rangle|\Big(\sum_{r\neq \frac{-m^2}{2}}\frac{1}{(2r+m^2)^2}\Big)^{1/2}\Big(\sum_{k\in \Z} |\langle \phi_{\varepsilon_i}, e_k\rangle|^2\Big)^{1/2}
\end{multline*}
where we used that $\frac{r-\hat{\ell}_i\cdot \hat{m}_i}{m_i}$ and $\frac{m_i^2+r-\hat{\ell}_i\cdot \hat{m}_i}{m_i}$ are both integers and the Cauchy-Schwarz inequality. 

If $m^2\in 2\Z$, then $ \frac{1}{4}\sum_{r\neq \frac{-m^2}{2}}\frac{1}{(r+\frac{m^2}{2})^2} = \frac{1}{4}\sum_{k\neq 0} \frac{1}{k^2} = \frac{\pi^2}{12}$.

If $m^2\notin 2\Z$, then $k_0:=\lfloor \frac{-m^2}{2}\rfloor = \frac{-m^2}{2}-\frac{1}{2}$. So $\frac{1}{4}\sum_{r}\frac{1}{(r+\frac{m^2}{2})^2} = \frac{1}{4}(\frac{1}{(1/2)^2} + \frac{1}{(1/2)^2} + \sum_{r\notin\{k_0,k_0+1\}} \frac{1}{(r-k_0-\frac{1}{2})^2})\le \frac{1}{4}(8+\sum_{k\neq 0}\frac{1}{k^2}) = 2+\frac{\pi^2}{12}$.

Either way, the sum is bounded by $3$. We thus showed that
\begin{multline*}
\sum_{\ell\,:\, 2\ell\cdot m\neq -m^2} \frac{|\langle e_\ell,\psi_\varepsilon\rangle\langle \phi_\varepsilon, e_{\ell+m}\rangle|}{|2\ell\cdot m+m^2|}\le \sqrt{3}\|\phi_{\varepsilon_i}\|\sup_{k\in \Z} |\langle e_k,\psi_{\varepsilon_i}\rangle| \sum_{\hat{\ell}_i\in\Z^{d-1}}  |\langle e_{\hat{\ell}_i},\psi_{\hat{\varepsilon}_i}\rangle\langle \phi_{\hat{\varepsilon}_i},e_{\hat{\ell}_i+\hat{m}_i}\rangle|\\
\le \sqrt{3}\|\phi_{\varepsilon_i}\| \|\psi_{\hat{\varepsilon}_i}\| \|\phi_{\hat{\varepsilon}_i}\| \sup_{k\in \Z} |\langle e_k,\psi_{\varepsilon_i}\rangle|
\end{multline*}
which tends to zero uniformly in $m$ by our hypotheses.

Finally turning back to \eqref{e:contextraterm}, we have proved that as a sum over $m$, the general term vanishes as $\varepsilon\downarrow 0$. Moreover, the general term is bounded by $|a_m| \frac{1}{T} \sum_{\ell} |\langle e_\ell,\psi_{\varepsilon}\rangle\langle \phi_\varepsilon,e_{\ell+m}\rangle| \le \frac{|a_m|}{T}$ by Cauchy-Schwarz and $\|\psi_\varepsilon\|=\|\phi_\varepsilon\|=1$. Since $\sum_m |a_m|<\infty$, we conclude by dominated convergence that \eqref{e:contextraterm} vanishes as $\varepsilon\downarrow 0$.
\end{proof}

\begin{rem}\label{rem:cts2lim}
It is clear that the proofs of Theorems~\ref{thm:ctstor} and \ref{thm:cont1} continue to hold if we take the limit $T\to\infty$ before considering $E\to\infty$. The proofs become in fact simpler as such a limit over $T$ kills the third term in \eqref{e:conclucont} and \eqref{e:contextraterm}, thereby avoiding the finer analysis we performed. See also \cite{KK} for this regime. If we do take the limit over $T\to\infty$ first, then we can also replace $\int_0^T$ by $\sum_{t=0}^{T-1}$. For example, \eqref{e:contextraterm} becomes $\sum_{m\neq 0} a_m \sum_{\ell:\lambda_{\ell+m}\neq \lambda_\ell} \frac{\ee^{\ii T(\lambda_{\ell+m}-\lambda_\ell)}-1}{\ii T(\ee^{\ii(\lambda_{\ell+m}-\lambda_\ell)}-1)} \langle e_\ell,\psi_\varepsilon\rangle\langle \phi_\varepsilon, e_{\ell+m}\rangle$, which vanishes as $T\to\infty$ (here $0\neq \lambda_{\ell+m}-\lambda_\ell = 4\pi^2(2\ell\cdot m+m^2)\notin 2\pi\Z$). We cannot however consider $\sum_{t=0}^{T-1}$ in Theorems~\ref{thm:ctstor} and \ref{thm:cont1}, as the finer analysis of the third term that we performed used the fact that $\frac{1}{|\lambda_{\ell+m}-\lambda_\ell|}\to 0$ as $\ell\to\infty$, which is not true of $\frac{1}{|\ee^{\ii(\lambda_{\ell+m}-\lambda_\ell)}-1|}$.
\end{rem}

\section{Case of the sphere}

Consider the sphere $\mathbb{S}^{d-1}\subset \R^d$, $d\ge 3$. We have $L^2(\mathbb{S}^{d-1}) = \mathop\oplus_{k=0}^\infty \mathbb{Y}_k^d$, where $\mathbb{Y}_k^d$ is the spherical harmonic space of order $k$ in dimension $d$. Any nonzero function in $\mathbb{Y}_k^d$ is an eigenfunction of the Laplacian on the sphere $-\Delta_{\mathbb{S}^{d-1}}$ with eigenvalue $k(k+d-2)$ and multiplicity $N_{k,d} = \dim \mathbb{Y}_k^d = \frac{(2k+d-2)(k+d-3)!}{k!(d-2)!}$. See \cite[Th. 2.38, Prp. 3.5]{AH}. If we define the \emph{zonal harmonic of degree $k$}, $Z_\xi^{(k)}(\eta) := \frac{N_{k,d}}{|\mathbb{S}^{d-1}|} P_{k,d}(\xi\cdot \eta)$, where $|\mathbb{S}^{d-1}| = \frac{2\pi^{d/2}}{\Gamma(d/2)}$ is the volume of the sphere and $P_{k,d}(t)$ satisfies the Poisson identity $\sum_{k=0}^\infty r^k N_{k,d}P_{k,d}(t) = \frac{1-r^2}{(1+r^2-2rt)^{d/2}}$ for $|r|<1$ and $t\in [-1,1]$, then $Z_\xi^{(k)}$ satisfies the reproducing property \cite[(2.33)]{AH},
\begin{equation}\label{e:repro}
\psi(\xi) = \langle Z_\xi^{(k)}, \psi\rangle_{L^2(\mathbb{S}^{d-1})} \qquad \forall \psi\in \mathbb{Y}_k^d\,, \quad \xi\in \mathbb{S}^{d-1}\,.
\end{equation}

We have $Z_\xi^{(k)} = \sum_{j=1}^{N_{k,d}} Y_{k,j}(\xi) \overline{Y_{k,j}}$ for any orthonormal basis of $\mathbb{Y}_k^d$, see \cite[Th. 2.9]{AH}. In particular $Z_\xi^{(k)}\in \mathbb{Y}_k^d$ and as such $Z_\xi^{(k)}$ is an eigenfunction of $-\Delta_{\mathbb{S}^{d-1}}$ for the eigenvalue $k(k+d-2)$. Moreover, $\|Z_\xi^{(k)}\|_{L^2(\mathbb{S}^{d-1})}^2 = \frac{N_{k,d}}{|\mathbb{S}^{d-1}|}$ by \cite[(2.40)]{AH} and $Z_\xi^{(k)}(\xi) = \frac{N_{k,d}}{|\mathbb{S}^{d-1}|}$ by \cite[(2.35)]{AH}, which is the maximum of $Z_\xi^{(k)}$. Finally, if $C_{n,\nu}$ is the Gegenbauer ultraspherical polynomial, then $C_{n,\frac{d-2}{2}}(t) = \binom{n+d-3}{n}P_{n,d}(t)$ for $d\ge 3$, \cite[(2.145)]{AH}.

In view of the Dirac-like equation \eqref{e:repro}, one can consider the evolution of
\begin{equation}\label{e:1stsug}
\langle \ee^{\ii t\Delta} \widetilde{Z}_\xi^{(k)}, a \ee^{\ii t\Delta} \widetilde{Z}_\xi^{(k)}\rangle\,,
\end{equation}
where $\widetilde{Z}_\xi^{(k)} = \sqrt{\frac{|\mathbb{S}^{d-1}|}{N_{k,d}}} Z_\xi^{(k)}$ has norm one. 

However, since $Z_\xi^{(k)}$ is an eigenfunction of $-\Delta_{\mathbb{S}^{d-1}}$, this reduces to $\langle \widetilde{Z}_\xi^{(k)},a \widetilde{Z}_\xi^{(k)}\rangle$. 

\begin{lem}\label{lem:predens}
The density $|\widetilde{Z}_\xi^{(k)}(\eta)|^2$ is not uniformly distributed as $k\to\infty$. It has peaks at $\pm \xi$ and stays bounded for $\eta \neq \pm\xi$.
\end{lem}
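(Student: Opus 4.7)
The plan is to reduce everything to the explicit formula
\[
\widetilde{Z}_\xi^{(k)}(\eta)=\sqrt{\tfrac{|\Ss^{d-1}|}{N_{k,d}}}\,Z_\xi^{(k)}(\eta)=\sqrt{\tfrac{N_{k,d}}{|\Ss^{d-1}|}}\,P_{k,d}(\xi\cdot\eta),\qquad |\widetilde{Z}_\xi^{(k)}(\eta)|^2=\tfrac{N_{k,d}}{|\Ss^{d-1}|}\,P_{k,d}(\xi\cdot\eta)^2,
\]
and then to analyse the one-variable polynomial $P_{k,d}$ at the poles $\eta=\pm\xi$ and away from them separately.

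At the poles the computation is immediate. The normalisation $Z_\xi^{(k)}(\xi)=N_{k,d}/|\Ss^{d-1}|$ recalled in the text forces $P_{k,d}(1)=1$, while the parity relation $C_{k,\nu}(-t)=(-1)^k C_{k,\nu}(t)$ for Gegenbauer polynomials, combined with $C_{k,(d-2)/2}=\binom{k+d-3}{k}P_{k,d}$, gives $P_{k,d}(-1)=(-1)^k$. Hence
\[
|\widetilde{Z}_\xi^{(k)}(\pm\xi)|^2=\frac{N_{k,d}}{|\Ss^{d-1}|}\sim \frac{2\,k^{d-2}}{(d-2)!\,|\Ss^{d-1}|}\longrightarrow\infty\qquad(d\ge 3),
\]
which both establishes the peaks and, on its own, precludes convergence of the density to the constant $1/|\Ss^{d-1}|$; this already yields the non-equidistribution claim.

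For $\eta\neq\pm\xi$ I would write $\xi\cdot\eta=\cos\theta$ with $\theta\in(0,\pi)$ and invoke the classical Darboux--Szeg\H{o} asymptotic for Gegenbauer polynomials, which gives the uniform bound $|C_{k,\nu}(\cos\theta)|\le c(\theta)\,k^{\nu-1}$ on compact subsets of $(0,\pi)$ with $\nu=(d-2)/2$. Dividing by $\binom{k+d-3}{k}\sim k^{d-3}/(d-3)!$ yields $|P_{k,d}(\cos\theta)|=O(k^{-(d-2)/2})$, and multiplying by $N_{k,d}=O(k^{d-2})$ produces $|\widetilde{Z}_\xi^{(k)}(\eta)|^2=O(1)$ pointwise in $\eta\neq\pm\xi$, as claimed.

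The only genuine obstacle is the clean invocation of the Gegenbauer asymptotic for non-integer parameter $\nu=(d-2)/2$ with control uniform on compacts of allowed $\theta$; modulo that, everything follows from the identities already recorded in the paper and the fact that $N_{k,d}$ grows like $k^{d-2}$.
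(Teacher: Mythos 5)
Your proof is correct and follows essentially the same route as the paper's: evaluate at $\pm\xi$ to get divergence of $|\widetilde{Z}_\xi^{(k)}(\pm\xi)|^2 = N_{k,d}/|\mathbb{S}^{d-1}|$, and combine the $O(k^{-(d-2)/2})$ decay of $P_{k,d}(t)$ for $|t|<1$ with $N_{k,d}\sim k^{d-2}$ to get boundedness away from the poles. The paper replaces your appeal to Darboux--Szeg\H{o} asymptotics for Gegenbauer polynomials by the concrete non-asymptotic bound $|P_{n,d}(t)|<\frac{\Gamma(\frac{d-1}{2})}{\sqrt{\pi}}\bigl[\frac{4}{n(1-t^2)}\bigr]^{(d-2)/2}$ from Atkinson--Han (2.117), which gives the same exponent and sidesteps the one point you flagged as delicate.
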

\begin{proof}
In fact, $|\widetilde{Z}_\xi^{(k)}(\pm \xi)|^2 = \frac{|\mathbb{S}^{d-1}|}{N_{k,d}}\cdot \frac{N_{k,d}^2}{|\mathbb{S}^{d-1}|^2} \to \infty$ as $k\to\infty$.  On the other hand, if $\eta\neq \pm \xi$, then $t:= \xi \cdot \eta$ satisfies $|t|<1$. By \cite[(2.117)]{AH}, $|P_{n,d}(t)| < \frac{\Gamma(\frac{d-1}{2})}{\sqrt{\pi}}[\frac{4}{n(1-t^2)}]^{\frac{d-2}{2}}$, hence $|\widetilde{Z}_\xi^{(k)}(\eta)|^2<\frac{N_{k,d}}{|\mathbb{S}^{d-1}|}\cdot\frac{c_{d,\eta}}{k^{d-2}}$. Since $\Gamma(x+\alpha)\sim  \Gamma(x)x^\alpha$, then $N_{k,d} =\frac{(2k+d-2)}{(d-2)!}\frac{\Gamma(k+d-2)}{\Gamma(k+1)} \sim \frac{2}{(d-2)!}k^{d-2}$, hence $|\widetilde{Z}_\xi^{(k)}(\eta)|^2$ stays bounded as $k\to\infty$. The upper bound we used is sharp in $n$, i.e. $P_{n,d}(t)\asymp \frac{1}{n^{\frac{d-2}{2}}}$, by \cite[Th. 8.21.8]{Sz}.
\end{proof}

A closer analogue to $\delta_y^E = \frac{1}{\sqrt{N_E}}\sum_{\lambda_j\le E} \overline{e_j(y)}e_j$, our Dirac truncation for the torus, would be $S_\xi^{(n)} = \frac{1}{\sqrt{M_{n,d}}}\sum_{k=0}^n \mu_{n,k,d} Z_\xi^{(k)}$, where $M_{n,d}=\sum_{k=0}^n \mu_{n,k,d}^2 \frac{N_{k,d}}{|\mathbb{S}^{d-1}|} $ and $\mu_{n,k,d} = \frac{n!(n+d-2)!}{(n-k)!(n+k+d-2)!}$. In fact, if $R_\xi^{(n)} = \sum_{k=0}^n \mu_{n,k,d} Z_{\xi}^{(k)}$, then (see \cite[\S 2.8.1]{AH}):
\begin{itemize}
\item $\langle R_\xi^{(n)},f\rangle_{L^2(\mathbb{S}^{d-1})} \to f(\xi)$ uniformly in $\xi$, for any continuous $f$,
\item $R_\xi^{(n)}(\xi) = \sum_{k=0}^n \mu_{n,k,d}\frac{N_{k,d}}{|\mathbb{S}^{d-1}|} = E_{n,d} =\frac{(n+d-2)!}{(4\pi)^{\frac{d-1}{2}}\Gamma(n+\frac{d-1}{2})} \to \infty$,
\item $\|R_\xi^{(n)}\|_{L^2(\mathbb{S}^{d-1})}^2 = \|\sum_{k=0}^n \mu_{n,k,d}Z_\xi^{(k)}\|_2^2 = \sum_{k=0}^n \mu_{n,k,d}^2 \|Z_\xi^{(k)}\|^2  = M_{n,d}$.

Here we used that $Z_\xi^{(k)}\perp Z_\xi^{(r)}$ for $k\neq r$, as they belong to $\mathbb{Y}_k^d$ and $\mathbb{Y}_r^d$, respectively, which are in direct sum (they are in distinct eigenspaces).
\end{itemize}
This says that $R_\xi^{(n)}$ is a $\delta_\xi$-sequence and that $S_\xi^{(n)}$ is its normalization.

\begin{lem}\label{lem:dendi}
The density $p_\xi^{(n)}(\eta) = \lim_{T\to\infty} \frac{1}{T}\int_0^T |\ee^{\ii t\Delta}S_\xi^{(n)}(\eta)|^2\,\dd t$ is not uniformly distributed as $n\to\infty$. It has peaks at $\pm \xi$ and stays bounded for $\eta \neq \pm\xi$.
\end{lem}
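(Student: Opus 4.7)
The plan is to diagonalize the evolution using the eigenfunction structure of $Z_\xi^{(k)}$, reduce the time-averaged density to an explicit weighted sum, and then estimate this sum separately at $\eta = \pm\xi$ and at $\eta \neq \pm\xi$.

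First, since $Z_\xi^{(k)} \in \mathbb{Y}_k^d$ is an eigenfunction of $-\Delta_{\mathbb{S}^{d-1}}$ with eigenvalue $\lambda_k = k(k+d-2)$, I would write $\ee^{\ii t\Delta}S_\xi^{(n)} = \frac{1}{\sqrt{M_{n,d}}}\sum_{k=0}^n \mu_{n,k,d} \ee^{-\ii t\lambda_k} Z_\xi^{(k)}$. Expanding the squared modulus as a double sum over $k,r$ and integrating $t$ over $[0,T]$, the off-diagonal oscillatory terms vanish in the $T\to\infty$ limit because the map $k\mapsto\lambda_k$ is strictly increasing on $\N$, so $\lambda_k=\lambda_r$ forces $k=r$. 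The diagonal sum that remains is
\[
p_\xi^{(n)}(\eta) = \frac{1}{M_{n,d}} \sum_{k=0}^n \mu_{n,k,d}^2\,|Z_\xi^{(k)}(\eta)|^2.
\]

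Second, at $\eta = \pm\xi$ I would use the parity identity $P_{k,d}(\pm 1) = (\pm 1)^k$ to obtain $|Z_\xi^{(k)}(\pm\xi)|^2 = (N_{k,d}/|\mathbb{S}^{d-1}|)^2$, which together with $M_{n,d}= \frac{1}{|\mathbb{S}^{d-1}|}\sum_k \mu_{n,k,d}^2 N_{k,d}$ gives the clean formula
\[
p_\xi^{(n)}(\pm\xi) = \frac{\sum_{k=0}^n \mu_{n,k,d}^2 N_{k,d}^2}{|\mathbb{S}^{d-1}|\,\sum_{k=0}^n \mu_{n,k,d}^2 N_{k,d}},
\]
i.e.\ $|\mathbb{S}^{d-1}|^{-1}$ times the $\mu_{n,k,d}^2 N_{k,d}$-weighted average of $N_{k,d}$. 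For $\eta\neq\pm\xi$ I would reuse the bound $|Z_\xi^{(k)}(\eta)|^2 < \frac{N_{k,d}^2}{|\mathbb{S}^{d-1}|^2}\cdot\frac{c_{d,\eta}}{k^{d-2}}$ that appeared in the proof of Lemma~\ref{lem:predens}; combined with the asymptotic $N_{k,d}/k^{d-2}\to 2/(d-2)!$ this becomes $|Z_\xi^{(k)}(\eta)|^2 \le c'_{d,\eta}\,N_{k,d}/|\mathbb{S}^{d-1}|$, and substituting into the diagonal sum produces a telescoping cancellation against $M_{n,d}$,
\[
p_\xi^{(n)}(\eta) \le \frac{c'_{d,\eta}}{|\mathbb{S}^{d-1}|\,M_{n,d}} \sum_{k=0}^n \mu_{n,k,d}^2 N_{k,d} = c'_{d,\eta},
\]
uniformly in $n$. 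This settles the boundedness half.

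The main obstacle will be quantifying the divergence at $\pm\xi$. Using the product representation $\mu_{n,k,d} = \prod_{j=0}^{k-1}\frac{n-j}{n+d-1+j}$, the estimate $\log(1\pm x)\approx \pm x$ yields $\mu_{n,k,d}^2 \approx \ee^{-2k^2/n}$ on the window $k \lesssim \sqrt{n}$, and $\mu_{n,k,d}$ decays super-polynomially beyond it. Combined with $N_{k,d}\asymp k^{d-2}$, a Riemann-sum comparison using the substitution $k = s\sqrt{n}$ suggests $\sum_k \mu_{n,k,d}^2 N_{k,d} \asymp n^{(d-1)/2}$ and $\sum_k \mu_{n,k,d}^2 N_{k,d}^2 \asymp n^{(2d-3)/2}$, giving $p_\xi^{(n)}(\pm\xi) \asymp n^{(d-2)/2} \to \infty$ for $d\ge 3$. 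Strictly I only need a lower bound on the numerator — obtained by restricting the sum to $k\in[\sqrt{n},2\sqrt{n}]$, where $\mu_{n,k,d}^2$ is bounded below by a constant — and a matching upper bound on $M_{n,d}$ from the Gaussian envelope on the whole range. The delicate point is controlling these approximations uniformly across the two regimes $k \lesssim \sqrt{n}$ and $k \gg \sqrt{n}$ so that the tail contribution does not overwhelm the peak.
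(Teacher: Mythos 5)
Your reduction to the diagonal sum
\[
p_\xi^{(n)}(\eta) = \frac{1}{M_{n,d}}\sum_{k=0}^n \mu_{n,k,d}^2\,|Z_\xi^{(k)}(\eta)|^2
\]
(time-averaging kills cross-terms since $k\mapsto k(k+d-2)$ is strictly increasing on $\N_0$), the evaluation at $\eta=\pm\xi$, and the boundedness for $\eta\neq\pm\xi$ via the $|Z_\xi^{(k)}(\eta)|^2 \lesssim N_{k,d}^2/k^{d-2}$ bound from Lemma~\ref{lem:predens} all coincide with the paper's proof. Where you diverge is in quantifying the blow-up of $p_\xi^{(n)}(\pm\xi)$: the paper proceeds by elementary means --- Cauchy--Schwarz applied to $\sum_k \mu_{n,k,d} N_{k,d} = E_{n,d}$ plus $\mu_{n,k,d}\le 1$ yields $p_\xi^{(n)}(\pm\xi)\gtrsim E_{n,d}/(n+1)\asymp n^{(d-3)/2}$, which suffices for $d>3$ but degenerates at $d=3$; to rescue that case the paper uses Dixon's identity to compute $\sum_{k=0}^n\mu_{n,k,3}^2 = {}_3F_2(1,-n,-n;n+2,n+2;1)$ in closed form, ultimately getting a rate $n^{1/4}$. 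Your route instead uses the product formula $\mu_{n,k,d}=\prod_{j=0}^{k-1}\frac{n-j}{n+d-1+j}$ to derive the Gaussian envelope $\mu_{n,k,d}^2\approx \ee^{-2k(k+d-2)/n}$, then a Laplace-method comparison in the window $k\asymp\sqrt n$ to obtain $\sum_k\mu_{n,k,d}^2 N_{k,d}\asymp n^{(d-1)/2}$ and $\sum_k\mu_{n,k,d}^2 N_{k,d}^2\asymp n^{(2d-3)/2}$. That is a genuinely different idea: it treats all $d\ge 3$ at once, avoids the hypergeometric machinery entirely, and gives the sharp rate $p_\xi^{(n)}(\pm\xi)\asymp n^{(d-2)/2}$, strictly better than the paper's $n^{(d-3)/2}$ (for $d>3$) and $n^{1/4}$ (for $d=3$). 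The trade-off is that you must rigorously establish the Gaussian envelope on both sides: a lower bound on $\mu_{n,k,d}^2$ for $k\in[\sqrt n,2\sqrt n]$ (straightforward from $\log(1-x)\ge -x-x^2$ and $\log(1+x)\le x$ in this range) and an upper bound $\mu_{n,k,d}^2\le\ee^{-k(k-1)/n}$ valid for all $k\le n$ (from $\log(1-j/n)\le -j/n$, $\log(1+(d-1+j)/n)\ge 0$) to dominate the tails of $M_{n,d}$. You flag this correctly; with those two inequalities in hand the Riemann-sum comparison closes and the argument is complete, so this is a valid, sharper, and arguably more transparent alternative to the paper's case-split.
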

\begin{proof}
As $\ee^{\ii t\Delta} S_\xi^{(n)} = \frac{1}{\sqrt{M_{n,d}}} \sum_{k=0}^n \ee^{-\ii t \lambda_k}\mu_{n,k,d} Z_\xi^{(k)}$, $\lambda_k = k(k+d-2)$, then $|\ee^{\ii t\Delta} S_\xi^{(n)}(\eta)|^2 = \frac{1}{M_{n,d}}\sum_{k=0}^n \mu_{n,k,d}^2|Z_\xi^{(k)}(\eta)|^2 + \frac{1}{M_{n,d}} \sum_{k\neq k'} \ee^{\ii t(\lambda_{k'}-\lambda_k)}\mu_{n,k,d}\mu_{n,k',d} Z_\xi^{(k)}(\eta)\overline{Z_\xi^{(k')}(\eta)}$. But $\lambda_k\neq\lambda_{k'}$ for $k\neq k'$, because the eigenspaces $\mathbb{Y}_k^d$ are in direct sum (alternatively, $\lambda_k=\lambda_{j}$ for $k\neq j$ would imply $(k^2-j^2)+(d-2)(k-j)=0$, so $k+j+d-2=0$, a contradiction since $d\ge 3$ and $k,j\ge 0$). It follows that $\frac{1}{T}\int_0^T |\ee^{\ii t\Delta} S_\xi^{(n)}(\eta)|^2\,\dd t \to \frac{1}{M_{n,d}} \sum_{k=0}^n \mu_{n,k,d}^2|Z_\xi^{(k)}(\eta)|^2=:p_\xi^{(n)}(\eta)$. 

Now $p_\xi^{(n)}(\pm\xi) = \frac{1}{M_{n,d}}\sum_{k=0}^n \mu_{n,k,d}^2\frac{N_{k,d}^2}{|\mathbb{S}^{d-1}|^2}$ diverges as $n\to\infty$. In fact, $(\sum_{k=0}^n a_k)^2 \le (n+1)\sum_{k=0}^n a_k^2$. Applying this to $a_k = \mu_{n,k,d}\frac{N_{k,d}}{|\mathbb{S}^{d-1}|}$, we get $E_{n,d}^2\le (n+1) \sum_{k=0}^n a_k^2$, so $\frac{1}{M_{n,d}}\sum_{k=0}^n a_k^2\ge \frac{E_{n,d}^2}{(n+1)M_{n,d}}  \to \infty$ for $d>3$, because $\frac{E_{n,d}^2}{(n+1)M_{n,d}} \ge \frac{E_{n,d}}{n+1} \asymp \frac{n^{\frac{d-1}{2}}}{n} \to \infty$. Here we used that $\mu_{n,k,d}\le 1$. For $d=3$, this lower bound is not useful. So we argue as follows.

We have $\sum_{k=0}^n \mu_{n,k,d}\frac{N_{k,d}}{|\mathbb{S}^{d-1}|} = E_{n,d}$. By Cauchy-Schwarz, we have on the one hand $(\sum_{k=0}^n \mu_{n,k,d}\frac{N_{k,d}}{|\mathbb{S}^{d-1}|})^2 \le (n+1) \sum_{k=0}^n \mu_{n,k,d}^2 \frac{N_{k,d}^2}{|\mathbb{S}^{d-1}|^2}$, so that $\sum_{k=0}^n  \mu_{n,k,d}^2 \frac{N_{k,d}^2}{|\mathbb{S}^{d-1}|^2} \ge \frac{E_{n,d}^2}{n+1}$. On the other hand, $M_{n,d}^2=(\sum_{k=0}^n \mu_{n,k,d}^2 \frac{N_{k,d}}{|\mathbb{S}^{d-1}|})^2\le (\sum_{k=0}^n \mu_{n,k,d}^2)(\sum_{k=0}^n \mu_{n,k,d}^2\frac{N_{k,d}^2}{|\mathbb{S}^{d-1}|^2})$ and so $\frac{\sum_{k=0}^n \mu_{n,k,d}^2 \frac{N_{k,d}^2}{|\mathbb{S}^{d-1}|^2}}{M_{n,d}} =(\sum_{k=0}^n \mu_{n,k,d}^2 \frac{N_{k,d}^2}{|\mathbb{S}^{d-1}|^2})^{1/2}\cdot \frac{(\sum_{k=0}^n \mu_{n,k,d}^2 \frac{N_{k,d}^2}{|\mathbb{S}^{d-1}|^2})^{1/2}}{M_{n,d}} \ge \frac{(\sum_{k=0}^n \mu_{n,k,d}^2 \frac{N_{k,d}^2}{|\mathbb{S}^{d-1}|^2})^{1/2}}{(\sum_{k=0}^n \mu_{n,k,d}^2)^{1/2}}$. The two inequalities imply that $\frac{\sum_{k=0}^n \mu_{n,k,d}^2 \frac{N_{k,d}^2}{|\mathbb{S}^{d-1}|^2}}{M_{n,d}} \ge \frac{E_{n,d}}{\sqrt{n+1}(\sum_{k=0}^n \mu_{n,k,d}^2)^{1/2}}$.

Specializing to $d=3$, we have $E_{n,3} = \frac{n+1}{4\pi}$. Also, $\sum_{k=0}^n \mu_{n,k,3}^2 = \sum_{k=0}^n (\frac{n!(n+1)!}{(n-k)!(n+k+1)!})^2 = \sum_{k=0}^\infty \frac{(1)_k(-n)_k(-n)_k}{(n+2)_k(n+2)_kk!} = {}_3 F_2(1,-n,-n;n+2,n+2;1)$. Here $(r)_k = r(r+1)\cdots(r+k-1)$. If $r\in\N^\ast$, then $(r)_k = \frac{(r+k-1)!}{(r-1)!}$ for any $k$, while $(-r)_k = -r(-r+1)\cdots (-r+k-1) = (-1)^k r(r-1)\cdots (r-k+1) = (-1)^k \frac{r!}{(r-k)!}$ holds for $k\le r$ and $(-r)_k=0$ for $k>r$.

By Dixon's identity, ${}_3 F_2(1,-n,-n;n+2,n+2;1) = \frac{\Gamma(\frac{3}{2})\Gamma(\frac{3}{2}+2n)\Gamma(2+n)^2}{\Gamma(2)\Gamma(2+2n)\Gamma(\frac{3}{2}+n)^2}$. Using the asymptotic $\Gamma(x+\alpha)\sim \Gamma(x)x^\alpha$, this shows that $\sum_{k=0}^n \mu_{n,k,3}^2 \sim \frac{\sqrt{\pi}}{2}\cdot \frac{(n^{1/2})^2}{(2n)^{1/2}} = \frac{1}{2}\sqrt{\frac{\pi n}{2}}$.

We thus get that $\frac{\sum_{k=0}^n \mu_{n,k,3}^2 \frac{N_{k,3}^2}{|\mathbb{S}^{2}|^2}}{M_{n,3}} \gtrsim\frac{n+1}{4\pi\sqrt{n+1}(\frac{1}{2}\sqrt{\frac{\pi n}{2}})^{1/2}} \asymp n^{1/4}\to\infty$.

We showed that $p_\xi^{(n)}(\pm \xi)\to \infty$. If $\eta\neq \pm \xi$, then as in Lemma~\ref{lem:predens}, we have $p_\xi^{(n)}(\eta)< \frac{1}{M_{n,d}} \sum_{k=0}^n \mu_{n,k,d}^2 \frac{N_{k,d}^2}{|\mathbb{S}^{d-1}|^2} \frac{c_{d,\eta}}{k^{d-2}}$ with $N_{k,d}  \sim \frac{2}{(d-2)!}k^{d-2}$, so that $M_{n,d} = \sum_{k=0}^n \mu_{n,k,d}^2 \frac{N_{k,d}}{|\mathbb{S}^{d-1}|}$ and $\sum_{k=0}^n \mu_{n,k,d}^2 \frac{N_{k,d}^2}{|\mathbb{S}^{d-1}|^2} \frac{c_{d,\eta}}{k^{d-2}}$ grow at the same speed with $n$, and $p_\xi^{(n)}(\eta)$ stays bounded.
\end{proof}

To see more explicitly that ergodicity is violated, we can compare averages of specific observables. For simplicity, let us choose $a$ such that for the $\xi$ we fixed, $a(\eta) = a(\xi\cdot \eta)$. This allows to use the Funk-Hecke formula \cite[Th. 2.22]{AH}: for any $Y_n\in\mathbb{Y}_n^d$,
\[
\int_{\mathbb{S}^{d-1}} f(\xi\cdot \eta)Y_n(\eta)\,\dd S^{d-1}(\eta)=Y_n(\xi)|\mathbb{S}^{d-2}|\int_{-1}^1 P_{n,d}(t)f(t)(1-t^2)^{\frac{d-3}{2}}\,\dd t
\]
For $Y_n = Z_\xi^{(n)}$ and $f(\xi \cdot \eta) = a(\xi \cdot \eta) P_{n,d}(\xi\cdot \eta)\frac{N_{n,d}}{|\mathbb{S}^{d-1}|}=a(\xi\cdot \eta)Z_{\xi}^{(n)}(\eta)$ we get
\begin{align*}
\int_{\mathbb{S}^{d-1}} a(\xi\cdot \eta)[Z_\xi^{(n)}(\eta)]^2\,\dd S^{d-1}(\eta)&=\frac{Z_\xi^{(n)}(\xi)|\mathbb{S}^{d-2}| N_{n,d}}{|\mathbb{S}^{d-1}|}\int_{-1}^1 [P_{n,d}(t)]^2a(t)(1-t^2)^{\frac{d-3}{2}}\,\dd t\\
&=\frac{|\mathbb{S}^{d-2}| N_{n,d}^2}{|\mathbb{S}^{d-1}|^2}\int_{-1}^1 [P_{n,d}(t)]^2a(t)(1-t^2)^{\frac{d-3}{2}}\,\dd t \,.
\end{align*}
On the other hand (case $n=0$ of Funk-Hecke, cf. \cite[(2.87)]{AH}),
\begin{equation}\label{e:n=0fh}
\int_{S^{d-1}} a(\xi\cdot\eta)\,\dd S^{d-1}(\eta) = |\mathbb{S}^{d-2}|\int_{-1}^1 a(t)(1-t^2)^{\frac{d-3}{2}}\,\dd t \,.
\end{equation}

Let us check the asymptotics of \eqref{e:1stsug}. We take $d=3$. Then $|\mathbb{S}^{d-1}|=4\pi$, $|\mathbb{S}^{d-2}|=2\pi$, $N_{k,d} = 2k+1$ and $P_{n,d}(t) = P_n(t)$ is the Legendre polynomial, which satisfies \cite[(2.79)]{AH}
\[
\int_{-1}^1 P_n(t)P_m(t)\,\dd t = \frac{2}{2n+1}\delta_{n,m} \,.
\]
Since $(n+1)P_{n+1}(t)+nP_{n-1}(t)=(2n+1)tP_n(t)$, see \cite[p. 52]{AH}, we deduce that
\[
\int_{-1}^1 t^2 P_n(t)^2\,\dd t = \frac{2}{(2n+1)^2}\Big(\frac{n^2}{2n-1}+\frac{(n+1)^2}{2n+3}\Big)\,.
\]

\begin{lem}
Let $d=3$. There exists an observable $a$ such that for any $T>0$,
\[
\lim_{k\to\infty} \frac{1}{T}\int_0^T\langle \ee^{\ii t\Delta}\widetilde{Z}_\xi^{(k)},a\ee^{\ii t\Delta}\widetilde{Z}_\xi^{(k)}\rangle\,\dd t =\lim_{k\to\infty} \langle \widetilde{Z}_\xi^{(k)},a\widetilde{Z}_\xi^{(k)}\rangle \neq \langle a\rangle\,.
\]
\end{lem}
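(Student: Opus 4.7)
The structural observation driving the proof is that $\widetilde{Z}_\xi^{(k)}$ lies in the single eigenspace $\mathbb{Y}_k^d$ of $-\Delta_{\mathbb{S}^{d-1}}$, so $\ee^{\ii t\Delta}\widetilde{Z}_\xi^{(k)} = \ee^{-\ii tk(k+d-2)}\widetilde{Z}_\xi^{(k)}$ is just a unit-modulus phase times $\widetilde{Z}_\xi^{(k)}$. Consequently the matrix element $\langle \ee^{\ii t\Delta}\widetilde{Z}_\xi^{(k)}, a\,\ee^{\ii t\Delta}\widetilde{Z}_\xi^{(k)}\rangle$ is independent of $t$ and already coincides with $\langle \widetilde{Z}_\xi^{(k)}, a\widetilde{Z}_\xi^{(k)}\rangle$; the outer time average in the statement is therefore trivial. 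The task thus reduces to exhibiting a single observable $a$ for which $\lim_{k\to\infty} \langle \widetilde{Z}_\xi^{(k)}, a\widetilde{Z}_\xi^{(k)}\rangle \neq \langle a\rangle$.

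I would take the axially symmetric choice $a(\eta) = (\xi\cdot \eta)^2$, i.e.\ $f(t)=t^2$ in the Funk--Hecke formalism. Specializing the Funk--Hecke computation displayed just above the statement to $d=3$ (so $N_{k,3}=2k+1$, $|\mathbb{S}^2|=4\pi$, $|\mathbb{S}^1|=2\pi$, and $P_{k,3}=P_k$ is the Legendre polynomial), and plugging in the value of $\int_{-1}^1 t^2 P_k(t)^2\,\dd t$ already computed via the recurrence $(k+1)P_{k+1}+kP_{k-1}=(2k+1)tP_k$ and orthogonality, gives
\[
\langle \widetilde{Z}_\xi^{(k)}, a\widetilde{Z}_\xi^{(k)}\rangle = \frac{2k+1}{2}\int_{-1}^1 t^2 P_k(t)^2\,\dd t = \frac{1}{2k+1}\Bigl(\frac{k^2}{2k-1}+\frac{(k+1)^2}{2k+3}\Bigr).
\]
Expanding $\tfrac{k^2}{2k-1}=\tfrac{k}{2}+\tfrac{1}{4}+\tfrac{1}{4(2k-1)}$ and $\tfrac{(k+1)^2}{2k+3}=\tfrac{k+1}{2}-\tfrac{1}{4}+\tfrac{1}{4(2k+3)}$, the bracket equals $\tfrac{2k+1}{2}+O(k^{-1})$, and hence $\langle \widetilde{Z}_\xi^{(k)}, a\widetilde{Z}_\xi^{(k)}\rangle \to \tfrac{1}{2}$ as $k\to\infty$. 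Meanwhile \eqref{e:n=0fh} with $d=3$ yields $\langle a\rangle = \frac{1}{|\mathbb{S}^2|}\cdot 2\pi \int_{-1}^1 t^2\,\dd t = \tfrac{1}{3}$, and since $\tfrac{1}{2}\neq \tfrac{1}{3}$ the observable $a(\eta)=(\xi\cdot\eta)^2$ is the desired witness.

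The only step requiring even mild care is the cancellation of the two $\tfrac{1}{4}$ corrections in the asymptotic expansion, which ensures that the leading contribution is exactly $\tfrac{2k+1}{2}$ and the limit is genuinely $\tfrac{1}{2}$ rather than some value accidentally coinciding with $\langle a\rangle = \tfrac{1}{3}$. Apart from this bookkeeping the argument is a direct application of the Funk--Hecke and Legendre-recurrence formulas already assembled in the excerpt, and I foresee no further analytic obstacle.
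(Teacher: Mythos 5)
Your proof is correct and follows the same route as the paper's: you observe that $\widetilde{Z}_\xi^{(k)}$ is a single eigenfunction so the time average is trivial, choose $a(\eta)=(\xi\cdot\eta)^2$, apply Funk--Hecke and the Legendre recurrence to get $\langle\widetilde{Z}_\xi^{(k)},a\widetilde{Z}_\xi^{(k)}\rangle=\frac{1}{2k+1}(\frac{k^2}{2k-1}+\frac{(k+1)^2}{2k+3})\to\frac12$, and compare with $\langle a\rangle=\frac13$. The only difference is cosmetic: you spell out the partial-fraction expansion showing the $\pm\frac14$ cancellation, whereas the paper simply asserts the limit is $\frac12$.
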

\begin{proof}
Given $\xi$, we have for $a(\eta) := a(\xi \cdot \eta)$, with $a(t)=t^2$, that
\begin{align*}
\langle \widetilde{Z}_\xi^{(k)},a \widetilde{Z}_\xi^{(k)}\rangle &= \frac{|\mathbb{S}^{d-1}|}{N_{k,d}}\int_{\mathbb{S}^{d-1}} a(\xi\cdot \eta)[Z_\xi^{(k)}(\eta)]^2\,\dd S^{d-1}(\eta) \\
&= \frac{|\mathbb{S}^{d-2}|N_{k,d}}{|\mathbb{S}^{d-1}|}\int_{-1}^1 [P_{k,d}(t)]^2a(t)(1-t^2)^{\frac{d-3}{2}}\,\dd t\\
&=\frac{(2\pi)(2k+1)}{4\pi}\int_{-1}^1 t^2 P_k(t)^2\,\dd t= \frac{1}{2k+1}\Big(\frac{k^2}{2k-1}+\frac{(k+1)^2}{2k+3}\Big)\,.
\end{align*}

This tends to $\frac{1}{2}$ as $k\to\infty$. On the other hand, by \eqref{e:n=0fh},
\begin{equation}\label{e:avasph}
\langle a\rangle = \frac{1}{|\mathbb{S}^{d-1}|}\int_{\mathbb{S}^{d-1}} a(\xi\cdot\eta)\,\dd S^{d-1}(\eta) = \frac{1}{2}\int_{-1}^1 t^2\,\dd t = \frac{1}{3}\,.
\end{equation}

This completes the proof.
\end{proof}

A similar phenomenon holds for $S_\xi^{(n)}$. 

\begin{lem}
Let $d=3$. There exists an observable $a$ such that
\[
\liminf\limits_{n\to\infty} \lim\limits_{T\to\infty} \frac{1}{T}\int_0^T \langle \ee^{\ii t\Delta} S_\xi^{(n)},a\ee^{\ii t\Delta}S_\xi^{(n)}\rangle\,\dd t  \neq \langle a\rangle \,.
\]
\end{lem}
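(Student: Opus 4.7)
The plan is to use the very same observable $a(\eta)=(\xi\cdot\eta)^2$ as in the preceding lemma, and reduce the ergodic average to a Cesàro-type weighted average of quantities that were essentially already computed there.

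First I would expand $\ee^{\ii t\Delta} S_\xi^{(n)} = M_{n,3}^{-1/2}\sum_{k=0}^n \mu_{n,k,3}\ee^{-\ii t\lambda_k} Z_\xi^{(k)}$ with $\lambda_k=k(k+1)$. As observed in the proof of Lemma~\ref{lem:dendi}, the eigenvalues $\lambda_k$ are pairwise distinct for $k\ge 0$, so time-averaging kills every off-diagonal term and leaves
\[
\lim_{T\to\infty}\frac{1}{T}\int_0^T\langle \ee^{\ii t\Delta}S_\xi^{(n)},a\ee^{\ii t\Delta}S_\xi^{(n)}\rangle\,\dd t
 = \frac{1}{M_{n,3}}\sum_{k=0}^n \mu_{n,k,3}^2\,\langle Z_\xi^{(k)},aZ_\xi^{(k)}\rangle.
\]
Using $Z_\xi^{(k)}=\sqrt{(2k+1)/(4\pi)}\,\widetilde{Z}_\xi^{(k)}$ and the previous lemma's Funk--Hecke computation $\langle \widetilde Z_\xi^{(k)},a\widetilde Z_\xi^{(k)}\rangle = v_k$, where
\[
v_k := \frac{1}{2k+1}\Big(\frac{k^2}{2k-1}+\frac{(k+1)^2}{2k+3}\Big)\xrightarrow[k\to\infty]{} \tfrac{1}{2},
\]
the previous expression rewrites as a convex combination $\sum_{k=0}^n w_{n,k}v_k$ with weights $w_{n,k}=\mu_{n,k,3}^2(2k+1)/(4\pi M_{n,3})$ summing to one.

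The second step is a Cesàro argument showing that this convex combination converges to $1/2$. From the definition $\mu_{n,k,3}=n!(n+1)!/[(n-k)!(n+k+1)!]$ one checks at once that $0\le \mu_{n,k,3}\le 1$ and, for each fixed $k$, $\mu_{n,k,3}\to 1$ as $n\to\infty$. Hence, by Fatou (or monotone convergence in $k$ after fixing a cutoff), $4\pi M_{n,3}=\sum_{k=0}^n\mu_{n,k,3}^2(2k+1)\to\infty$, so for every fixed $K$,
\[
\sum_{k\le K}w_{n,k}\;\le\;\frac{\sum_{k\le K}(2k+1)}{4\pi M_{n,3}}\;=\;\frac{(K+1)^2}{4\pi M_{n,3}}\;\xrightarrow[n\to\infty]{}0.
\]
Choosing $K$ so that $|v_k-1/2|<\varepsilon$ for $k\ge K$ and using $|v_k|\le$ const., we get $|\sum_k w_{n,k}v_k-1/2|\le C\sum_{k\le K}w_{n,k}+\varepsilon$; letting $n\to\infty$ and then $\varepsilon\to 0$ yields $\sum_k w_{n,k}v_k\to 1/2$.

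Finally, by \eqref{e:n=0fh}, $\langle a\rangle=\frac{1}{2}\int_{-1}^1 t^2\,\dd t=\frac{1}{3}$, so the $\liminf$ in the statement equals $\frac{1}{2}\neq\frac{1}{3}=\langle a\rangle$, proving the lemma. The only mildly delicate step is the verification that $M_{n,3}\to\infty$: it is exactly this divergence of the normalizing constant that prevents the small-$k$ contributions (where $v_k$ differs appreciably from $1/2$, e.g. $v_0=1/3$) from influencing the limit, and it ensures that the weight is asymptotically supported on high frequencies, which is what drives the large-$n$ value to the "wrong" constant $1/2$.
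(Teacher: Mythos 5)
Your proposal is correct. It follows the paper's proof through the first reduction: same observable $a(\eta)=(\xi\cdot\eta)^2$, same elimination of cross terms using the simplicity of the spectrum $\lambda_k=k(k+1)$, and the same identification of the time-averaged quantity as the weighted sum $\sum_{k=0}^n w_{n,k}v_k$ with $w_{n,k}=\mu_{n,k,3}^2(2k+1)/(4\pi M_{n,3})$ and $v_k=\langle\widetilde Z_\xi^{(k)},a\widetilde Z_\xi^{(k)}\rangle$. Where you diverge from the paper is the final step. The paper uses the pointwise algebraic inequality $\tfrac{2k^2+2k-1}{4k^2+4k-3}\ge\tfrac12$ for $k\ge1$ to bound each $v_k$ from below, then extracts the lower bound $\liminf\ge\tfrac12-\tfrac{1}{24\pi M_{n,3}}$, citing the estimate $M_{n,3}\gtrsim\sqrt n$ which was derived via Dixon's identity in Lemma~\ref{lem:dendi}. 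You instead run a Toeplitz-style argument: the weights $(w_{n,k})$ are nonnegative and sum to one, they escape any fixed $K$ because $\sum_{k\le K}w_{n,k}\le(K+1)^2/(4\pi M_{n,3})\to 0$, and $v_k\to\tfrac12$; this gives the \emph{exact} limit $\tfrac12$ rather than just a $\liminf$ bound, and it only needs the soft fact $M_{n,3}\to\infty$ (immediate from $\mu_{n,k,3}\to1$ for fixed $k$ via Fatou) rather than the quantitative $\sqrt n$ rate. Both approaches yield $\tfrac12\ne\tfrac13=\langle a\rangle$, so both prove the lemma; your version is marginally cleaner and slightly stronger, at the cost of none of the algebraic inequality. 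One small caution: when you write ``by Fatou (or monotone convergence in $k$ after fixing a cutoff)'' it is worth noting that $\mu_{n,k,3}$ is \emph{not} monotone in $n$ for fixed $k$ in an obvious way, so the Fatou phrasing (or simply: for every $K$, $\sum_{k\le K}\mu_{n,k,3}^2(2k+1)\to(K+1)^2$, hence $4\pi M_{n,3}\ge\sum_{k\le K}\mu_{n,k,3}^2(2k+1)\to(K+1)^2$ for all $K$) is the safe one to keep.
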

\begin{proof}
Our arguments in Lemma~\ref{lem:dendi} show that
\[
\lim_{T\to\infty} \frac{1}{T}\int_0^T \langle \ee^{\ii t\Delta} S_\xi^{(n)},a\ee^{\ii t\Delta}S_\xi^{(n)}\rangle\,\dd t = \frac{1}{M_{n,d}} \sum_{k=0}^n \mu_{n,k,d}^2\int_{\mathbb{S}^{d-1}} a(\eta) |Z_\xi^{(k)}(\eta)|^2\,\dd S^{d-1}(\eta)\,.
\]
Choosing again $d=3$ and $a(\eta) = a(\xi\cdot \eta)$ for $a(t) = t^2$, this becomes
\[
\frac{1}{M_{n,3}} \sum_{k=0}^n\mu_{n,k,3}^2 \frac{(2\pi)(2k+1)^2}{(4\pi)^2}\int_{-1}^1 t^2 P_k(t)^2\,\dd t = \frac{1}{M_{n,3}} \sum_{k=0}^n \mu_{n,k,3}^2 \frac{1}{4\pi}\Big(\frac{k^2}{2k-1}+\frac{(k+1)^2}{2k+3}\Big).
\]
But the expression in parentheses is
\[
\frac{2k^3+3k^2+(2k-1)(k^2+2k+1)}{(2k-1)(2k+3)} = \frac{(2k+1)(2k^2+2k-1)}{4k^2+4k-3}
\]
and $\frac{2k^2+2k-1}{4k^2+4k-3} \ge \frac{1}{2}$ for $k\ge 1$. Thus, as $\mu_{n,0,3}=1$, the limit is
\begin{multline*}
\ge \frac{1}{12\pi M_{n,3}} + \frac{1}{2M_{n,3}} \sum_{k=1}^n \mu_{n,k,3}^2\frac{2k+1}{4\pi} = \frac{1}{12\pi M_{n,3}} + \frac{1}{2M_{n,3}}\sum_{k=0}^n \mu_{n,k,3}^2\frac{N_{k,3}}{|\mathbb{S}^2|} - \frac{1}{8\pi M_{n,3}}\\
=\frac{-1}{24\pi M_{n,3}} + \frac{1}{2} \to \frac{1}{2}\,,
\end{multline*}
since $M_{n,3} = \sum_{k=0}^n \mu_{n,k,3}^2\frac{N_{k,3}}{|\mathbb{S}^2|}\ge \frac{1}{4\pi}\sum_{k=0}^n \mu_{n,k,3}^2 \asymp \sqrt{n}\to\infty$ as we saw in Lemma~\ref{lem:dendi}. Summarizing, we have shown that
\[
\liminf_{n\to\infty} \lim_{T\to\infty} \frac{1}{T}\int_0^T \langle \ee^{\ii t\Delta} S_\xi^{(n)},a\ee^{\ii t\Delta}S_\xi^{(n)}\rangle\,\dd t \ge \frac{1}{2} \,.
\]
In view of \eqref{e:avasph}, this completes the proof.
\end{proof}

%\begin{rem}
%It is worth mentioning that in contrast to the fact that $\delta_y^E(x)\to 0$ for $x\neq y$, the statement in Lemma~\ref{lem:predens} cannot be improved to say that the density vanishes for $\eta \neq \pm\xi$ as $n\to\infty$, i.e. the upper bound we used is of the correct order of magnitude. In fact, using \cite[Th. 8.21.8]{Sz}, one may deduce that $P_{n,d}(\cos\theta) = \frac{\Gamma(n+1)\Gamma(\frac{d-1}{2})}{\Gamma(n+\frac{d-1}{2})}(\frac{k(\theta)\cos(N\theta+\gamma)}{\sqrt{n}}+O(n^{-3/2}))$, with $k(\theta) = \frac{1}{\sqrt{\pi}}(\frac{2}{\sin\theta})^{\frac{d-2}{2}}$, so that $|\widetilde{Z}_\xi^{(k)}(\eta)|^2 \to \frac{2}{(d-2)!|\mathbb{S}^{d-1}|}\Gamma(\frac{d-1}{2})f(\eta)$ for some analytic function $f$
%\end{rem}

\medskip

The results of this section suggest that unitary evolution of the Laplacian on the sphere does not make point masses equidistributed as time goes on, hence a lack of ergodicity. This is in accord with the classical picture. 

Still, our analysis is based on the evolution of some specific normalized $\delta$-sequence $S_\xi^{(n)}$. It would be interesting to see if equidistribution continues to be violated for other choices.

\appendix
\section{Discussion}\label{app:a}
\subsection{Eigenfunction thermalization}\label{sec:et}
One could heuristically deduce the present dynamical criterion of ergodicity using quantum ergodicity of eigenvectors as follows. Suppose that $H_N$ is a self-adjoint operator on a finite graph $G_N$ of order $N$ having a quantum ergodic basis $(\psi_j^{(N)})$. Fix a normalized initial state $\phi$, assume for simplicity that it is well-defined as $N$ varies (e.g. $G_N\subset G_{N+1}$ or $G_{N+1}$ is a cover of $G_N$). Expanding $\phi = \sum_{k=1}^N \langle \psi_k^{(N)},\phi\rangle \psi_k^{(N)}$, so that $\ee^{-\ii t H_N} \phi = \sum_{k=1}^N \langle \psi_k^{(N)},\phi\rangle \ee^{-\ii t\lambda_k^{(N)}}\psi_k^{(N)}$, we obtain
\begin{multline*}
\langle \ee^{-\ii t H_N}\phi, a_N \ee^{-\ii tH_N}\phi\rangle = \Big\langle \sum_{m=1}^N \langle \psi_m^{(N)},\phi\rangle\ee^{-\ii t\lambda_m^{(N)}}\psi_m^{(N)}, a_N\sum_{n=1}^N \langle \psi_n^{(N)},\phi\rangle \ee^{-\ii t\lambda_n^{(N)}}\psi_n^{(N)}\Big\rangle\\
=\sum_{n=1}^N |\langle \psi_n^{(N)},\phi\rangle|^2 \langle \psi_n^{(N)}, a_N\psi_n^{(N)}\rangle + \sum_{\substack{m,n\le N\\ m\neq n}} \ee^{\ii t(\lambda_m^{(N)}-\lambda_n^{(N)})} \overline{\langle \psi_m^{(N)},\phi\rangle}\langle \psi_n^{(N)},\phi\rangle\langle \psi_m^{(N)},a_N\psi_n^{(N)}\rangle
\end{multline*}
From here one could argue that $\langle \psi_m^{(N)}, a\psi_n^{(N)}\rangle \approx \delta_{m,n}\langle a\rangle$ if the eigenfunctions satisfy a strong form of quantum ergodicity. If not, one could take a time average $\frac{1}{T}\int_0^T$ and assume the spectrum is simple, so that the double sum of oscillatory terms vanishes as $T\to\infty$. Since $\sum_{n=1}^N |\langle \psi_n^{(N)},\phi\rangle|^2 \langle \psi_n^{(N)}, a_N\psi_n^{(N)}\rangle \approx \sum_{n=1}^N |\langle \psi_n^{(N)},\phi\rangle|^2\langle a_N\rangle = \langle a_N\rangle \|\phi\|^2=\langle a_N\rangle$ by the assumed eigenfunction ergodicity, we get that $\langle \ee^{-\ii t H_N}\phi, a_N \ee^{-\ii tH_N}\phi\rangle \approx \langle a_N\rangle$ or $\lim_{T\to\infty}\frac{1}{T}\int_0^T \langle \ee^{-\ii t H_N}\phi, a_N \ee^{-\ii tH_N}\phi\rangle\,\dd t \approx \langle a_N\rangle$, respectively. The same heuristic can be used in the continuum, for example on the torus.

This heuristics is folklore in the physics community and is commonly known as \emph{eigenfunction thermalization}. It can be made rigorous for some models of random matrices, see for example the discussion in \cite{CES}. Note that it only requires one special basis of eigenfunctions $(\psi_j^{(N)})$ to be ergodic in a strong sense, e.g. quantum uniquely ergodic. For specific (deterministic) graphs it seems to us that making this argument rigorous can be more difficult than proving from scratch. In particular, for the case of $\Z^d$, the eigenvalues have a high multiplicity, which complicates this scheme even though we have a very nice ergodic basis $\psi_j^{(N)}(n) = \frac{1}{N^{d/2}} \ee^{2\pi\ii j\cdot n/N}$ at disposal. Also note that on the torus, this conclusion is simply wrong if we choose $\phi(x)=\sqrt{2}\cos(2\pi x)$ since $\ee^{\ii t\Delta}\phi = \ee^{-4\pi^2 \ii t}\phi$, so at least in this model where the spectrum is highly degenerate, there must be some assumption on $\phi$.

\subsection{Other interpretations} It may be interesting to explore other quantum dynamical interpretations of ergodicity. 

Since in the classical picture on the torus, we calculate the mean value $\frac{1}{T}\int_0^T a(x_0+ty_0)\,\dd t$ of an observable $a$ over an orbit of $x_0$, one could naively replace $a(y)$ by $\langle \delta_y, a\rangle$ and thus consider the limit of $\frac{1}{T}\int_0^T \langle \ee^{\ii t\Delta}\delta_y,a\rangle\,\dd t$ and see if it approaches $\int a(x)\,\dd x$. Let us check what this gives.

Let $H$ be a self-adjoint operator on a Hilbert space $\mathscr{H}$ and let $\phi,\psi\in\mathscr{H}$. Then by the functional calculus we have
\begin{equation}\label{e:fc}
\lim_{T\to\infty} \frac{1}{T}\int_0^T \langle \phi,\ee^{\ii tH}\psi\rangle\,\dd t= \langle \phi,\chi_{\{0\}}(H)\psi\rangle\,.
\end{equation}

%Consider the case of the torus, where $(\chi_{\{0\}}(-\Delta)\psi)(y) = \langle e_0,\psi\rangle e_0(x) = \int_{\T_\ast^d} \psi(y)\,\dd y = \langle\psi\rangle$. If $\phi_y^E$ is some approximation of $\delta_y$, then \eqref{e:fc} implies that $\lim_{T\to\infty} \frac{1}{T}\int_0^T \langle \ee^{\ii t\Delta}\delta_y^E,a\rangle\,\dd t = \langle \phi_y^E\rangle\langle a\rangle$. In our study it was natural to consider a normalized approximation $\delta_y^E$ since we wanted to compare two probability measures, $|(\ee^{\ii t\Delta}\delta_y^E)(x)|^2\,\dd x$ and $\dd x$. Here if we take $\phi_y^E = \delta_y^E$, we get $\langle \delta_y^E\rangle = \frac{1}{\sqrt{N_E}}\to 0$. If we consider $\phi_y^E=\mathbf{1}_{\le E}(-\Delta)\delta_y$, we get $\langle\phi_y^E\rangle = 1$.

This interpretation would rely entirely on the spectral projection at $0$. It does not seem entirely convincing. For example, both the torus and sphere Laplacian satisfy that $\chi_{\{0\}}(-\Delta)$ is the orthogonal projection onto the flat function, that is, $(\chi_{\{0\}}(-\Delta_{\T_\ast^d})a)(y) = \int_{\T_\ast^d} a(x)\,\dd x$ and $(\chi_{\{0\}}(-\Delta_{\mathbb{S}^{d-1}})a)(y) = \frac{1}{|\mathbb{S}^{d-1}|}\int_{\mathbb{S}^{d-1}} a(x)\,\dd S^{d-1}(x)$, so that the RHS of \eqref{e:fc} in both cases is $\langle \phi\rangle\langle \psi\rangle$, although the dynamics are very different in each case.
%
%This suggests that in cases of ergodic behavior, $\chi_{\{0\}}(H)\psi$ should be some kind of uniform average. What seems strange here is the role of zero. For example, if we shift $H$ by a multiple of the identity, then the ``ergodicity'' of $H$ should be the same as that of $H-cI$, but $\chi_{\{0\}}(H-cI) = \chi_{\{c\}}(H)$. So it could happen that $H$ be non-ergodic (e.g. $0$ is not an atom of $H$ and the RHS is zero) while $H-cI$ is ergodic, which is strange.

\subsection{Further comments} 
In general, ergodic theorems apply to integrable functions $f$ and assert that $\lim_{n\to\infty}\frac{1}{n}\sum_{k=1}^n f(T^k x) = g(x)$, where $g$ is such that $\int g = \int f$. If $T$ is ergodic, then $g$ must be constant, and this constant is $\frac{1}{\mu(\Omega)} \int_{\Omega} f(y)\,\dd\mu(y)$.

On the quantum side, the long-time convergence of $\frac{1}{T}\int_{0}^T |(\ee^{\ii tH}\psi)(x)|^2$ is quite trivial if $\mathscr{H}$ is finite dimensional (e.g. working on a finite graph). In fact, if $P_k$ are the spectral projections for the distinct eigenvalues of $H$, then
\[
|(\ee^{\ii tH}\psi)(x)|^2 = \Big|\sum_{k=1}^m \ee^{\ii tE_k} (P_k\psi)(x)\Big|^2 = \sum_{k=1}^m |(P_k\psi)(x)|^2 + \sum_{k\neq \ell} \ee^{\ii t(E_k-E_\ell)}(P_k\psi)(x)\overline{(P_\ell\psi)(x)}
\]

Since $\frac{1}{T}\int_0^T \ee^{\ii t(E_k-E_\ell)}\,\dd t = \frac{1}{\ii T}\cdot \frac{\ee^{\ii T(E_k-E_\ell)}-1}{E_k-E_\ell} \to 0$, we get that
\begin{equation}\label{e:clasergo}
\lim_{T\to\infty} \frac{1}{T}\int_0^{T} |(\ee^{\ii tH}\psi)(x)|^2\,\dd t = \sum_{k=1}^m |(P_k\psi)(x)|^2\,.
\end{equation}
The function on the RHS clearly satisfies that its integral (or its sum over the graph) is equal to $\psi$. So the hard part here is to prove ergodicity rather than convergence, i.e. proving that for certain models the RHS is constant. This is what we did for the case of cubes, asymptotically in $N$. We do not use \eqref{e:clasergo} for this purpose, as the high multiplicity of eigenvalues complicates the calculation of $P_k\psi$. We emphasize that for the torus, we do not need to consider long time $T$, so \eqref{e:clasergo} is useless.

In the context of quantum walks, the papers \cite{Kar,CLR} investigate the mixing time, namely, how fast does $\frac{1}{T}\int_0^{T} |(\ee^{\ii tH}\psi)(x)|^2\,\dd t$ becomes $\epsilon$-close to its limit $\sum_{k=1}^m |(P_k\psi)(x)|^2$.

\subsection*{Acknowledgments}
We thank Maxime Ingremeau and Fabricio Maci\`a for explanations on the literature, Daniel Lenz for a question which we now answer in Proposition~\ref{prp:noavzd}, St\'ephane Nonnenmacher for interesting feedback that lead to some generalizations, and Ivan Veseli\'c for a question that lead to Proposition~\ref{prp:ves}.

\providecommand{\bysame}{\leavevmode\hbox to3em{\hrulefill}\thinspace}
\providecommand{\MR}{\relax\ifhmode\unskip\space\fi MR }
% \MRhref is called by the amsart/book/proc definition of \MR.
\providecommand{\MRhref}[2]{%
  %\href{http://www.ams.org/mathscinet-getitem?mr=#1}{#2}
}
\providecommand{\href}[2]{#2}


\begin{thebibliography}{10} 

\bibitem{ABNVW}
A.~Ambainis, E.~Bach, A.~Nayak, A.~Vishwanath, J.~Watrous, \emph{One-dimensional quantum walks}, in Proceedings of the {T}hirty-{T}hird {A}nnual {ACM} {S}ymposium on {T}heory of {C}omputing (2001), 37--49.

\bibitem{ALM}
N.~Anantharaman, E.~Le~Masson, \emph{Quantum ergodicity on large regular graphs}, Duke Math. J. \textbf{164}
(2015) 723–765.

\bibitem{A}
N.~Anantharaman, \emph{Quantum ergodicity on regular graphs}, Comm. Math. Phys. \textbf{353} (2017), 633--690.

\bibitem{AM14}
N.~Anantharaman, F.~Maci\`a, \emph{Semiclassical measures for the Schr\"odinger equation on the torus}, J. Eur. Math. Soc. \textbf{16}, 1253--1288.

\bibitem{AR12}
N.~Anantharaman, G.~Rivi\`ere, \emph{Dispersion and controllability for the Schr\"odinger equation on negatively curved manifolds}, Anal. PDE. \textbf{5} (2012), 313--338.

\bibitem{AS}
N.~Anantharaman, M.~Sabri, \emph{Quantum ergodicity on graphs : from spectral to spatial delocalization},
Ann. Math. \textbf{189} (2019) 753--835.

\bibitem{AH}
K.~Atkinson, W.~Han, \emph{Spherical Harmonics and Approximations on the Unit Sphere: An Introduction}, LNM vol. 2044, Springer 2012.

\bibitem{BS}
A.~Boutet de Monvel, M.~Sabri, \emph{Ballistic transport in periodic and random media}. In ``From Complex Analysis to Operator Theory: A Panorama, In Memory of Sergey Naboko'', Oper. Theory Adv. Appl. \textbf{291} (2023), 163--216. 

\bibitem{CLR}
S.~Chakraborty, K.~Luh, J.~Roland, \emph{How Fast Do Quantum Walks Mix?}, Phys. Rev. Lett. \textbf{124} (2020), 050501.

\bibitem{CES}
G.~Cipolloni, L.~Erd\"os, D.~Schr\"oder, \emph{Rank-uniform local law for Wigner matrices}, Forum Math. Sigma \textbf{10} (2022), Paper No. e96, 43 pp.

\bibitem{CdV}
Y.~Colin~de~Verdi\`ere, \emph{Ergodicit\'e et fonctions propres du laplacien}, Comm. Math. Phys. \textbf{102} (1985) 497--502.

\bibitem{DD}
S.~Dhamapurkar and O.~Dahlsten, \emph{Quantum walks as thermalizations, with application to fullerene graphs}, arXiv:2304.01572.

\bibitem{KleiBal}
A.~Klein, \emph{Spreading of Wave Packets in the Anderson Model on the Bethe Lattice}, Commun. Math. Phys. \textbf{177} (1996), 755--773.

\bibitem{Han}
X.~Han \emph{Small scale quantum ergodicity on negatively curved manifolds}, Nonlinearity \textbf{28} (2015), 3263--3288.

\bibitem{HR}
H~Hezari and G~Rivi\`ere, \emph{$L^p$ norms, nodal sets, and quantum ergodicity}, Adv. Math. \textbf{290} (2016), 938--966.

\bibitem{Hia}
G.~A.~Hiary, \emph{A nearly-optimal method to compute the truncated theta function, its derivatives, and
integrals}, Ann. of Math. \textbf{174} (2011), 859--889.

\bibitem{Kar}
V.~Kargin, \emph{Bounds for mixing time of quantum walks on finite graphs}, J. Phys. A: Math. Theor. \textbf{43} (2010), 335302 (12pp).

\bibitem{Klenke}
A.~Klenke, \emph{Probability Theory. A Comprehensive Course}, Third Edition, 2020, Springer.

\bibitem{KK}
S.~Klimek, W.~Kondracki, \emph{Ergodic Properties of the Quantum Geodesic Flow on Tori}, Math. Phys. Anal. Geom. \textbf{8} (2005), 173--186.

\bibitem{Kuz}
A.~Kuznetsov, \emph{Computing the truncated theta function via Mordell integral}, Mathematics of Computation \textbf{84} (2015), 2911--2926. 

\bibitem{Lan}
L.~J.~Landau, \emph{Bessel Functions: Monotonicity and Bounds}, J. London Math. Soc. \textbf{61} (2000) 197--215.

%\bibitem{Ma08}
%F.~Maci\`a, \emph{Some Remarks on Quantum Limits on Zoll Manifolds}, Comm. Partial Differential Equations \textbf{33} (2008), 1137--1146.

\bibitem{Ma09}
F.~Maci\`a, \emph{Semiclassical measures and the Schr\"odinger flow on Riemannian manifolds}, Nonlinearity \textbf{22} (2009), 1003--1020.

\bibitem{Maci}
F.~Maci\`a, \emph{High-frequency propagation for the Schr\"odinger equation on the torus}, J. Funct. Anal. \textbf{258} (2010), 933--955.

\bibitem{MaRi16}
F.~Maci\`a, G.~Rivi\`ere, \emph{Concentration and non-concentration for the Schrödinger evolution on Zoll manifolds}, Comm. Math. Phys. \textbf{345} (2016), 1019--1054.

\bibitem{MTZ}
M.~Magee, J.~Thomas, Y.~Zhao, \emph{Quantum unique ergodicity for Cayley graphs of quasirandom groups}, Comm. Math. Phys. \textbf{402} (2023), 3021--3044.

\bibitem{MSA}
A.~Mandal, R. S.~Sarkar and B.~Adhikari, \emph{Localization of two dimensional quantum walks defined by generalized Grover coins}, J. Phys. A: Math. Theor. \textbf{56} (2023), 025303 (21pp).

\bibitem{MR}
J.~Marklof, Z.~Rudnik, \emph{Almost all eigenfunctions of a rational polygon are uniformly distributed}, J. Spectr. Theory \textbf{2} (2012), 107--113.

\bibitem{McKSa}
T.~McKenzie, M.~Sabri, \emph{Quantum ergodicity for periodic graphs}, Comm. Math. Phys. \textbf{403} (2023), 1477--1509.

\bibitem{NSSZ}
A.~Naor, A.~Sah, M.~Sawhney, Y.~Zhao, \emph{Cayley graphs that have a quantum ergodic eigenbasis}, Israel J. Math. \textbf{256} (2023), 599--617.

\bibitem{Por}
R.~Portugal, \emph{Quantum Walks and Search Algorithms}, Second Edition, Springer 2018. 

\bibitem{SaYo}
M.~Sabri, P.~Youssef, \emph{Flat bands of periodic graphs}, J. Math. Phys. \textbf{64} (2023), 092101.

\bibitem{Schu}
R.~Schubert, \emph{Semiclassical Behaviour of Expectation Values in Time
Evolved Lagrangian States for Large Times}, Commun. Math. Phys. \textbf{256} (2005), 239--254.

\bibitem{ST}
N.~Shiraishi and H.~Tasaki, \emph{Nature abhors a vacuum. A simple rigorous example of thermalization in an isolated macroscopic quantum system}, arXiv:2310.18880.

\bibitem{Shni}
A.~I.~\v{S}nirel'man, \emph{Ergodic properties of eigenfunctions}, Uspehi Mat. Nauk, \textbf{29} (1974) 181--182.

\bibitem{SK}
A.~Stefanov, P. G.~Kevrekidis, \emph{Asymptotic behaviour of small solutions for the discrete nonlinear Schr\"odinger and Klein-Gordon equations}, Nonlinearity \textbf{18} (2005), 1841--1857.

%\bibitem{SW71}
%E.~Stein, G.~Weiss, \emph{Introduction to Fourier Analysis on Euclidean Spaces}, PUP 1971.

\bibitem{Sz}
G.~Szeg\"o, \emph{Orthogonal Polynomials}, AMS Colloquium Publications vol. XXIII, 1939.

%\bibitem{Van}
%J. M. VanderKam, \emph{$L^\infty$ norms and quantum ergodicity on the sphere}, Internat. Math. Res. Notices, \textbf{7} (1997), 329--347.

\bibitem{Zel87}
S.~Zelditch, \emph{Uniform distribution of eigenfunctions on compact hyperbolic surfaces}, Duke Math. J. \textbf{55} (1987) 919--941.

\bibitem{Zel}
S. Zelditch, \emph{Quantum ergodicity on the sphere}, Comm. Math. Phys., \textbf{146} (1992), 61--71.

\end{thebibliography}
\end{document}